\tikzset{cross/.style={cross out, draw=black, minimum size=2*(#1-\pgflinewidth), inner sep=0pt, outer sep=0pt},
cross/.default={1pt}}
\tikzstyle{int}=[draw, fill=blue!20, minimum size=2em]
\tikzstyle{dot}=[circle, draw, fill=blue!20, minimum size=2em]
\tikzstyle{dotred}=[circle, draw, fill=red!20, minimum size=2em]
\tikzstyle{init} = [pin edge={to-,thin,black}]
\tikzstyle{initred} = [pin edge={to-,thin,red}]
\tikzstyle{plan}=[draw, fill=blue!20, minimum size=2em, text width=5em, rounded corners,align=center]
\tikzstyle{planwide}=[draw, fill=blue!20, minimum size=2em, text width=8em, rounded corners,align=center]
\tikzstyle{nodedot}=[circle, draw, fill=white, minimum size=0.3cm,inner sep=0pt]
\tikzstyle{nodedot}=[circle, draw, fill=white, minimum size=3,inner sep=0pt]
\tikzstyle{Medge}=[green!60!black, thick]
\tikzstyle{Bedge}=[red, thick]
\tikzstyle{Cedge}=[blue, thick]
\tikzstyle{Sedge}=[black, thick]
\tikzstyle{Mgiantedge}=[green!60!black, line width=3.0pt]
\tikzstyle{Bgiantedge}=[red,line width=3.0pt]
\tikzstyle{Cgiantedge}=[blue,line width=3.0pt]
\tikzstyle{Sgiantedge}=[black,line width=3.0pt]
\tikzstyle{shadedgiantnode}=[circle, draw, fill=black!10, minimum size=1cm, inner sep=0pt]
\tikzstyle{unshadedgiantnode}=[circle, draw, fill=white, minimum size=1cm, inner sep=0pt]
\tikzset{my loop/.style =  {to path={
  \pgfextra{}
  [looseness=5,min distance=10mm]
  \tikz@to@curve@path},font=\sffamily\small
  }}  
\newcolumntype{C}[1]{>{\centering\arraybackslash}p{#1}}
\tikzstyle{vertexdot}=[circle, draw, fill=white, minimum size=3,inner sep=0pt]
\tikzstyle{root}=[circle, draw, fill=black, minimum size=3,inner sep=0pt]
\tikzstyle{vertexdotsolid}=[circle, draw, fill=black, minimum size=3,inner sep=0pt]
\pgfplotsset{
    standard/.style={
        axis x line=middle,
        axis y line=middle,
        every axis x label/.style={at={(current axis.right of origin)},anchor=north west},
        every axis y label/.style={at={(current axis.above origin)},anchor=north west}
    }
}
\theoremstyle{plain}
\newtheorem{theorem}{Theorem}
\newtheorem{lemma}{Lemma}
\newtheorem{proposition}{Proposition}
\newtheorem{corollary}{Corollary}
\theoremstyle{definition}
\newtheorem{definition}{Definition}
\newtheorem{problem}{Problem}
\newtheorem{remark}{Remark}
\newtheorem{claim}{Claim}
\newtheorem*{remark*}{Remark}
\newtheorem*{theorem*}{Theorem}
\newtheorem{example}{Example}
\newtheorem{fact}{Fact}
\newcommand \E[1]{\mathbb{E}[#1]}
\newcommand{\reals}{{\mathbb{R}}}
\newcommand{\integers}{{\mathbb{Z}}}
\newcommand{\Expect}{\mathbb{E}}
\newcommand{\expect}[1]{\mathbb{E}\left[ #1 \right]}
\newcommand{\Prob}{\mathbb{P}}
\def\Var{\mathrm{Var}}
\def\E{\mathbb{E}}
\newcommand{\abs}[1]{\left|{#1} \right|}
\newcommand{\norm}[1]{\left\|{#1} \right\|}
\newcommand{\indc}[1]{{\mathbf{1}_{\left\{{#1}\right\}}}}
\newcommand{\calA}{{\mathcal{A}}}
\newcommand{\calB}{{\mathcal{B}}}
\newcommand{\calC}{{\mathcal{C}}}
\newcommand{\calE}{{\mathcal{E}}}
\newcommand{\calF}{{\mathcal{F}}}
\newcommand{\calG}{{\mathcal{G}}}
\newcommand{\calH}{{\mathcal{H}}}
\newcommand{\calL}{{\mathcal{L}}}
\newcommand{\calM}{{\mathcal{M}}}
\newcommand{\calN}{{\mathcal{N}}}
\newcommand{\calP}{{\mathcal{P}}}
\newcommand{\calR}{{\mathcal{R}}}
\newcommand{\calS}{{\mathcal{S}}}
\newcommand{\calT}{{\mathcal{T}}}
\newcommand{\calU}{{\mathcal{U}}}
\newcommand{\calX}{{\mathcal{X}}}
\renewcommand{\tilde}{\widetilde}
\newcommand{\regret}{\mathrm{Regret}}
\newcommand{\even}{\mathrm{even}}
\newcommand{\odd}{\mathrm{odd}}
\newcommand{\RG}{\mathbf{PM}}
\newcommand{\PM}{\mathbf{PM}}
\newcommand{\SP}{\mathbf{SP}}
\newcommand{\TTP}{\mathbf{TTP}}
\newcommand{\TP}{\mathbf{TP}}
\newcommand{\LQ}{\mathbf{LQ}}
\newcommand{\SPP}{\mathrm{SPP}}
\newcommand{\overbar}[1]{\mkern 1.5mu\overline{\mkern-1.5mu#1\mkern-1.5mu}\mkern 1.5mu}
\newcommand{\bfe}{\mathbf{e}}
\title{Availability is all you need: achieving optimal regret with minimal information for dynamic matching}
\author{S\"uleyman Kerimov, Pengyu Qian, Mingwei Yang, and Sophie H.\ Yu\thanks{S. Kerimov is with the Jones Graduate School of Business, Rice University, Houston TX, USA,
\texttt{kerimov@rice.edu}.
P. Qian is with the Questrom School of Business, Boston University, Boston MA, USA, \texttt{pqian@bu.edu}.
M. Yang is with the Department of Management Science and Engineering, Stanford University, Stanford CA, USA,
\texttt{mwyang@stanford.edu}.
S.\ H.\ Yu is with The Wharton School of Business, University of Pennsylvania, Philadelphia PA, USA,  \texttt{hysophie@wharton.upenn.edu}.
}}
\date{\today}
\begin{document}
\maketitle

\begin{abstract}
    We study a centralized discrete-time dynamic two-way matching model with finitely many agent types. Agents arrive stochastically over time and join their type-dedicated queues waiting to be matched. We focus on \emph{availability-based} policies that make matching decisions based solely on agent availability across types (i.e., whether queues are empty or not), rather than relying on complete queue-length information (e.g., the longest-queue policy). We aim to achieve constant regret at all times with optimal scaling in terms of the general position gap, $\epsilon$, which measures the distance of the fluid relaxation from degeneracy.

    We classify availability-based policies into \emph{global} and \emph{local} policies based on the scope of information they utilize. First, for general networks (possibly cyclic), we propose a global availability-based policy, \emph{probabilistic matching} (\textbf{PM}), and prove that it achieves the optimal all-time regret scaling of $O(\epsilon^{-1})$, matching the known lower bound  established by \cite{kerimov2024dynamic}. Second, for acyclic networks, we focus on the class of local availability-based policies, specifically static priority policies that prioritize matches based on a fixed order. Within this class,  we derive the first explicit regret bound for the previously proposed \emph{tree priority} policy (\textbf{TP}), showing all-time regret scaling of $O(\epsilon^{-(d+1)/2})$, where $d$ is the network depth. Next, we introduce a new \emph{truncated tree priority} (\textbf{TTP}) policy and prove that it is the first static priority policy to achieve the optimal all-time regret scaling of $O(\epsilon^{-1})$.   These policies are appealing for matching systems such as queueing and load balancing; they reduce operational costs by using minimal information while effectively balancing the trade-off between immediate and future rewards.
\end{abstract}

\tableofcontents

\newpage

\section{Introduction}

Dynamic matching markets, where agents arrive over time and must be matched based on compatibility, arise in numerous applications including kidney exchange markets (e.g., \cite{ashlagi2018effect}), online carpooling and ride-hailing platforms (e.g., \cite{ozkan2020dynamic}), and logistics and delivery systems (e.g., \cite{eom2025batching}). In such settings, agents of different types arrive stochastically to type-specific queues, waiting to be matched. The allowable matches are captured by an underlying compatibility graph, where edge weights encode edge-specific matching rewards. A central design problem is to construct online greedy policies 
that perform well at any time, i.e., balancing immediate and future rewards, while remaining simple enough to implement in large, distributed systems.

A substantial body of recent literature analyzes dynamic matching policies relative to an offline omniscient benchmark, with a particular focus on the \emph{general position gap} (GPG) condition. This condition is generically satisfied when the optimal basic matches (those corresponding to a basic optimal solution of the fluid relaxation)
remain stable under small perturbations of the model parameters \cite{megiddo1989varepsilon, jasin2012, DBLP:journals/ior/ChenL024}. The GPG parameter $\epsilon$ quantifies the magnitude of such allowable perturbations, effectively measuring the distance of the fluid relaxation from degeneracy~\cite{wei2023constant,kerimov2024dynamic,gupta2024greedy,kerimov2025optimality}. Intuitively, $\epsilon$ captures both the inherent ``thickness'' of the market and its operational stability: a larger $\epsilon$ implies a wider separation between optimal and suboptimal actions, making it easier for the central planner to consistently perform ``correct'' matches. Understanding how a policy's all-time regret scales with $\epsilon$ is therefore crucial, as this relationship reveals the robustness of the policy across diverse operating environments.

Under the GPG condition, the optimal all-time regret, defined as the maximum expected difference between the performances of the offline optimum and the policy at any moment, is known to scale as $\Theta(\epsilon^{-1})$~\cite{kerimov2024dynamic}. 
To achieve this optimal scaling, state-of-the-art policies such as longest-queue \cite{kerimov2025optimality}, sum-of-squares \cite{gupta2024greedy}, and primal-dual \cite{wei2023constant} typically rely on sophisticated, information-intensive rules that require real-time access to exact queue-length information (see Figure~\ref{fig:info-tradeoff}). While the longest-queue policy achieves this optimal scaling using local queue lengths, and others do so using global queue lengths, it remains an open question whether policies using substantially less information can match this fundamental $O(\epsilon^{-1})$ limit.

This question is not merely of theoretical interest; it is driven by the operational costs and vulnerabilities inherent in maintaining fine-grained state information. In many real-world systems, maintaining and communicating such detailed state information is costly, fragile, or even infeasible. In large-scale load balancing, for instance, policies requiring instantaneous queue lengths incur prohibitive communication overheads compared to those using partial information \cite{mitzenmacher2001power,lu2011joinidlequeue,der2022scalable}. Furthermore, policies sensitive to exact queue lengths are vulnerable to manipulation by strategic agents \cite{eryilmaz2007fair,estrada2025customer} and are less robust to measurement noise or delayed updates \cite{ibrahim2018delayinfo}. These practical limitations motivate the search for \textit{availability-based} policies that make decisions based solely on whether queues are empty or not, rather than how long they are. Indeed, availability information is the minimal information requirement for executing a match, forming the baseline for state granularity.

To systematically analyze the information requirements for policies achieving the optimal all-time regret scaling, we classify policies along two dimensions:  granularity and scope. The first dimension distinguishes coarse availability information (binary empty/non-empty status) from fine-grained queue-length information (exact counts). The second dimension distinguishes \emph{local} information (restricted to the neighbors of the arriving agent) from \emph{global} information (encompassing the state of the entire network). Motivated by this classification, our work investigates the sufficiency of minimal-information policies, asking whether coarse, availability-based rules can achieve the optimal all-time regret scaling as their information-intensive counterparts.

\begin{figure}[t]
    \centering \hspace{-1cm}
    \scalebox{0.9}{\begin{tikzpicture}[>=stealth, font=\small]

        \draw[->] (0,0) -- (14,0);
        \draw[->] (0,0) -- (0,5);

        \node[anchor=north] at (3.5,0) {Local};
        \node[anchor=north] at (10.5,0) {Global};
        \node[anchor=east] at (0,1.5) {Availability};
        \node[anchor=east] at (0,4) {Queue lengths};

        \draw[dashed, gray!60] (7,0) -- (7,5);
        \draw[dashed, gray!60] (0,3) -- (14,3);

        \draw[->, thick, gray!70] (0.7,1.2) -- (14,5);
        \node[gray!70, anchor=west] at (12,5.2) {\scriptsize richer information};

        \node[draw, rounded corners, align=center, fill=white, text width=6.5cm] at (3.5,4) {\textbf{Local queue-length-based policy:} \\[2pt]
        Longest-queue \cite{kerimov2025optimality}: $O(\epsilon^{-1})$    };

        \node[draw, rounded corners, align=center, fill=white,text width=6.5cm] at (10.5,4) {\textbf{Global queue-length-based policy:}\\[2pt]
           Primal--dual \cite{wei2023constant}: $O(\epsilon^{-1})$\\ [2pt]
           Sum-of-squares \cite{gupta2024greedy}: $O(\epsilon^{-1})$};

        \node[draw, rounded corners, align=center, fill=white, text width=6.5cm] at (3.5,1.5) {
            \textbf{Local availability-based policy:}\\ [2pt]
            Tree priority (acyclic) \cite{kerimov2025optimality} : \\ [2pt]
             $O_{\epsilon}(1)$ $\implies $\textcolor{blue}{$O \bigl(\epsilon^{-(d+1)/2}\bigr)$}\\[2pt]
         \textcolor{blue}{Truncated tree priority (acyclic): $O(\epsilon^{-1})$}
        };

        \node[draw, rounded corners, align=center, fill=white, text width=6.5cm] at (10.5,1.5) {
        \textbf{Global availability-based policy:} \\[2pt]
         {\textcolor{blue}{Probabilistic matching: $O(\epsilon^{-1})$}}};

    \end{tikzpicture}}
    \caption{
      Information landscape and regret scaling of dynamic matching policies,  with the results in this paper highlighted in blue. The policies are positioned according to the granularity (availability vs.\ queue lengths) and scope (local vs.\ global) of the state information they use, along with their all-time regret scaling in terms of the GPG parameter $\epsilon$, where $O_\epsilon(1)$ refers to a constant bound with an unknown dependence on $\epsilon$, and $d$ denote the depth of the acyclic graph.
      }

    \label{fig:info-tradeoff}
\end{figure}

\subsection{Our contributions}

In this paper, we provide a comprehensive analysis of availability-based greedy policies in two-way matching networks under the GPG condition. Throughout our analysis, we focus on optimal basic matches. This restriction is justified by the GPG condition, which guarantees that optimal basic matches remain stable under small perturbations of the model parameters. Our results, summarized in Figure~\ref{fig:info-tradeoff}, demonstrate that restricting attention to such availability-based policies need not sacrifice performance.

For general networks (including those with cycles), we introduce the \emph{probabilistic matching} ($\RG$) policy, which is a \emph{global availability-based policy}, by carefully adapting the probability of performing each match based on the state of the system. 
We prove that this policy achieves the optimal regret scaling of $O(\epsilon^{-1})$ (\prettyref{thm:random-greedy}). This result establishes that fine-grained queue-length information is not strictly necessary for optimality; coarse availability information suffices, provided one has a global view of the network.

Next, we turn to acyclic networks, where the reduced network formed by the optimal basic matches contains no cycles. Central to this setting are under-demanded agent types, defined as those for which the static planning problem (and consequently any ``reasonable'' policy) matches only a fraction of their arrivals. Crucially, every connected component of such a reduced network possesses a unique under-demanded node~\cite{kerimov2025optimality}. We utilize this unique node as the root to orient the connected component as a rooted tree, capturing the intuition that nodes further from the root are increasingly demanded. This structural perspective is pivotal as it allows us to focus on a family of \emph{local availability-based policies}, specifically \emph{static priority policies}. These policies match agents according to a fixed ordering and have been widely studied in the literature~\cite{moyal2017instability,aveklouris2025matching,kerimov2025optimality}. In our setting, we prioritize children over the parent node, meaning that we prioritize the more demanded nodes over the less demanded ones.

First, we revisit the \emph{tree priority} ($\TP$) policy proposed by \cite{kerimov2025optimality}. Leveraging the rooted tree structure defined above, $\TP$ prioritizes matches that are farther away from the under-demanded root (i.e., prioritizing ``children'' over ``parents''). While $\TP$ is known to admit bounded all-time regret, the scaling of the regret in terms of $\epsilon$ was previously unknown. By heavily exploiting the hierarchical structure of the tree, we derive the first explicit regret bound for $\TP$, showing it scales as $O(\epsilon^{-(d+1)/2})$, where $d$ is the network depth (\prettyref{thm:regert-static-priority}).

Next, we propose the \emph{truncated tree priority} ($\TTP$) policy, which is the first static priority policy proven to achieve the optimal $O(\epsilon^{-1})$ regret scaling (\prettyref{thm:msp}). While $\TTP$ shares the same prioritization strategy as $\TP$ to efficiently clear over-demanded queues, the critical distinction lies in the fallback mechanism: upon the arrival of an agent, $\TTP$ strictly forbids matching with its ``parent'' node, whereas $\TP$ allows it as a last resort. This constraint effectively decouples the system, ensuring one-way propagation of stochasticity.
This property enables us to employ a careful multi-step drift analysis, a novel and necessary technical departure from the standard one-step drift techniques used in prior work~\cite{wei2023constant,gupta2024greedy,kerimov2025optimality}, to establish the optimal queue-length bound. Remarkably, this result implies that for acyclic graphs, optimality is achievable using the minimal possible information set: the binary availability of immediate neighbors without requiring either global visibility or exact queue counts.

\subsection{Related literature}

Dynamic matching has been extensively studied in various settings. We review several streams of literature most relevant to our work.

\paragraph{Multi-way dynamic matching.}
Starting from \cite{kerimov2024dynamic}, the multi-way dynamic matching problem has recently received extensive attention, and the optimal regret scaling of $O(\epsilon^{-1})$ 
has been achieved via several policies.
\cite{kerimov2024dynamic} propose a batching policy, which performs a maximum weighted matching periodically.
\cite{gupta2024greedy} develops a sum-of-square policy that achieves an optimal regret scaling even under a more general model.
Later on, \cite{wei2023constant} design a policy based on the primal-dual framework, which can achieve the optimal regret scaling even for unknown arrival rates.
\cite{kerimov2025optimality} consider two-way matching networks and focus on policies that perform a match whenever possible.
They show that the longest-queue policy achieves the optimal regret scaling, and they propose the $\TP$ policy that achieves constant regret at all times only for acyclic matching networks without explicitly characterizing the regret.

\paragraph{Dynamic matching on fixed graphs.}
Similar to our work, numerous papers study the dynamic matching problem with a fixed matching configuration by imposing different modeling assumptions.
Most literature assumes that each match consists of exactly two agents, where the underlying matching network can be either bipartite~\cite{ozkan2020dynamic,kessel2022stationary,aveklouris2025matching,chen2024stochastic} or non-bipartite~\cite{collina2020dynamic,aouad2022dynamic}.
A common feature of these literature is that they all assume that agents depart stochastically, and different objectives are considered, such as minimizing the holding costs~\cite{buyic2015approximate}, maximizing the match-specific rewards~\cite{collina2020dynamic}, or maximizing the generated rewards minus the holding costs~\cite{aveklouris2025matching}.

\paragraph{Dynamic matching on random graphs.}
There is a vast literature on dynamic matching models based on random graphs, where agents arrive over time and can possibly form edges with the existing agents with fixed probabilities~\cite{anderson2013efficient,burqOR,akbarpour2020thickness}.
A common assumption in this literature is that match values are homogeneous so that the focus is on minimizing the number of unmatched agents, and the general finding of this literature is that acting greedily is asymptotically optimal.
Recently, \cite{blanchet2022asymptotically} consider a setting where match values are heterogeneous, and as the market grows large, they show that greedy threshold policies are asymptotically optimal.
Another line of research assumes that the agents lie in a metric space, and the weight of an edge between two agents is determined by the distance between them~\cite{kanoria2021dynamic,balkanski2023power,DBLP:journals/ior/YangY26,DBLP:conf/innovations/LiV026}.

\paragraph{Network revenue management.}
In the network revenue management (NRM) problem, there exist offline resources, and online requests that consume certain amounts of offline resources arrive dynamically.
\cite{talluri1998analysis} achieve $O(\sqrt{T})$ regret in the (quantity-based) NRM model via the bid-price policy.
Later on, \cite{jasin2012} improve the regret to be $O(1)$ via a re-solving policy under the GPG condition.
Since then, constant regret is achieved when the arrival rates are unknown~\cite{jasin2015performance}, when the GPG condition does not hold~\cite{arlotto2019uniformly,vera2021bayesian,bumpensanti2020re,vera2021online,jiang2022degeneracy}, and for a variety of other related problems~\cite{balseiro2024survey}.
In contrast, for the multi-way dynamic matching problem considered in this paper, without the GPG condition, a regret lower bound of $\Omega(\sqrt{T})$ exists~\cite{kerimov2024dynamic}.

\paragraph{Stability of stochastic matching systems.}
The multi-way dynamic matching model is closely related to the study of stability of stochastic matching systems~\cite{mairesse2016stability,rahme2021stochastic,jonckheere2023generalized}.
The connection is established by, e.g., \cite[Lemma 4.1]{kerimov2024dynamic} and \cite[Lemma 1]{gupta2024greedy}, which assert that bounded all-time regret is implied by the stability of queues that are fully utilized by the fluid relaxation, provided that only the matches active in the fluid relaxation are performed.
For the special case of two-way matching systems, \cite{mairesse2016stability} identify sufficient and necessary conditions for stability, and \cite{jonckheere2023generalized} show that several policies, including the longest-queue policy and a generalized max-weight policy, achieve the maximal stability region.

\subsection{Notation}

For $x, y \in \reals$, we use $x \land y$ to denote $\min\{x, y\}$ and $x^+$ to denote $\max\{x, 0\}$.
For $n \in \integers_{> 0}$, we use $[n]$ to denote the set $\{1, \ldots, n\}$.
For a vector $v \in \reals^n$ and an index set $J \subset [n]$, we use $v^+$ to denote $(v_1^+, \ldots, v_n^+)$ and use $v_J \in \reals^{|J|}$ to denote the vector obtained by restricting $v$ on $J$.
We use $\mathbf{e}_i$ to denote the $i$-th standard basis, i.e., the $i$-th entry of $\mathbf{e}_i$ is $1$ with all other entries being $0$. 
We use standard asymptotic notation:
for two positive sequences $\{x_n\}$ and $\{y_n\}$, we write $x_n = O(y_n)$ if $x_n \le C y_n$ for an absolute constant $C$ and for all $n$; $x_n = \Omega(y_n)$ if $y_n = O(x_n)$.

For a rooted tree, let $\calC(i)$ be the set of children of a node $i$, and let $P(i)$ be the parent of a non-root node $i$.
For each node $i$, denote $\calT(i)$ as the set of nodes in the subtree rooted at $i$ (including $i$), and denote $\calT^-(i) \triangleq \calT(i) \setminus \{i\}$.

\section{Model}\label{sec:model_setup}

We study a centralized dynamic matching market.
There are $n$ types of agents $\calA = [n]$ and $k$ types of matches $\calM = [k]$.
For each $m \in \calM$, the value of performing a match $m$ is denoted by $r_m > 0$.
Following \cite{kerimov2025optimality}, we focus on
two-way matching structures, i.e., each match $m\in\calM$ consists of exactly two agent types.
We encode the matching structure into a \emph{matching matrix} $M \in \{0, 1\}^{\calA \times \calM}$, where $M_{im} = 1$ if and only if {match} $m$ involves agent type $i$ for all $i \in \calA$ and $m \in \calM$.
Assume without loss of generality that each agent type participates in at least one match.
We denote $(\calA, \calM)$ as the (undirected) graph where vertices are formed by agent types and edges are formed by matches.
For each agent type $i \in \calA$, denote $\calN(i)$ as the set of neighbors of $i$ in the graph $(\calA, \calM)$.
If there is a match in $\calM$ that contains agent types $i$ and $j$, 
we will use $m(i, j)$ to denote this match.

We consider discrete-time arrivals where exactly one agent arrives and joins
the type-dedicated queue at each time $t \in [T]$.
For simplicity, we assume that there are no preexisting agents in the system.\footnote{This is a common assumption made by the  literature~\cite{wei2023constant,kerimov2025optimality}.}
Let $\lambda \in \reals_{> 0}^n$ denote a probability distribution over agent types, where $\lambda_i$ represents the probability of an {arriving} agent being type $i$, and $\sum_{i=1}^n \lambda_i = 1$.
For each type $i$, let $A_i(t)$ denote the cumulative number of type $i$ arrivals up to and including time $t$, with $\Delta A_i(t) \triangleq A_i(t) - A_i(t - 1) \in \{0,1\}$ indicating whether an agent of type $i$ arrives at time $t > 0$.
At each period $t$, the central decision maker can perform match $m \in \calM$ only if there are waiting agents of each type contained in $m$.
This match then generates a reward $r_m$, and the agents participating in the match depart.
For the sequence of events within each period, an agent first arrives, and then the decision maker decides which matches to perform.
We refer to the tuple $\calG = (M, \lambda, r)$ as the \emph{matching network}, which captures both the matching structure and the arrival process.

A (randomized) \emph{matching policy} decides how to (randomly) perform matches at each period. 
We will restrict our attention to non-anticipative policies, whose decisions at each moment only depend on the events happened so far.
Given a policy, for all match $m \in \calM$ and time $t \geq 0$, we denote $D_m(t)$ as the number of match $m$ performed by the policy up to and including time $t$, and denote $\Delta D_m(t) \triangleq D_m(t) - D_m(t - 1)$ as the number of match $m$ performed at time $t > 0$.
We maintain a queue for each type $i \in \calA$, which contains all agents of type $i$ in the system that have not been matched.
Define $Q_i(t) \triangleq (A(t) - MD(t))_i$ as the length of each queue $i \in \calA$ after time $t \geq 0$, and
we refer to $Q(t)$ as the \emph{state} of the system after time $t$.
We say that a state is \emph{valid} under a policy if it is reachable by the policy from the all-zero state.

\subsection{Optimality criterion}\label{sec:all_time_regret}

The expected total value generated by a policy $\Pi$ during the first $t$ periods is denoted by $\calR^{\Pi}(t) \triangleq \Expect[r^T D(t)]$.
Let $\calR^*(t)$ be the expected value attained by the policy that takes no action before time $t$ and performs matches that maximize the overall rewards up to time $t$. 
Formally,
\begin{align}\label{eq:def-hin-opt}
    \calR^*(t)
    \triangleq \Expect\left[\begin{array}{ll}\max & r^T y \\ \text { s.t. } & M y \leq A(t) \\ & y \in \mathbb{Z}_{\geq 0}^k\end{array}\right],
\end{align}
and $\calR^*(t)$ is straightforwardly an upper bound for $\calR^{\Pi}(t)$ for every policy $\Pi$.
Define the \emph{regret} of a policy $\Pi$ after time $t$ as $\calR^*(t) - \calR^{\Pi}(t)$, and define the \emph{all-time regret} of $\Pi$ after time $T$ as
\begin{align*}
    \regret(\Pi, T) \triangleq \sup_{0 \leq t \leq T} (\calR^*(t) - \calR^{\Pi}(t)).
\end{align*}
We say that $\Pi$ achieves constant regret at all times (or all-time constant regret), if $\regret(\Pi, T)$ is upper bounded by a constant that does not depend on $T$.
Note that this performance metric differs from simply achieving exact hindsight optimality at the end of time horizon $T$, 
which can be attained by a trivial policy that delays all matches until the end of the time horizon $T$ and then solves an optimization problem to maximize overall rewards.
In other words, to achieve all-time constant regret, a policy must effectively balance short-term and long-term rewards, ensuring that decisions made at each step do not compromise overall performance.

\subsection{Static-planning and general position gap}

We consider the fractional relaxation of the integer program in \eqref{eq:def-hin-opt}.
By Jensen's inequality,
\begin{align}\label{eqn:ub-opt}
    \calR^*(t)
    = \Expect\left[\begin{array}{ll}\max & r^T y \\ \text { s.t. } & M y \leq A(t) \\ & y \in \mathbb{Z}_{\geq 0}^k\end{array}\right]
    \leq \begin{array}{ll}\max & r^T x \\ \text { s.t. } & M x \leq t\lambda. \\ & x \in \mathbb{R}_{\geq 0}^k\end{array}
\end{align}
Replacing $z = x / t$ and adding slack variables $(s_i)_{i \in \calA}$, the linear program in the RHS can be rewritten as
\begin{align*}
    \SPP(\lambda) \triangleq
    \begin{array}{ll}\max & r^T z \\ \text { s.t. } & M z + s = \lambda \\ & z \in \mathbb{R}_{\geq 0}^k, s \in \reals_{\geq 0}^n\end{array},
\end{align*}
which we refer to as the \emph{static-planning problem}.

Next, we introduce the notion of general position gap that captures the stability level of $\SPP(\lambda)$.
This condition is proved necessary for any policy to achieve constant regret at all times (see, e.g., \cite[Example 3.1]{kerimov2024dynamic}).
We note that the definition of general position gap we adopt coincides with that in \cite{kerimov2024dynamic,kerimov2025optimality}.

\begin{definition}[General position gap]
    A matching network $\calG$ satisfies the \emph{general position gap} (GPG) condition if $\SPP(\lambda)$ has a non-degenerate optimal solution, i.e., all $n$ basic variables in this solution are strictly positive.
    When $\calG$ satisfies the GPG condition with a non-degenerate optimal solution $(z^*, s^*)$, define the GPG parameter $\epsilon$ as the minimum value of all basic variables, i.e.,
    \begin{align}\label{eq:gpg}
        \epsilon
        \triangleq \min_{m \in \calM\,, z_m^* > 0} z_m^* \land \min_{i \in \calA\,, s_i^* > 0} s_i^*.
    \end{align}
\end{definition}

The GPG condition is a standard assumption in online revenue management and dynamic matching literature (see, e.g., \cite{jasin2012, DBLP:journals/ior/ChenL024, kerimov2024dynamic}), and any linear program can satisfy this condition with an arbitrarily small perturbation \cite{megiddo1989varepsilon}.
The power of the GPG condition comes from the following important property (see, e.g., \cite{wei2023constant,gupta2024greedy,kerimov2025optimality}).

\begin{proposition}[Corollary 4.1 in \cite{kerimov2025optimality}]\label{prop:gpg-pertur}
    Suppose that $\calG$ satisfies the GPG condition, and let $(z^*, s^*)$ be a non-degenerate optimal solution of $\SPP(\lambda)$ with $\epsilon$ defined as \eqref{eq:gpg}.
    Then, for every $\lambda' \in \reals_{\geq 0}^n$ with $\norm{\lambda - \lambda'}_1 \leq \epsilon$, $\SPP(\lambda')$ has an optimal solution with the same basic activities, i.e., non-zero components, as $(z^*, s^*)$.
\end{proposition}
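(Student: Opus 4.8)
The plan is to analyze the linear program $\SPP(\lambda)$ directly. Write its constraint matrix as $A := [\,M \mid I_n\,] \in \{0,1\}^{n\times(d+n)}$ and its cost vector as $c := (r^\top, 0)$, so that $\SPP(\lambda)$ reads $\max\{c^\top x : Ax = \lambda,\ x \ge 0\}$ in the variable $x = (z,s)$. Under the GPG condition the unique optimal solution $(z^*,s^*)$ is a \emph{non-degenerate} basic feasible solution: it has exactly $n$ strictly positive coordinates, the corresponding $n$ columns of $A$ form an invertible basis matrix $B$, and the basic part $x_B^* := B^{-1}\lambda$ satisfies $x_B^* \ge \epsilon\,\mathbf 1$ by the definition of the GPG parameter $\epsilon$. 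The proof then splits into two claims: (i) the basis $B$ stays \emph{optimal} for $\SPP(\lambda')$ whenever it stays feasible, because optimality of a basis is independent of the right-hand side; and (ii) $B$ stays feasible for every $\lambda'$ with $\norm{\lambda - \lambda'}_1 \le \epsilon$, because every entry of $B^{-1}$ is at most $1$ in absolute value.

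For (i), recall that the reduced cost of a non-basic column $j$, namely $\bar c_j = c_j - c_B^\top B^{-1} A_j$, depends only on $A$, $c$ and the choice of $B$, and not on the right-hand side. Non-degeneracy together with uniqueness of the optimum forces all these reduced costs to be \emph{strictly} negative: if some non-basic reduced cost vanished, the optimal face of $\SPP(\lambda)$ would contain the nontrivial edge (or ray) emanating from $(z^*,s^*)$ in the direction of increasing that variable — nontrivial precisely because $x_B^* > 0$ strictly — so $\SPP(\lambda)$ would have more than one optimal solution, contradicting GPG. Consequently, for any $\lambda' \in \reals_{\ge 0}^n$ with $B^{-1}\lambda' \ge 0$, the point with basic part $B^{-1}\lambda'$ and non-basic part $0$ is a basic feasible solution of $\SPP(\lambda')$ satisfying the optimality conditions, hence optimal; its support is contained in the support of $(z^*,s^*)$, with equality once $B^{-1}\lambda' > 0$.

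For (ii), write $B^{-1}\lambda' = x_B^* + B^{-1}(\lambda'-\lambda)$; since $x_B^* \ge \epsilon\,\mathbf 1$, it suffices to show $\norm{B^{-1}v}_\infty \le \norm{v}_1$ for every $v$, i.e.\ $|(B^{-1})_{jk}| \le 1$ for all $j,k$. By Cramer's rule $|(B^{-1})_{jk}| = |\det B_j(\mathbf e_k)| / |\det B|$, where $B_j(\mathbf e_k)$ denotes $B$ with its $j$-th column replaced by $\mathbf e_k$; since $\mathbf e_k$ is itself a column of $A$, the matrix $B_j(\mathbf e_k)$ is again an $n\times n$ submatrix of $A = [\,M \mid I_n\,]$ (if it has a repeated column the entry is $0$ and there is nothing to prove). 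Here the two-way structure is essential: every column of $A$ is either a unit vector $\mathbf e_i$ or a sum $\mathbf e_i + \mathbf e_j$ of two distinct unit vectors, so an $n\times n$ submatrix of $A$ is the incidence matrix of an undirected subgraph of $(\calA,\calM)$ augmented by some identity (``pinning'') columns, and a standard fact on such matrices (triangulate the pinned-tree components; contract each remaining component to its odd cycle) gives that the determinant is $0$ or $\pm 2^{c}$, where $c$ counts the connected components of that subgraph which contain a cycle — necessarily a single odd cycle, with no pinning column — while pinned trees and pinning columns contribute factors $\pm1$. Passing from $B$ to $B_j(\mathbf e_k)$ replaces one column of $B$ by a unit column, which only deletes edges and re-assigns pinned vertices of the associated subgraph; deleting edges cannot create cycles nor split one cyclic component into two, so $c$ cannot increase, whence $|\det B_j(\mathbf e_k)| \le |\det B|$ and $|(B^{-1})_{jk}| \le 1$. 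Combining (i) and (ii) yields the proposition, with $B^{-1}\lambda' > 0$ — hence literally the same non-zero coordinates as $(z^*,s^*)$ — whenever $\norm{\lambda-\lambda'}_1 < \epsilon$.

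I expect the entrywise bound $|(B^{-1})_{jk}|\le 1$ in step (ii) to be the main obstacle, and it is the only place where the two-way assumption is genuinely used: for multi-way matching the columns of $M$ may carry three or more ones, basis inverses can then have entries exceeding $1$, and this route degrades. A minor technical point is the boundary $\norm{\lambda-\lambda'}_1 = \epsilon$, where $B^{-1}\lambda'$ may acquire a zero coordinate; there one still obtains an optimal basic feasible solution of $\SPP(\lambda')$ with basis $B$ (the statement one actually needs downstream), and strict positivity of all coordinates holds throughout the open ball $\norm{\lambda-\lambda'}_1 < \epsilon$.
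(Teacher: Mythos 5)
The paper does not prove this proposition --- it is imported verbatim as Corollary~4.1 of \cite{kerimov2023optimality} --- so there is no in-paper argument to compare against; I can only assess your attempt on its own terms, and on those terms it is sound and amounts to the standard LP sensitivity argument one would expect behind the cited result. Your step~(i) is the textbook fact that dual feasibility (equivalently, the reduced-cost sign conditions) of a basis $B$ does not involve the right-hand side, so $B$ remains optimal for $\SPP(\lambda')$ whenever $B^{-1}\lambda' \ge 0$; the extra observation that uniqueness plus non-degeneracy forces \emph{strict} reduced costs is correct but is not needed for the conclusion. Your step~(ii) is exactly Theorem~4.1 of \cite{kerimov2023optimality} (the entrywise bound $|(B^{-1})_{jk}|\le 1$), which the present paper itself invokes by citation in the proof of Lemma~\ref{lmm:opt-test}; you re-derive it via Cramer's rule and the combinatorics of incidence-matrix determinants, and your identification of this as the only place where the two-way structure enters is apt. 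The one place I would tighten the write-up is the determinant sketch: the claim that passing from $B$ to $B_j(\mathbf e_k)$ ``cannot increase $c$'' needs to handle the case where the replaced column is itself a unit column (moving a pin rather than deleting an edge), and more generally the quantity $c$ should be defined carefully (number of connected components of the pin-augmented subgraph that contain an odd cycle and carry no pin); as it stands this is a plausible sketch rather than a proof, though of course one can simply cite Theorem~4.1 of \cite{kerimov2023optimality} as the paper does. Finally, your remark on the boundary $\|\lambda-\lambda'\|_1=\epsilon$ is a genuine and welcome clarification: what is actually used downstream (e.g., $(M\tilde z(t))_i = \tilde\lambda_i(t)$ for $i\in\calA_0$ in Algorithm~\ref{alg:random-greedy}) is that the \emph{basis} is preserved, not that the support is literally identical, and your argument delivers exactly that on the closed ball while giving identical support on the open ball.
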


Given a matching network $\calG$ that satisfies the GPG condition, let $(z^*, s^*)$ be a non-degenerate optimal solution of $\SPP(\lambda)$ with $\epsilon$ defined as \eqref{eq:gpg}.
Define $\calM_+ \triangleq \{m \in \calM \mid z_m^* > 0\}$ and $\calM_0 \triangleq \calM \setminus \calM_+$ as the set of active matches and the set of redundant matches, respectively.
Also, define $\calA_+ \triangleq \{j \in \calA \mid s_j^* > 0\}$ and $\calA_0 \triangleq \calA \setminus \calA_+$ as the set of under-demanded queues and the set of over-demanded queues, respectively.
Note that
\begin{align*}
    \epsilon
    = \min_{m \in \calM_+} z_m^* \land \min_{j \in \calA_+} s_j^* > 0.
\end{align*}

When there exist multiple connected components in the graph $(\calA, \calM)$, we can analyze the policy independently on each connected component, since the actions on one connected component do not affect the performance on another one.
As a result, we assume without loss of generality that $(\calA, \calM)$ only consists of one connected component.

As an important consequence of the GPG condition, which is also commonly used by prior work (see, e.g., \cite[Lemma 5.1]{kerimov2025optimality} and \cite[Lemma 1]{gupta2024greedy}), bounding the all-time regret of a policy can be boiled down to analyzing the total length of the over-demanded queues, provided that the policy is restricted to active matches.

\begin{lemma}\label{lmm:opt-test}
    Suppose that $\calG$ satisfies the GPG condition, and let $(z^*, s^*)$ be a non-degenerate optimal solution of $\SPP(\lambda)$.
    Suppose that the following conditions hold under a policy $\Pi$:
    \begin{enumerate}
        \item Only matches in $\calM_+$ are performed, and
        \item $\sum_{i \in \calA_0} \Expect[Q_i(t)] \leq B$ for every $t > 0$, where $B > 0$ does not depend on $t$.
    \end{enumerate}
    Then, $\regret(\Pi, T) \leq r_{\max} n B$, where $r_{\max} \triangleq \max_{m \in \calM_+} r_m$.
\end{lemma}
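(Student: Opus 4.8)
The plan is to prove Lemma~\ref{lmm:opt-test} by chaining two inequalities: first bounding the regret by a quantity involving the over-demanded queue lengths, and then invoking the hypothesis that this quantity is bounded by $B$. The starting point is the upper bound \eqref{eqn:ub-opt}, $\calR^*(t) \le \max\{r^Tx : Mx \le t\lambda,\, x \ge 0\}$. Since $\calG$ satisfies the GPG condition with unique non-degenerate optimal solution $(z^*,s^*)$ of $\SPP(\lambda)$, scaling gives that $x = t z^*$ is optimal for the fluid LP (it is feasible since $M(tz^*) + ts^* = t\lambda$, and optimality follows from LP duality / the uniqueness of the optimum of $\SPP(\lambda)$). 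Hence $\calR^*(t) \le t\, r^T z^* = t \sum_{m \in \calM_+} r_m z_m^*$.

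Next I would lower bound $\calR^{\Pi}(t) = \Expect[r^T D(t)]$. Because $\Pi$ uses only matches in $\calM_+$ (hypothesis~1), we have $D_m(t) = 0$ for $m \in \calM_0$, and $A(t) = MD(t) + Q(t)$ gives, restricting to the coordinates in $\calA_0$ (the over-demanded queues, which have $s_j^* = 0$), the identity $A_{\calA_0}(t) = (M D(t))_{\calA_0} + Q_{\calA_0}(t)$. The key structural fact I want is that for an over-demanded queue $j \in \calA_0$, every incident active match must be "saturated" by the fluid solution in the sense that the only slack in the queue-$j$ constraint of $\SPP(\lambda)$ is zero; combined with the two-way structure this lets me express $r^T D(t)$ in terms of $A(t)$ and $Q(t)$. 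Concretely, I expect the cleanest route is: taking expectations, $\Expect[r^T D(t)] = r^T \Expect[D(t)]$, and using the LP complementary slackness for $(z^*,s^*)$ together with the flow-conservation identities $M\Expect[D(t)] = \Expect[A(t)] - \Expect[Q(t)] = t\lambda - \Expect[Q(t)]$, one obtains $r^T\Expect[D(t)] \ge t\, r^Tz^* - (\text{const})\cdot \sum_{i\in\calA_0}\Expect[Q_i(t)]$. The constant that appears is where the $r_{\max} n$ factor in the statement comes from — each unit left in an over-demanded queue costs at most $r_{\max}$ in foregone reward, and there are at most $n$ queues.

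Putting these together, $\calR^*(t) - \calR^{\Pi}(t) \le t\,r^Tz^* - (t\,r^Tz^* - r_{\max}\sum_{i\in\calA_0}\Expect[Q_i(t)]) = r_{\max}\sum_{i\in\calA_0}\Expect[Q_i(t)]$, and a small accounting of how the $n$ factor enters (e.g., bounding $r^T$ against $r_{\max}$ times a $0/1$ incidence vector that has at most $n$ nonzero dual-weighted contributions) yields $\calR^*(t) - \calR^{\Pi}(t) \le r_{\max} n \sum_{i\in\calA_0}\Expect[Q_i(t)] \le r_{\max} n B$ by hypothesis~2. Taking the supremum over $0 \le t \le T$ gives $\regret(\Pi,T) \le r_{\max} n B$, which is independent of $T$, establishing all-time constant regret.

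The main obstacle I anticipate is making the middle step fully rigorous: precisely how the reward collected by $\Pi$ relates to $t\,r^Tz^*$ minus a linear functional of the over-demanded queue lengths. This requires carefully using the dual optimal solution of $\SPP(\lambda)$ (or equivalently Proposition~\ref{prop:gpg-pertur} and the structure of active matches/over-demanded queues) to argue that all the "loss" relative to the fluid benchmark is accounted for by agents sitting in $\calA_0$ queues — agents in under-demanded queues $\calA_+$ are expected to be left over even in the fluid solution and so do not contribute to regret. I would handle this by writing $r^T D(t)$ via the flow identities, substituting the complementary-slackness relations $s_j^* = 0$ for $j \in \calA_0$ and $z_m^* > 0 \Rightarrow$ the match-$m$ dual constraint is tight, and then bounding the residual term by $r_{\max}\sum_{i \in \calA_0}\Expect[Q_i(t)]$ up to the combinatorial factor $n$; the two-way (graph) structure keeps this bookkeeping manageable.
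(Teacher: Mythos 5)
Your outline matches the paper's skeleton: upper bound $\calR^*(t)$ by $t\,r^T z^*$, then relate $r^T D(t)$ to the fluid benchmark via the flow identity $MD(t)=A(t)-Q(t)$ and charge the gap to the backlog in $\calA_0$. However, the step you yourself flag as ``the main obstacle'' is precisely the content of the lemma, and your sketch does not close it. Appeals to complementary slackness, the dual solution, and a ``$0/1$ incidence vector that has at most $n$ nonzero dual-weighted contributions'' do not produce the constant $r_{\max} n$; there is no a priori reason the linear functional of $Q(t)$ that appears when you invert the flow identity has coefficients bounded by $n$, and ``at most $n$ queues, each costing at most $r_{\max}$ per unit'' is a heuristic, not a bound.

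The mechanism actually needed is a structural fact about the optimal basis, and it is not derivable from generic LP duality. Let $M_B$ be the $n\times n$ submatrix of $[M\ I]$ whose columns are indexed by the basic activities $\calM_+\cup\calA_+$; the GPG condition makes $M_B$ invertible. Because only matches in $\calM_+$ are used (and the under-demanded queues are kept empty), the flow identity becomes $D_{\calM_+}(t)=\bigl(M_B^{-1}(A(t)-Q(t))\bigr)_{\calM_+}$, while $z^*_{\calM_+}=\bigl(M_B^{-1}\lambda\bigr)_{\calM_+}$. Subtracting, the regret at time $t$ is bounded by $\E\bigl[r^T_{\calM_+}\bigl(M_B^{-1}Q(t)\bigr)_{\calM_+}\bigr]\le r_{\max}\,\lVert M_B^{-1}\rVert_\infty\,\E[\lVert Q(t)\rVert_1]$. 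The factor $n$ then comes from the cited fact (Theorem~4.1 of \cite{kerimov2023optimality}) that, for two-way matching under GPG, every entry of $M_B^{-1}$ lies in $[-1,1]$, so $\lVert M_B^{-1}\rVert_\infty\le n$. Without identifying and invoking this entrywise bound on the inverse basis matrix (or an equivalent quantitative control on the relevant dual functional), the chain from ``expected $\calA_0$ backlog $\le B$'' to ``regret $\le r_{\max} n B$'' does not go through, and your proof remains incomplete at exactly the step you were uncertain about.
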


The proof of \prettyref{lmm:opt-test} is deferred to \prettyref{sec:proof-lmm-opt-test}.
In view of \prettyref{lmm:opt-test}, to achieve constant regret at all times, it suffices to restrict the policies to only use active matches and control the lengths of over-demanded queues. We note that it is a common strategy to ignore redundant matches \cite{kerimov2024dynamic,gupta2024greedy,kerimov2025optimality} with few exceptions \cite{wei2023constant}.
From now on, we assume $\calM = \calM_+$.

We say that a queue $i \in \calA$ is \emph{truncated} if, whenever an agent in queue $i$ arrives and is not matched within the same period, this agent is immediately discarded, where discarding agents can be equivalently viewed as setting these agents aside and never matching them.
We allow policies to truncate the under-demanded queues, i.e., queues in $\calA_+$.\footnote{Under the GPG condition, without discarding, the expected length of every under-demanded queue will grow unbounded under any matching policy with bounded all-time regret~\cite{kerimov2024dynamic}. Hence, discarding is necessary to ensure the ergodicity of the Markov chain $(Q(t))_{t \geq 0}$.}

\subsection{Policy information classes}
Throughout, we focus on greedy policies that perform at most one match at each period, where the performed match must include the newly arriving agent. Consequently, all policies studied by prior work~\cite{wei2023constant,gupta2024greedy,kerimov2025optimality} are greedy policies, with the exception of the batching policy~\cite{kerimov2024dynamic}. We distinguish greedy policies in terms of the information they use when performing matches:

The first dimension of classification concerns the {granularity} of the state information. A policy is {availability-based} if its matching decision utilizes only availability information across agent types (i.e., whether a queue is empty or not), representing a coarse level of state information. Conversely, a policy is {queue-length-based} if its matching decision utilizes exact queue-length information across agent types.

The second dimension distinguishes greedy policies based on the {scope} of the utilized information. We classify a policy as {local} if the matching decision upon the arrival of an agent of type $i \in \calA$ depends solely on the state information of $i$'s neighboring queues. If a policy utilizes state information beyond the local neighbors of the arriving agent, we classify it as {global}.

Taken together, these two dimensions allow us to classify any greedy policy into one of four distinct information classes, as illustrated in \prettyref{fig:info-tradeoff}. For example, the longest-queue policy is local queue-length-based, the $\TP$ policy of \cite{kerimov2025optimality} is local availability-based, and the primal-dual policy of \cite{wei2023constant} and the sum-of-squares policy of \cite{gupta2024greedy} are global queue-length-based.

\section{Main results}

In this section, we present our main results regarding the information requirements for optimal matching policies. Contrary to the intuition that strictly less information might degrade performance, our analysis demonstrates that coarse availability information is sufficient to achieve the optimal all-time regret scaling. We establish this first for general networks via a global availability-based policy in \prettyref{sec:result-random}, and then for acyclic networks via local availability-based policies in \prettyref{sec:local-avail-result}.

\subsection{Global availability-based policy}
\label{sec:result-random}

\begin{algorithm}[ht]
    \caption{Probabilistic matching ($\RG$)}
	\label{alg:random-greedy}
    \begin{algorithmic}[1]
        \State {$Q(0) \leftarrow \mathbf{0}$}
        \For {$t = 1, \ldots, T$}
            \State {$\calU_+(t) \leftarrow \{i \in \calA \mid Q_i(t - 1) > 0\}$; $\calU_0(t) \leftarrow \calA \setminus \calU_+(t)$}
            \State {Define $\tilde{\lambda}(t) \in \reals^n$ such that 
                    \begin{align*}
            \tilde{\lambda}_i(t) =
            \begin{cases}
                \lambda_i, & i \in \calU_0(t),\\
                \lambda_i + \epsilon / n, & i \in \calU_+(t),
            \end{cases}
        \end{align*} 
} \label{line:def-lambda-new}
            \State {Let $(\tilde{z}(t), \tilde{s}(t))$ be an optimal solution of $\SPP(\tilde{\lambda}(t))$ with the same basic activities as $(z^*, s^*)$} \Comment{Guaranteed by \prettyref{prop:gpg-pertur}.}
            \State {An agent of type $j \in \calA$ arrives}
            \State {Match the arriving agent to an agent in each queue $i \in \calN(j) \cap \calU_+(t)$ with probability 
            \[
             \frac{\tilde{z}_{m(i, j)}(t)}{\sum_{k \in \calN(j) \cap \calU_+(t)} \tilde{z}_{m(k, j)}(t)}.
            \]
            }
            \If {the arriving agent is matched to an agent in queue $i$}
                \State {$Q_i(t) \leftarrow Q_i(t) - 1$}
            \Else
                \State {$Q_j(t) \leftarrow Q_j(t) + \indc{j \notin \calA_+}$}
            \EndIf
        \EndFor
    \end{algorithmic}
\end{algorithm}

In this subsection, we propose a global availability-based policy \emph{probabilistic matching} ($\RG)$, which is formally presented in \prettyref{alg:random-greedy}, that achieves optimal all-time regret.
Fix $t > 0$, and we describe the dynamics of the policy at time $t$.
Define $\calU_+(t)$ and $\calU_0(t)$
respectively as the set of non-empty queues and the set of empty queues at the beginning of time $t$.
Define a new arrival vector $\tilde{\lambda}(t)$ as in Line~\ref{line:def-lambda-new},
which does not necessarily satisfy $\sum_{i \in \calA} \tilde{\lambda}_i(t) = 1$.
Recall that $(z^*, s^*)$ is a non-degenerate optimal solution of $\SPP(\lambda)$.
Since $\|\lambda - \tilde{\lambda}(t)\|_1 \leq \epsilon$, by \prettyref{prop:gpg-pertur}, $\SPP(\tilde{\lambda}(t))$ has an optimal solution $(\tilde{z}(t), \tilde{s}(t))$ with the same basic activities as $(z^*, s^*)$.

Suppose an agent of type $j \in \calA$ arrives at time $t$, and $j$ has at least one non-empty neighboring queue; otherwise, there are no matches we can perform.
For each non-empty neighboring queue $i$ of $j$, we match the arriving agent to an agent of type $i$ with probability proportional to $\tilde{z}_{m(i,j)}(t)$.
Finally, we update $Q(t)$ accordingly.

The following theorem bounds the all-time regret of $\RG$, whose proof is presented in \prettyref{sec:ran}.

\begin{theorem}[Probabilistic matching]\label{thm:random-greedy}
    Suppose that $\calG$ satisfies the GPG condition with $\epsilon$ defined as \eqref{eq:gpg}.
    Then, $\RG$ is global and availability-based, and satisfies
    \begin{align*}
        \regret(\RG, T)
        \leq O \left( r_{\max} \cdot \frac{n^{2.5}}{\epsilon} \right),
    \end{align*}
    where $r_{\max} = \max_{m \in \calM_+} r_m$.
\end{theorem}

The proof of \prettyref{thm:random-greedy} proceeds by showing that the standard quadratic Lyapunov function exhibits a negative one-step drift under the matching probabilities chosen by $\PM$.

\subsection{Local availability-based policy}
\label{sec:local-avail-result}

In this subsection, we turn to local availability-based policies, assuming that the matching network $(\calA, \calM)$ is acyclic (i.e., it forms a tree). In this setting, \cite[Lemma 3.1]{kerimov2025optimality} establishes the existence of a unique under-demanded queue, denoted by $r$, allowing us to view $(\calA, \calM)$ as a tree rooted at $r$. We focus on \emph{static priority policies}, a simple yet powerful family of local availability-based policies widely studied in prior work (see, e.g., \cite{moyal2017instability,aveklouris2025matching,kerimov2025optimality}).

\begin{definition}[Static priority policy]\label{def:static-priori}
    For each $i \in \calA$, let $\succ_i$ be a strict ordering over a subset of matches $\calM(i) \subseteq \calM$ that contain $i$. The  {static priority policy with priority orders $(\succ_i)_{i \in \calA}$} is a policy that, upon the arrival of an agent of type $i \in \calA$, performs at most one available match in $\calM(i)$. The policy chooses the highest priority match according to $\succ_i$ whenever one or more matches in $\calM(i)$ are available.
\end{definition}

First, we revisit the tree priority ($\TP$) policy proposed by \cite{kerimov2025optimality}. $\TP$ is a static priority policy where $\calM(i) = \{m(i, j)\}_{j \in \calN(i)}$ includes all matchings that involve $i$. The priority orders are defined such that $m(i, j) \succ_i m(i, P(i))$ for all $i \in \calA_0$ and $j \in \calC(i)$.
In other words, upon the arrival of an agent of type $i$, $\TP$ first attempts to match this agent to an agent in the children of $i$; if no child node is available, $\TP$ then attempts to match this agent to an agent in its parent.

While \cite{kerimov2025optimality} establishes that the regret of $\TP$ is independent of $T$, their analysis does not characterize how this regret depends on the network structure or the GPG parameter $\epsilon$. To address this gap, our contribution here is to provide the first explicit regret bound for $\TP$ in the following theorem, whose proof is presented in \prettyref{sec:proof-regret-static-priority}.

\begin{theorem}[Tree priority]\label{thm:regert-static-priority}
    Suppose that $\calG$ satisfies the GPG condition with $\epsilon$ defined as \eqref{eq:gpg}, and the graph $(\calA, \calM)$ is acyclic. Then,
    \begin{align*}
        \regret(\TP, T)
        \leq O \left( r_{\max} \cdot \frac{n^2 2^{d_r}}{\epsilon} \left(1 + \frac{1}{\epsilon}\right)^{\lfloor (d_r - 1) / 2 \rfloor} \right),
    \end{align*}
    where $r_{\max} = \max_{m \in \calM_+} r_m$ and $d_r$ is the depth of the tree $(\calA, \calM)$ when rooted at $r$.
\end{theorem}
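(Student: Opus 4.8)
By \Cref{lmm:opt-test}, it suffices to show that under $\SP$ only active matches are ever used (immediate since we assume $\calM = \calM_+$) and that $\sum_{i \in \calA_0} \Expect[Q_i(t)] \le B$ for all $t$, with $B = \tfrac{n 2^{d_r}}{\epsilon}(1+1/\epsilon)^{\lfloor(d_r-1)/2\rfloor}$; the factor $n$ and $r_{\max}$ then come out of that lemma. Moreover, by \Cref{lmm:diff-queue-sta}, it is enough to establish the bound $\tfrac12 B$ on $\Expect_\pi[\lVert Q(0)\rVert_1]$ under the stationary distribution $\pi$, provided $\SP$ is consistent — which I will obtain from the consistency results of \Cref{sec:con-consis} (\Cref{cor:consis-priori-sub-model}), since static priority policies on trees fall into the class covered there. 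So the whole theorem reduces to a stationary-regime queue-length bound for $\SP$.

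The core of the argument is a layered Lyapunov analysis along the tree rooted at $r$. Index the nodes by their depth from $r$: $\calA_0$ splits into levels $L_1, \dots, L_{d_r}$. The key structural observation from the construction of $\succ$ is that for a node $i$ at depth $\ell$, the match to its parent has \emph{lower} priority than the matches to its children, so an agent at $i$ gets "pushed down the tree" — a waiting agent at $i$ is only matched upward (toward $r$) if all of $i$'s downward matches are currently unavailable, i.e.\ all children queues are empty. I plan to build a weighted Lyapunov function $V(q) = \sum_{i \in \calA_0} w_i q_i$ with weights $w_i$ growing geometrically in the depth of $i$ (roughly $w_i \asymp (1+1/\epsilon)^{\lfloor(\ell-1)/2\rfloor}$ or a similar level-dependent scaling, times $2^{(\text{depth})}$ to absorb branching), chosen so that the drift $\Expect[V(Q(t)) - V(Q(t-1)) \mid Q(t-1) = q]$ is bounded above by $-c$ whenever $V(q)$ is large, for an explicit constant $c$ depending on $\epsilon$. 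The geometric weighting is what produces the $(1+1/\epsilon)^{\lfloor(d_r-1)/2\rfloor}$ factor. The GPG condition enters through \Cref{prop:gpg-pertur}: the restriction $Mz+s=\lambda$ with $z^*_m \ge \epsilon$ on active matches and $s^*_i \ge \epsilon$ on $\calA_+$ gives, for each over-demanded node $i$, a quantitative "net inflow deficit" — the neighbors of $i$ that feed into it do so at total rate at least $\epsilon$ more than the rate at which $i$ fills — which is the source of the negative drift. Standard Foster–Lyapunov / Pemantle–Rosenthal-type moment bounds (e.g.\ the $c$-bounded-increment drift lemma) then convert the drift condition into $\Expect_\pi[V(Q(0))] \le O(1/c) = O(\epsilon^{-1})$ up to the weight factors, and dividing by $\min_i w_i = 1$ yields the claimed bound on $\Expect_\pi[\lVert Q\rVert_1]$.

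I expect the main obstacle to be the \textbf{drift computation for interior levels}, specifically choosing the weights $w_i$ so that the positive contributions — when an arrival at a child of $i$ triggers an upward match that empties a child queue but leaves $i$'s queue unchanged, or when a downward match is blocked — are dominated by the negative contribution from node $i$'s own deficit. This is delicate because the event that $i$'s upward match fires is itself state-dependent (it requires all of $i$'s children queues empty), so one must carefully condition on the local queue configuration and use property \ref{P:greedy-pro-3} (neighboring queues cannot both be positive) to rule out many cross terms. The branching structure forces the $2^{d_r}$ factor: a node can have several children, each contributing, so the parent's weight must beat the \emph{sum} of children weights, forcing at least a doubling per level. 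Getting the recursion $w_{\text{parent}} \ge (1+1/\epsilon)\sum_{\text{children}} w_{\text{child}} + O(1)$ to close with the stated bound, and verifying the base case at the leaves (where agents are simply matched greedily with their unique parent edge and the deficit argument is cleanest), is where the careful bookkeeping lies; the rest is an application of the cited Lyapunov and consistency machinery.
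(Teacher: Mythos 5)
Your reduction is correct and matches the paper: you invoke \Cref{lmm:opt-test} to turn the regret bound into a queue-length bound, then \Cref{lmm:diff-queue-sta} together with the consistency of static priority policies (\Cref{cor:consis-priori-sub-model}) to reduce to the stationary regime. The gap is in the Lyapunov analysis, and it is substantive.

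You propose a \emph{linear} Lyapunov function $V(q)=\sum_i w_i q_i$ with geometrically growing weights. Two things go wrong. First, the moment bound comes out a power worse than claimed. With a linear $V$, the negative drift you can hope to establish is a constant $-c$ per step, not $-c\cdot\lVert q\rVert_1$. The bounded-increment drift lemma (\Cref{lmm:drift-analy-ran-walk}, which is the Pemantle--Rosenthal/Hajek-type tool you appeal to) then yields $\Expect_\pi[V]\lesssim K^2/\eta$ where $K=\max_i w_i$ is the increment bound; dividing by $\min_i w_i=1$ leaves an extra factor of $\max_i w_i\approx(1+1/\epsilon)^{\lfloor(d_r-1)/2\rfloor}$ relative to the claimed bound, i.e.\ a squared dependence on the weight scale. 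To avoid this you would need the drift to scale with the state (so that \Cref{lmm:drift-analy} applies with $f\propto\lVert q\rVert_1$), and a degree-one Lyapunov function cannot produce a state-proportional drift. This is exactly why the paper works with a \emph{quadratic} Lyapunov function.

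Second, and more fundamentally, the linear form does not interact well with the tree dynamics. The paper's Lyapunov function is $\calL(t)=\sum_{i\in\calA^-}\alpha_i\bigl(f_i(Q(t))^+\bigr)^2$ where $f_i(v)=\sum_{j\in\calT^-(i)}(-1)^{d(i,j)+1}v_j$ is an \emph{alternating} sum over the subtree below $i$. The alternating signs are essential: any match performed strictly inside $\calT^-(i)$ touches two adjacent nodes with opposite sign and hence cancels in $f_i$, so the only events that change $f_i$ are arrivals and matches on the edge between $i$ and a child. This is what makes the drift of $f_i$ computable (\Cref{lmm:drift-tree}: $f_i(\lambda-Mx)=-w_i$ when $i$ has a nonempty child, and $\le\lambda_i-w_i$ otherwise). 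In your linear $V$ there is no such cancellation: a match deep in the subtree changes $V$ by $w_j-w_k$ for adjacent $j,k$, and since these events are governed by the state configuration of distant queues, bounding the aggregate drift of $V$ would require controlling a combinatorial explosion of cross terms that the paper's structure is designed to avoid. Relatedly, the coefficients $\alpha_i$ in the paper are not a monotone depth-weighting at all; they are defined through a recursion over \emph{ancestors at even distance} (see \eqref{eq:def-coeffi} and \Cref{lmm:ub-coeffi}), and the $2^{d_r}$ factor arises separately from \Cref{lmm:conn-drift-queue-len}, which converts $\sum_{i\in\calE_1\cap\calE_2}f_i(q)$ into a lower bound on $\lVert q\rVert_1$ --- not from a parent-beats-sum-of-children constraint on linear weights. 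So the recursion $w_{\text{parent}}\ge(1+1/\epsilon)\sum_{\text{children}}w_{\text{child}}+O(1)$ you propose is not the mechanism at work.

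In short: the scaffolding (reduction to stationary queue-length via consistency) is right, but the engine needs to be the alternating-sign quadratic Lyapunov function $\sum_i\alpha_i(f_i^+)^2$, not a linear weighted queue-length; otherwise you lose both tractability of the drift computation and a factor of $\max_i w_i$ in the final bound.
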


To prove \prettyref{thm:regert-static-priority}, we construct a generalized quadratic Lyapunov function with a one-step negative drift by heavily exploiting the hierarchical structure of the tree, where the coefficients of the Lyapunov function are carefully chosen to capture the propagation of queue imbalances along different layers of the tree.
While our analysis follows a similar proof strategy of \cite{kerimov2025optimality}, their analysis fails to track the coefficients of the Lyapunov function.
We overcome this barrier by employing much more refined and delicate arguments.
We further illustrate in \prettyref{sec:warmup} that the current bound can not be further improved by applying the one-step generalized quadratic Lyapunov function.

In addition to the regret bound provided in \prettyref{thm:regert-static-priority} for all acyclic networks, in \prettyref{sec:tp-better-queue3}, we consider $(\calA, \calM)$ to be a path of $4$ nodes and show the regret bound of $O(\epsilon^{-1})$ for $\TP$, matching the $\Omega(\epsilon^{-1})$ lower bound by \cite{kerimov2024dynamic}.
    This new bound improves upon the $O(\epsilon^{-2})$ scaling implied by \prettyref{thm:regert-static-priority} for this specific case, confirming that our Lyapunov analysis for proving \prettyref{thm:regert-static-priority} is not tight.
    However, this tighter bound is achieved via tailored arguments which, as discussed at the end of \prettyref{sec:tp-better-queue3}, do not generalize to arbitrary acyclic networks.

Next, we propose the  truncated tree priority ($\TTP$) policy. $\TTP$ is a static priority policy where, for every $i \in \calA$, the set of allowable matches is restricted to the children of $i$, i.e., $\calM(i) = \{m(i, j)\}_{j \in \calC(i)}$, with $\succ_i$ being an arbitrary order over $\calM(i)$. 
In other words, when an agent of type $i$ arrives, $\TTP$ attempts to match it with an available agent in any child node of $i$, breaking ties according to $\succ_i$. If no such agent exists, the arriving agent joins queue $i$. Crucially, $\TTP$ never attempts to match the arriving agent with agents in the parent node of $i$.

As our main result, we show that $\TTP$ is the first static priority policy to achieve the optimal regret scaling of $O(\epsilon^{-1})$ at all times. This indicates that local availability information is sufficient to achieve optimal performance in acyclic graphs.
The proof of \prettyref{thm:msp} is presented in \prettyref{sec:proof-thm-msp}.

\begin{theorem}[Truncated tree priority]\label{thm:msp}
    Suppose $\mathcal{G}$ satisfies the GPG condition with parameter $\epsilon$ defined as \prettyref{eq:gpg}, and the graph $(\calA, \calM)$ is acyclic. Then,
    \begin{align*}
        \regret(\TTP, T)
        \leq O\left(r_{\max} \cdot \frac{n^2 d_r^2}{\epsilon} \right)\, ,
    \end{align*}
    where $r_{\max} = \max_{m \in \calM_+} r_m$ and $d_r$ is the depth of the tree $(\calA, \calM)$ when rooted at $r$.
\end{theorem}

The matching logic of both $\TP$ and $\TTP$ aligns with the intuition that nodes further from the under-demanded root are more demanded, thus warranting higher priority for their matches. The distinction arises when no child nodes are available: $\TTP$ strictly forbids the arriving agent from matching with its parent, forcing it to wait in the queue, whereas $\TP$ allows the agent to match with its parent as a final resort. We illustrate the distinction between $\TP$ and $\TTP$ by the following example.

\begin{figure}[ht]
\centering
\scalebox{0.7}{
\begin{tikzpicture}[scale=1]
    \node at (-7,0) [circle,draw,minimum size=1cm] (i1) {$\lambda_1$};
    \node at (-4,0) [circle,draw,minimum size=1cm] (i2) {$\lambda_2$};
    \node at (-1,0) [circle,draw,minimum size=1cm] (i3) {$\lambda_3$};
    \node at (2,0) [circle,draw,minimum size=1cm, fill=yellow] (i4) {$\lambda_4$};
    
    \draw (i1)--(i2) node [midway, above, pos=0.5] {$r_1$};
    \draw (i2)--(i3) node [midway, above, pos=0.5] {$r_2$};
    \draw (i3)--(i4) node [midway, above, pos=0.5] {$r_3$};
\end{tikzpicture}}
\caption{A path network where $\mathcal{A}_+=\{4\}$ (the root is indicated with a yellow node).}
\label{fig:illu-tp-ttp}
\end{figure}

    \begin{example}\label{exa:dis-tp-ttp}
        When the matching network forms a path as in \prettyref{fig:illu-tp-ttp}, both $\TP$ and $\TTP$ prioritize the match between any node $i$ and its child $i-1$ over the match between $i$ and its parent $i + 1$.
        For arrival sequence $(3, 2, 1)$ (the first arrival is to queue $3$, then to queue $2$, and then to queue $1$), $\TP$ performs match $m(2, 3)$ upon the arrival to queue $2$, whereas $\TTP$ does not perform this match but instead performs the match $m(1, 2)$ upon the arrival to queue $1$.
    \end{example}

Intuitively, by preventing the arriving agent from matching with the parent node, $\TTP$ ensures that existing agent arrivals at a node do not influence the future dynamics of its subtree.
This structural decoupling enables us to leverage a \emph{multi-step drift analysis} for $\TTP$, overcoming the limitations of the standard one-step drift analysis that yields suboptimal bounds for $\TP$. Specifically, we generalize $\TTP$ to accommodate \emph{fractional arrivals} and construct a geometric Lyapunov function by utilizing two technical ingredients: the Lipschitz continuity of the queue-length process, and the one-step negative drift under \emph{fluid arrivals}. By further applying standard concentration inequalities for the true arrival process, we establish the desired multi-step drift, which yields the optimal $O(\epsilon^{-1})$ bound for $\TTP$—a result that remains elusive for $\TP$.

\section{Analysis}
\label{sec:analysis-all}

In this section, we formally analyze the policies considered in this paper. We present the proofs of Theorems~\ref{thm:random-greedy}, \ref{thm:regert-static-priority}, and~\ref{thm:msp} in Sections~\ref{sec:ran}, \ref{sec:proof-regret-static-priority}, and~\ref{sec:proof-thm-msp_sketch}, respectively. 

To lay the groundwork for these analyses, we first introduce the formal definition of (geometric) Lyapunov functions.

\begin{definition}[Lyapunov function]
    Let $(X(t))_{t \geq 0}$ be a discrete-time Markov chain defined on a complete metrizable state space $\calX$.
    A function $\Phi: \calX \to \reals_{\geq 0}$ is a \emph{Lyapunov function} with drift-size parameter $\gamma > 0$, drift-time parameter $t_0 > 0$, and exception parameter $K$ if
    \begin{align*}
        \sup_{x \in \calX: \Phi(x) > K} \{\Expect_x[\Phi(X(t_0))] - \Phi(x)\}
        \leq - \gamma.
    \end{align*}
    A function $\Phi: \calX \to \reals_{\geq 0}$ is a \emph{geometric Lyapunov function} with a geometric drift size $0 < \gamma < 1$, drift time $t_0$, and exception parameter $K$ if
    \begin{align*}
        \sup_{x \in \calX: \Phi(x) > K} \frac{\Expect_x[\Phi(X(t_0))]}{\Phi(x)}
        \leq \gamma.
    \end{align*}
\end{definition}

Next, we introduce tools for translating queue-length bounds under the steady state to all-time queue-length bounds.
Per Lemma \ref{lmm:opt-test}, since we are concerned with the all-time regret performance of matching policies, it is crucial to bound the expected queue lengths at all times.
However, Lyapunov techniques only allow us to bound the expected queue lengths in the steady state.
We show that, for the matching policies satisfying the following consistency property, queue-length bounds that hold at any time can be directly deduced from the bounds under the steady state.

The consistency property informally posits that, when starting from different initial states with the same arrival process, the difference between the induced states does not grow over time.
Similar concepts also appear in prior work under the names Lipschitz continuity~\cite{moyal2017instability} or non-expansiveness~\cite{baccelli2013elements}.

\begin{definition}[Consistency]\label{def:consis}
    We say that a policy is \emph{consistent} if for all valid initial states $Q(0)$ and $Q'(0)$, for every possible arrival at time $1$, there exists a coupling $\mu$ between $Q(1)$ and $Q'(1)$ such that
    \begin{align*}
        \E_{(Q(1), Q'(1)) \sim \mu} \left[ \norm{Q(1) - Q'(1)}_1 \right] \leq \norm{Q(0) - Q'(0)}_1,
    \end{align*}
    where the randomness of $Q(1)$ and $Q'(1)$ comes from the randomness used by the policy.
\end{definition}

As a consequence of a policy being consistent, every bound for the expected total queue-length under the stationary distribution can be directly translated into an all-time bound.

\begin{lemma}\label{lmm:diff-queue-sta}
    Let $\Pi$ be a consistent policy.
    Suppose that the Markov chain $(Q(t))_{t \geq 0}$ is ergodic, and let $\pi$ be its stationary distribution.
    If $\E_{\pi} [\norm{Q(0)}_1] \leq B$, then $\E[\norm{Q(t)}_1] \leq 2B$ for every $t \geq 0$. 
\end{lemma}

The proof of \prettyref{lmm:diff-queue-sta} is deferred to \prettyref{sec:proof-lmm-diff-queue-sta}.
Moreover, we show in \prettyref{sec:con-consis} that all static priority policies and the longest-queue policy are consistent.

\subsection{Analysis of probabilistic matching policy}
\label{sec:ran}

In this subsection, we prove \prettyref{thm:random-greedy} by largely following a standard analysis for the quadratic Lyapunov function (see, e.g., the proof of \cite[Lemma 5.4]{kerimov2025optimality}).
The global and availability-based properties of $\RG$ come from its description.
Moreover, since $\RG$ always matches the arriving agent whenever at least one of its neighboring queues is non-empty, as stated in the following fact, any two adjacent queues cannot be non-empty at the same time.

\begin{fact}\label{fact:greedy-rg}
    Let $(Q(t))_{t \geq 0}$ be the states induced by $\RG$ under an arbitrary sample path.
    Then, for all $t \geq 0$ and $m(i, j) \in \calM$, $Q_i(t) \cdot Q_j(t) = 0$.
\end{fact}

Then, we analyze the Markov chain $(Q(t))_{t \geq 0}$ by using the following quadratic Lyapunov function: For $t \geq 0$, define
\begin{align*}
    \calL(t)
    \triangleq \sum_{i \in \calA_0} (Q_i(t))^2.
\end{align*}
Fix $t \geq 0$, and we upper bound the drift $\E[\calL(t + 1) - \calL(t) \mid Q(t)]$, where the expectation is taken over the randomness of the arrival process and $\RG$.
For each match $m \in \calM$, let $x_m$ denote the probability that $m$ is performed by $\RG$ at time $t + 1$.
By standard calculation (see, e.g., \cite[Proposition 5.1]{kerimov2025optimality}),
\begin{align}\label{eqn:calc-drift-mp-init}
    \mathbb{E}[\calL(t + 1) - \calL(t) \mid Q(t)]
    \leq 2\langle Q(t), \lambda - Mx \rangle + 1.
\end{align}
Recall by \prettyref{fact:greedy-rg} 
that any two adjacent queues cannot be non-empty at the same time.
For each match $m \in \calM$ that contains types $i$ and $j$ such that $i \in \calU_+(t)$ and $j \in \calU_0(t)$, $m$ is performed at time $t + 1$ if and only if an agent of type $j$ arrives and $\RG$ decides to match this agent with an agent in queue $i$; it follows that
\begin{align*}
    x_m
    &= \lambda_j \cdot \frac{\tilde{z}_m(t)}{\sum_{k \in \calN(j) \cap \calU_+(t)} \tilde{z}_{m(k, j)}(t)}
    \geq \lambda_j \cdot \frac{\tilde{z}_m(t)}{\sum_{k \in \calN(j)} \tilde{z}_{m(k, j)}(t)} \\
    &= \lambda_j \cdot \frac{\tilde{z}_m(t)}{(M \tilde{z}(t))_j}
    \geq \lambda_j \cdot \frac{\tilde{z}_m(t)}{\tilde{\lambda}_j(t)}
    = \lambda_j \cdot \frac{\tilde{z}_m(t)}{\lambda_j}
    = \tilde{z}_m(t),
\end{align*}
where the second inequality holds since $(\tilde{z}(t), \tilde{s}(t))$ is a feasible solution of $\SPP(\tilde{\lambda}(t))$.
Hence, for each queue $i \in \calU_+(t)$,
\begin{align}\label{eqn:mxi-mp-lambdai}
    (Mx)_i
    = \sum_{j \in \calN(i)} x_{m(i, j)}
    \geq \sum_{j \in \calN(i)} \tilde{z}_{m(i, j)}(t)
    = (M \tilde{z}(t))_i
    = \tilde{\lambda}_i(t),
\end{align}
where the last equality holds since $(\tilde{z}(t), \tilde{s}(t))$ has the same basic activities as $(z^*, s^*)$, and hence $\tilde{s}_i(t) = s_i^* = 0$ (recall that $\tilde{s}_i(t)$ is the slack variable in the constraint $(M\tilde{z}(t))_i + \tilde{s}_i(t) = \tilde{\lambda}_i(t)$ of $\SPP(\tilde{\lambda}(t))$).
Combining \eqref{eqn:calc-drift-mp-init} and \eqref{eqn:mxi-mp-lambdai}, and since $Q_i(t) = 0$ for every $i \in \calU_0(t)$, we get
\begin{align}\label{eq:drift-ran}
    \mathbb{E}[\calL(t + 1) - \calL(t) \mid Q(t)]
    \leq 2\sum_{i \in \calU_+(t)} Q_i(t) (\lambda_i - \tilde{\lambda}_i(t)) + 1
    = -2\frac{\epsilon}{n} \norm{Q(t)}_1 + 1.
\end{align}

Next, we apply the negative drift \eqref{eq:drift-ran} and \prettyref{lmm:drift-analy-ran-walk} to upper bound the expected total queue length.
Fix $t \geq 0$. 
The variation of $\|Q(t)\|_2$ is bounded by
\begin{align*}
    \norm{Q(t+1)}_2 - \norm{Q(t)}_2
    \leq \norm{Q(t+1)-Q(t)}_2
    = \norm{\Delta A(t + 1) - M \Delta D(t + 1)}_2
    \leq 1,
\end{align*}
where the last inequality holds since at most one match is performed at time $t + 1$, and the performed match must involve the arriving agent.
Then, we bound the expected decrease of $\|Q(t)\|_2$, i.e., $\E[\|Q(t + 1)\|_2 - \|Q(t)\|_2 \mid Q(t)]$.
When $\norm{Q(t)}_2 \geq n / \epsilon$, we get
\begin{align*}
    \E\left[ \norm{Q(t + 1)}_2 - \norm{Q(t)}_2 \mid Q(t) \right]
    &\leq \E\left[ \frac{\norm{Q(t + 1)}^2_2 - \norm{Q(t)}_2^2}{2\norm{Q(t)}_2} \; \Bigg\lvert \; Q(t) \right] \\
    &\leq \E\left[ \frac{-2\frac{\epsilon}{n} \norm{Q(t)}_1 + 1}{2\norm{Q(t)}_2} \; \Bigg\lvert \; Q(t) \right]
    \leq -\frac{\epsilon}{2n},
\end{align*}
where the first inequality holds by $x - y \leq (x^2 - y^2) / (2y)$ for $x \in \reals$ and $y > 0$ \cite[Lemma 3.6]{DBLP:journals/jacm/CsirikJKOSW06}, the second inequality holds by \eqref{eq:drift-ran}, and the last inequality holds since $\|Q(t)\|_1 \geq \|Q(t)\|_2$.

Now, we apply \prettyref{lmm:drift-analy-ran-walk} on $\Psi(t) = \norm{Q(t)}_2$ with $K = 1$, $\eta = \epsilon / (2n)$, and $B = n / \epsilon$ to get
\begin{align*}
    \E\left[ \norm{Q(t)}_2 \right]
    = \E \left[ \norm{\Psi(t)} \right]
    \leq 2 + \frac{n}{\epsilon} + \frac{1 - \frac{\epsilon}{2n}}{\frac{\epsilon}{n}}
    = \frac{3}{2} + \frac{2n}{\epsilon}
    \leq \frac{3n}{\epsilon},
\end{align*}
where the last inequality holds since $\epsilon \leq 1$ and $n \geq 2$.
Finally, the all-time regret bound for $\RG$ follows from $\norm{Q(t)}_1 \leq \sqrt{n} \norm{Q(t)}_2$ and \prettyref{lmm:opt-test}.

\subsection{Analysis of tree priority policy on acyclic graphs}
\label{sec:proof-regret-static-priority}

In this subsection, we prove \prettyref{thm:regert-static-priority}, and all the omitted proofs can be found in \prettyref{sec:proof-static}.

For all $i, j \in \calA$, let $d(i, j)$ be the (unweighted) distance between $i$ and $j$. 
For each $i \in \calA$, define $d_i \triangleq \max_{j \in \calT(i)} d(i, j)$ as the depth of the subtree rooted at $i$.
For each $i \in \calA$, denote $\calP(i) \triangleq \{j \in \calA_0 \mid i \in \calT^-(j) \text{ with } d(j, i) \equiv 0 \pmod{2} \}$ as the set of over-demanded ancestors of $i$ whose depth has the same parity with $i$.

For each $i \in \calA$, define
\begin{align}\label{eqn:def-eps-i}
    \epsilon_i
    \triangleq \begin{cases}
        z_{m(i, P(i))}^*, & i \in \calA_0, \\
        s_i^*, & i \in \calA_+.
    \end{cases}
\end{align}
By \eqref{eq:gpg}, we have $\epsilon = \min_{i \in \calA} \epsilon_i$, and $\epsilon \leq \epsilon_i \leq 1$ for every $i \in \calA$.
The following lemma characterizes $\epsilon_i$ in terms of $\lambda$.

\begin{lemma}[Theorem 4.1 in \cite{kerimov2025optimality}]\label{lem:char-eps-i}
        Suppose that $\calG$ satisfies the GPG condition, and the graph $(\calA, \calM)$ is acyclic.
        Then, for every $i \in \calA_0$,
        \begin{align*}
            \epsilon_i
            = \sum_{j \in \calT(i)} (-1)^{d(i, j)} \lambda_j \,.
        \end{align*}
    \end{lemma}
We then recursively define the coefficients of our Lyapunov function: For every $i \in \calA_0$, let
\begin{align}\label{eq:def-coeffi}
    \alpha_i
    \triangleq 1+ \frac{1}{\epsilon_i} \sum_{j \in \calP(i)} \alpha_j \left(\lambda_j-\epsilon_j\right) \,.
\end{align}
Note that for every $i\in \{r\} \cup \calC(r)$, we have $\calP(i) = \emptyset$ and hence $\alpha_i = 1$. We consider the following generalized quadratic Lyapunov function: For every $t \geq 0$, define
\begin{align}
    \calL(t)
    \triangleq \sum_{i \in \calA_0} \alpha_i \left( f_i(Q(t)) ^+ \right)^2 \,, \label{eq:L_t}
\end{align}
where for every $i \in \calA_0$ and $v \in \reals^n $, 
\begin{align}
    f_i(v) \triangleq \sum_{j \in \calT^-(i)} (-1)^{d(i, j) + 1} v_j \,. \label{eq:f_i}
\end{align}
Intuitively, to ensure a one-step negative Lyapunov drift, we need to recursively define the coefficients such that the positive drift appearing at lower levels of the tree (i.e., the levels further from the root) is dominated by the negative drift at higher levels.

Notably, our Lyapunov function recovers the one used in \cite{kerimov2025optimality} if we do not take the positive parts.
As noted in \cite{kerimov2025optimality}, the coefficients $\alpha_i$'s are intractable when using their Lyapunov function.
As we will show, this modification to the Lyapunov function is crucial for deriving an explicit upper bound for the queue length.
We reiterate that our proof significantly differs from \cite{kerimov2025optimality} and relies on a more delicate drift analysis.
It is also unclear how to adapt their arguments to give an explicit queue-length bound even using our Lyapunov function.

The following lemma upper bounds the coefficients $\alpha_i$'s. 
\begin{lemma}\label{lmm:ub-coeffi}
    For every $i \in \calA_0$, $\alpha_i \leq (1 + \epsilon^{-1})^{\lfloor (d(r, i) - 1) / 2 \rfloor}$.
\end{lemma}

In the following proposition, we upper bound the drift $\E[\calL(t + 1) - \calL(t) \mid Q(t)]$, where the expectation is taken over the randomness of the arrival at time $t + 1$.

\begin{proposition}\label{prop:tp-one-step-drift-final}
    It holds that
    \begin{align*}
        \mathbb{E}[\calL(t + 1) - \calL(t) \mid Q(t)]
        \leq -\frac{\epsilon}{2^{d_r - 1}} \norm{Q(t)}_1 + n\left(1 + \frac{1}{\epsilon}\right)^{\lfloor (d_r - 1) / 2 \rfloor}.
    \end{align*}
    Moreover, the Markov chain $(Q(t))_{t \geq 0}$ is ergodic.
\end{proposition}

By \prettyref{prop:tp-one-step-drift-final}, the Markov chain $(Q(t))_{t \geq 0}$ is ergodic, and we denote its stationary distribution as $\pi$.
By applying \prettyref{lmm:drift-analy} with $f(Q(t)) = \frac{\epsilon}{2^{d_r - 1}} \norm{Q(t)}_1$, $g(Q(t)) = \calL(t)$, and $c = n(1 + \epsilon^{-1})^{\lfloor (d_r - 1) / 2 \rfloor}$, the drift bound in \prettyref{prop:tp-one-step-drift-final} yields
\begin{align*}
    \E_{\pi} \left[ \norm{Q(0)}_1 \right]
    \leq \frac{n2^{d_r - 1}}{\epsilon} \left(1 + \frac{1}{\epsilon}\right)^{\lfloor (d_r - 1) / 2 \rfloor}.
\end{align*}
By \prettyref{prop:consis-priori-sub-model} and \prettyref{lmm:diff-queue-sta}, it follows that, for every $t \geq 0$,
\begin{align*}
    \E \left[ \norm{Q(t)}_1 \right]
    \leq \frac{n2^{d_r}}{\epsilon} \left(1 + \frac{1}{\epsilon}\right)^{\lfloor (d_r - 1) / 2 \rfloor}.
\end{align*}
Finally, the regret bound for $\TP$ follows from \prettyref{lmm:opt-test}.

\subsection{Analysis of truncated tree priority policy on acyclic graphs}
\label{sec:proof-thm-msp_sketch}

In this subsection, we present the proof of \prettyref{thm:msp}, which follows a multi-step drift analysis.
All the omitted proofs in this subsection can be found in \prettyref{sec:proof-proof-thm-msp}.

Our primary goal is to construct a geometric Lyapunov function and then apply the following lemma to derive the desired queue-length bound under the stationary distribution.\footnote{\prettyref{prop:thm5-gz06} is originally established for continuous-time Markov chains, but its proof also holds for discrete-time Markov chains.}

\begin{lemma}[Theorem 5 in \cite{gamarnik2006validity}]\label{prop:thm5-gz06}
    Let $(X(t))_{t \geq 0}$ be a discrete-time Markov chain defined on a complete metrizable state space $\calX$ that possesses a stationary distribution $\pi$.
    Suppose that $\Phi: \calX \to \reals_{\geq 0}$ is a geometric Lyapunov function with parameters $\gamma$, $t_0$, and $K$.
    Then,
    \begin{align*}
        \E_{\pi}[\Phi(X)]
        \leq \frac{K}{1 - \gamma} \cdot \sup_{x \in \calX} \frac{\Expect_x[\Phi(X(t_0))]}{\Phi(x)}.
    \end{align*}
\end{lemma}

We will construct the geometric Lyapunov function by the following proposition, which is a modified version of \cite[Theorem 6]{gamarnik2006validity}.

\begin{proposition}\label{prop:modified-thm6-gz06}
    Let $(X(t))_{t \geq 0}$ be a discrete-time Markov chain defined on a complete metrizable state space $\calX$.
    Suppose $\Phi$ is a Lyapunov function with $\delta$-truncated drift for $\delta>0$:
    \begin{align}
        \sup_{x\in\mathcal{X}:\Phi(x)>K} \mathbb{E}_x \left[\max\left\{-\delta, \Phi(X(t_0)) - \Phi(x) \right\}\right] \leq -\gamma\, . \label{eq:drift_negative}
    \end{align}
    Moreover, suppose that there exists $\theta>0$ such that
    \begin{align}
        \theta L_3(\delta, \theta, t_0) \leq \gamma\, , \label{eq:L_3}
    \end{align}
    where
    \begin{align}\label{eqn:def-sec-order-exp-moment}
        L_3(\delta,\theta,t)
        \triangleq\ \sup_{x\in\mathcal{X}}\mathbb{E}_x
        \left[
        (\max\{-\delta, \Phi(X(t)) - \Phi(x)\})^2
    \exp\left(\theta(\Phi(X(t)) - \Phi(x))^+\right)
        \right].
    \end{align}
    Then, $\exp(\theta \cdot \Phi(\cdot))$ is a geometric Lyapunov function with geometric drift size parameter $1 - \gamma \theta/2$, drift time parameter $t_0$ and exception parameter $e^{\theta K}$. 
\end{proposition}

To apply \prettyref{prop:modified-thm6-gz06}, we need a Lyapunov function with two multi-step drift properties: a negative truncated drift~\eqref{eq:drift_negative} and a bounded second-order exponential moment~\eqref{eq:L_3}.
For our choice of the Lyapunov function, we use
\begin{align}
    \Phi(Q) \triangleq \sum_{i \in \calA_0} Q_i \label{eq:Phi_Q}
\end{align}
defined for $Q \in \reals_{\geq 0}^n$.

\paragraph{$\TTP$ under fractional arrivals.}
To establish the multi-step drift properties for $\Phi$, we need to generalize $\TTP$ to accommodate \emph{fractional arrivals}, enabling us to utilize $\TTP$'s desirable drift properties under the \emph{fluid arrival}.
Specifically, $\Delta A_i(t)$ can take any value in $\reals_{\geq 0}$, and we allow arrivals at multiple nodes at each period.
We also allow performing fractional matches, i.e., $\Delta D_m(t)$ can take any value in $\reals_{\geq 0}$.
We generalize $\TTP$ to match fractional arrivals as follows.
At each period, we visit the nodes in $\calA$ in an order such that every node $i \in \calA_0$ is visited before the parent of $i$.
For the current node $i$, we match the new arrivals at $i$ to as many agents in $\calC(i)$ as possible.
Notice that, under the true arrival, this generalized version of $\TTP$ behaves identically as the original $\TTP$.
We emphasize that this generalization only serves the purpose of analysis, and $\TTP$ does not perform fractional matches under the true arrival.

The following two propositions serve as the main technical ingredients of our analysis.
The first proposition shows that, for two fractional arrival trajectories, when starting from the same initial state, the difference between the respective queue-length trajectories induced by $\TTP$ is controlled by the difference between the arrival trajectories.

\begin{proposition}[Lipschitz continuity]\label{prop:lipschitz-msp}
    For two fractional arrival trajectories $(A(t))_{t \geq 0}$ and $(A'(t))_{t \geq 0}$, let $(q(t))_{t \geq 0}$ and $(q'(t))_{t \geq 0}$ be the respective queue-length trajectories induced by $\TTP$ with $q(0) = q'(0)$.
    Then, for every $t \geq 0$,
    \begin{align}\label{eq:lip-msp-exp}
        \abs{\sum_{i \in \calA_0} q_i(t) - \sum_{i \in \calA_0} q_i'(t)}
        \leq 2(d_r + 1) \sum_{i \in \calA} \max_{0 \leq s \leq t} \abs{A_i(s) - A_i'(s)},
    \end{align}
    where $d_r$ is the depth of the tree $(\calA, \calM)$ when rooted at $r$.
\end{proposition}

The next proposition asserts that, under the fluid arrival, i.e., $A_i(t) = t \lambda_i$ for all $i \in \calA$ and $t \geq 0$, $\Phi$ exhibits a negative one-step drift for the queue-length process induced by $\TTP$.
We will use $(q(t))_{t \geq 0}$ to denote the queue-length process induced by $\TTP$ under the fluid arrival.

\begin{proposition}[Negative drift under fluid arrival]\label{prop:fluid-drift-msp}
    Suppose that $A_i(t) = t\lambda_i$ for all $i \in \calA$ and $t \geq 0$.
    Then, under $\TTP$, for every $t \geq 0$,
    \begin{align}\label{eqn:fluid-drift-prop}
        \Phi(q(t + 1))
        \leq \left( \Phi(q(t)) - \epsilon \right)^+.
    \end{align}
\end{proposition}

With the above two propositions, we can now show that, under the true arrival, $\Phi$ satisfies the multi-step drift properties required by \prettyref{prop:modified-thm6-gz06}, yielding the desired geometric Lyapunov function.
Specifically, by the one-step negative drift of $\Phi$ under the fluid arrival (\prettyref{prop:fluid-drift-msp}), the total queue length decreases linearly under the fluid arrival (recall that $\Phi$ encodes the total queue length).
Since the true arrival process concentrates around the fluid arrival process by standard concentration inequalities (\prettyref{lem:exp-moment-partialsums}), the Lipschitz-continuity property (\prettyref{prop:lipschitz-msp}) implies that the queue-length process under the true arrival also concentrates around the queue-length process under the fluid arrival.
Choosing an appropriate step size then concludes the multi-step drift properties of $\Phi$ under the true arrival.
We formalize the above arguments in the following proposition.

\begin{proposition}\label{prop:geometric-lyapunov}
Let $(Q(t))_{t \geq 0}$ be the queue-length process induced by $\TTP$ under the true arrival.
Then, there exist universal constants $n_0, \kappa_0 > 0$ such that for any $n \geq n_0$, any $0< \epsilon < 1$, and any
\begin{align}\label{eq:theta-condition}
    \theta \leq \frac{\kappa_0}{(d_r+1)^2 n} \,,
\end{align}
the function $V(Q) \triangleq \exp(\epsilon \theta \cdot \Phi(Q))$ for $Q \in \reals_{\geq 0}^n$ is a geometric Lyapunov function for $(Q(t))_{t \geq 0}$ with:
\begin{itemize}
    \item Geometric drift size parameter: $1 - \frac{5}{2}(d_r+1)^2 n \theta$,
    \item Drift time parameter: $K \triangleq t_0 \epsilon^{-2}$, where $t_0 \triangleq 25(d_r+1)^2 n$,
    \item Exception parameter: $\exp( \theta t_0 )$.
\end{itemize}
\end{proposition}

Now, we are ready to finish the proof of \prettyref{thm:msp}.
{We sketch the proof argument here and defer the complete proof to \prettyref{sec:proof-thm-msp}.

\begin{proof}[Proof sketch of \prettyref{thm:msp}.]
    By \prettyref{prop:geometric-lyapunov}, under the true arrival, $V(Q) = \exp(\epsilon \theta \cdot \Phi(Q))$ is a geometric Lyapunov function with  geometric drift size parameter $1 - \frac{5}{2}(d_r+1)^2 n \theta$, drift time parameter $K = t_0 \epsilon^{-2}$, and exception parameter $\exp(\theta t_0)$. By \prettyref{prop:thm5-gz06}, we get
    \begin{align}\label{eq:mgf-bound-main}
        \mathbb{E}_{\pi}\left[ e^{\epsilon \theta \cdot \Phi(Q)} \right]
        \leq \frac{2 e^{\theta t_0} \phi(K)}{5(d_r+1)^2 n\theta}
        \leq O(1) \, ,
    \end{align}
    where $\phi(K) \triangleq \sup_{x \in \reals_{\geq 0}^n} \mathbb{E}_{x}\left[ e^{\epsilon\theta(\Phi(Q(K)) - \Phi(x))} \right]$ is the maximum expected overshoot, and the last inequality is derived by applying Propositions~\ref{prop:lipschitz-msp} and~\ref{prop:fluid-drift-msp}.
    By Jensen's inequality, 
    \begin{align}
    \mathbb{E}_{\pi}[\Phi(Q)] 
    & \leq \frac{1}{\epsilon \theta} \log \mathbb{E}_{\pi}[e^{\epsilon \theta \cdot \Phi(Q)}]
        \leq O\left(\frac{n d_r^2}{\epsilon }\right)\,, \label{eqn:ttp-queue-len-sta-main}
    \end{align}
    where the second inequality holds by \eqref{eq:mgf-bound-main} and the definition of $\theta$.

    Finally, $\TTP$ is consistent by \prettyref{prop:consis-priori-sub-model}, and hence applying \prettyref{lmm:diff-queue-sta} yields
    \begin{align*}
        \Expect \left[ \sum_{i \in \calA_0} Q(t) \right]
        \leq 2 \Expect_{\pi} \left[ \sum_{i \in \calA_0} Q(0) \right]
        = 2\Expect_\pi [\Phi(Q)]
        \leq O \left( \frac{n d_r^2}{\epsilon} \right)
    \end{align*}
    for every $t \geq 0$, where the last inequality holds by \eqref{eqn:ttp-queue-len-sta-main}.
    Combining the above displayed equation with \prettyref{lmm:opt-test} concludes the proof.
\end{proof}
}

\begin{remark}
Our analysis crucially relies on the Lipschitz-continuity property of $\TTP$ (\prettyref{prop:lipschitz-msp}).
While it is difficult to directly establish drift properties under the true arrival, drift properties under the fluid arrival appear much more tractable.
In addition, the proximity of the fluid arrival process and the true arrival process can be derived via standard concentration bounds.
Applying the Lipschitz-continuity property then yields multi-step drift properties under the true arrival.
Since our analysis does not require properties of the true arrival process other than it being well-concentrated, our result for $\TTP$ can be potentially extended to arrival processes beyond the multinomial one, as long as suitable concentration estimates are available.

While the Lipschitz-continuity property might seem natural, establishing it turns out to be notoriously challenging for most policies, with a notable example being Jackson networks (see, e.g., \cite[Theorem 1]{gamarnik2006validity}).
A salient feature of $\TTP$ is its ``feedforward-ness'': an arriving agent is never matched with its parent, so the current state of any node does not affect the future dynamics in its subtree.
In contrast, by allowing the match between the arriving agent and its parent, $\TP$ would create ``feed-back'' interactions, which appear to be the main obstacle to establishing its Lipschitz-continuity property.
In fact, for systems with such feed-back interactions, it remains challenging to establish Lipschitz-continuity properties even for small networks; to the best of our knowledge, similar properties are known only for very simple networks such as a two-node system~\cite{dupuis2000multiclass}.
 \end{remark}

\section{Numerical experiments}\label{sec:simulations}

In this section, we conduct numerical experiments to evaluate the performance of the proposed greedy policies and compare them with benchmark policies from the literature. All simulations are based on $1000$ replications.

We first adopt the examples from \cite[Figures 5 and 10]{kerimov2025optimality} as our first set of instances (see \prettyref{fig:twoexamples}). 
In Figure \ref{fig:regretcompall}, we simulate $\mathbf{LG}$, $\RG$, $\TP$, and $\TTP$ on these two networks, and the experiments suggest that there is no clear hierarchy between policies, i.e., there are instances, where $\RG$ and $\LQ$ are dominated both by $\TP$ and $\TTP$ (as shown in Figure \ref{fig:regret_a}), and vice versa (as shown in Figure \ref{fig:regret_b}). Nevertheless, the performance gap between all policies are remarkably small.

\begin{figure}[htbp]
\centering
\begin{subfigure}{\textwidth}
\centering
\scalebox{0.7}{
\begin{tikzpicture}[scale=1, clip]
  \node at (-7,0) [circle,draw,minimum size=1.5cm] (i1) {$\lambda_1=\lambda$};
  \node at (-4,0) [circle,draw,minimum size=1.5cm] (i2) {$\lambda_2=2\lambda$};
  \node at (-1,0) [circle,draw,minimum size=1.5cm] (i3) {$\lambda_3=4\lambda$};
  \node at (2,0)  [circle,draw,minimum size=1.5cm] (i4) {$\lambda_4=6\lambda$};
  \node at (5,0)  [circle,draw,minimum size=1.5cm] (i5) {$\lambda_5=8\lambda$};
  \node at (9,0)  [circle,draw,minimum size=1.5cm, fill=yellow] (i6) {$\lambda_6=7\lambda$};
  \draw (i1)--(i2) node [midway, above] {$r_1=10$};
  \draw (i2)--(i3) node [midway, above] {$r_2=5$};
  \draw (i3)--(i4) node [midway, above] {$r_3=3$};
  \draw (i4)--(i5) node [midway, above] {$r_4=2$};
  \draw (i5)--(i6) node [midway, above] {$r_5=1$};
\end{tikzpicture}
}
\caption{\footnotesize  $\epsilon=\lambda=1/28$ and $\mathcal{A}_+=\{6\}$ (\cite[Figure $5$]{kerimov2025optimality}). }
\label{fig:pathofsix}
\end{subfigure}

\vspace{0.8em}

\begin{subfigure}{\textwidth}
\centering
\scalebox{0.7}{
\begin{tikzpicture}[scale=1, clip]
  \node at (-7,0) [circle,draw,minimum size=1.5cm] (i1) {$\lambda_1=\lambda$};
  \node at (-4,0) [circle,draw,minimum size=1.5cm] (i2) {$\lambda_2=2\lambda$};
  \node at (-1,0) [circle,draw,minimum size=1.5cm] (i3) {$\lambda_3=3\lambda$};
  \node at (2,0)  [circle,draw,minimum size=1.5cm] (i4) {$\lambda_4=4\lambda$};
  \node at (6,0)  [circle,draw,minimum size=1.5cm, fill=yellow] (i5) {$\lambda_5=2.1\lambda$};
  \draw (i1)--(i2) node [midway, above] {$r_1=1$};
  \draw (i2)--(i3) node [midway, above] {$r_2=2$};
  \draw (i3)--(i4) node [midway, above] {$r_3=3$};
  \draw (i4)--(i5) node [midway, above] {$r_4=2$};
\end{tikzpicture}
}
\caption{\footnotesize $\lambda \approx 0.08$, $\epsilon = 0.1\lambda$, and $\mathcal{A}_+ = \{5\}$ 
(Figure 10 from \cite[Figure $10$]{kerimov2025optimality})}
\label{fig:pathoffive}
\end{subfigure}
\caption{\footnotesize Two example matching networks satisfying the GPG condition.}
\label{fig:twoexamples}
\end{figure}

\begin{figure}[h]
    \centering
    \begin{subfigure}{0.48\textwidth} 
        \centering
        \includegraphics[width=1\linewidth]{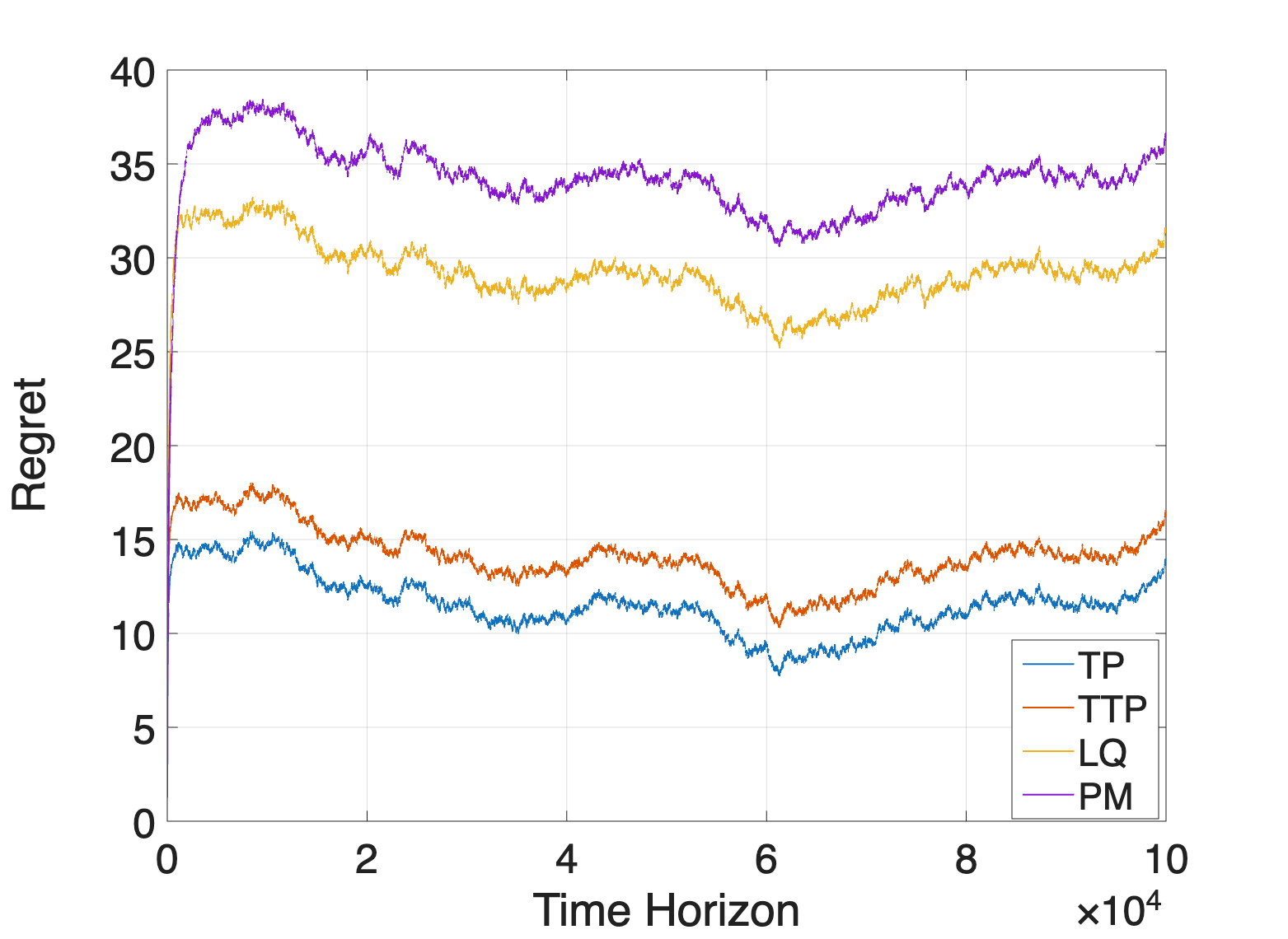}
        \caption{Policy comparisons on the network in \prettyref{fig:pathofsix}}
        \label{fig:regret_a}
    \end{subfigure}
    \hfill
    \begin{subfigure}{0.48\textwidth} 
        \centering
        \includegraphics[width=0.94\linewidth]{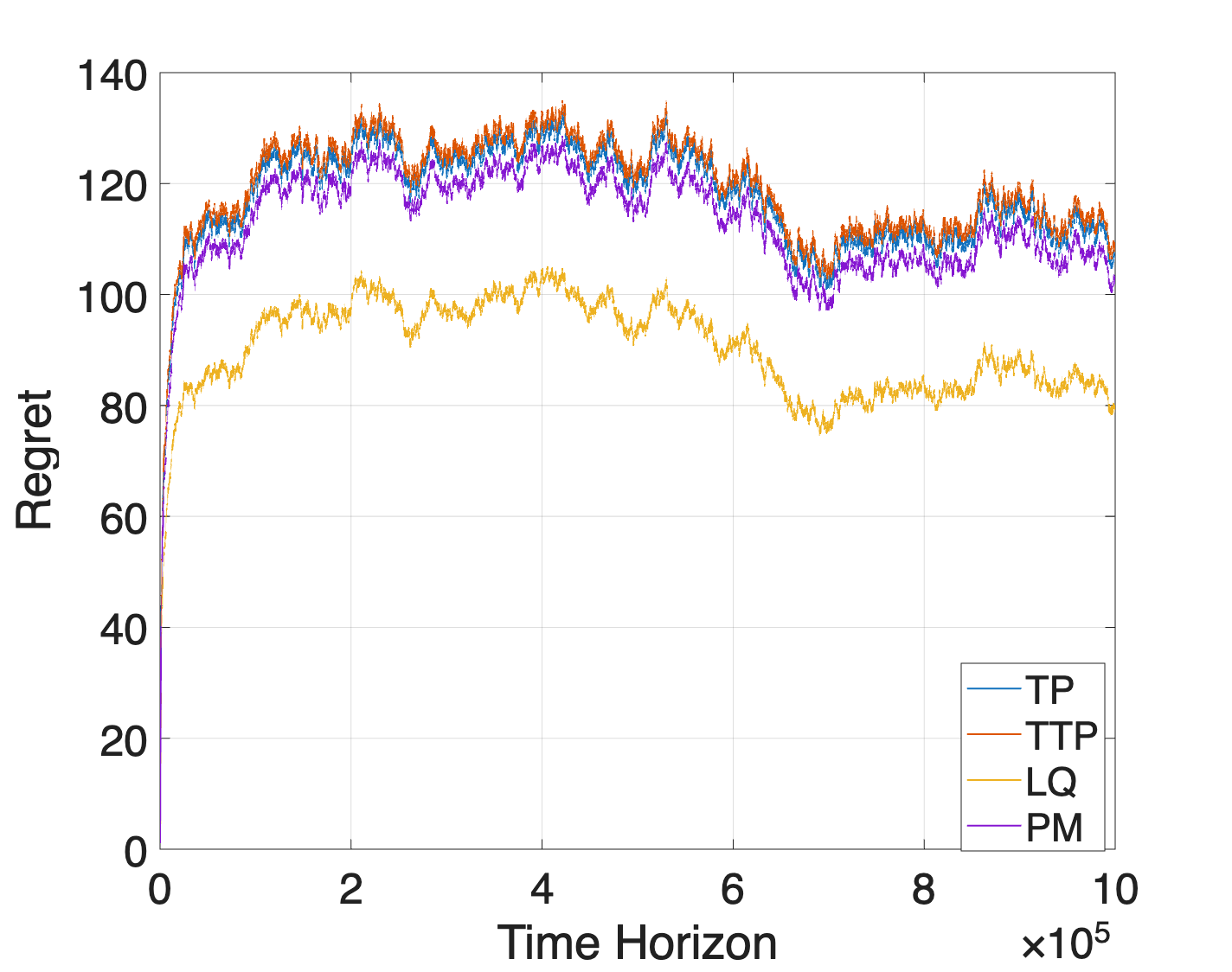}
        \caption{Policy comparisons on the network in \prettyref{fig:pathoffive}}
        \label{fig:regret_b}
    \end{subfigure}
    
    \caption{\footnotesize Regret comparison of the tree priority ($\TP$), the truncated tree priority ($\TTP$), the longest-queue ($\LQ$), and the randomized greedy ($\RG$) policies.}
    \label{fig:regretcompall}
\end{figure}

In Figure \ref{figure:cycleoffive}, we consider a cyclic network, and simulate the performance of $\RG$ and $\mathbf{LG}$ on this cyclic network, given that both $\TP$ and $\TTP$ requires the network to be acyclic. We observe that both policies perform very close to each other numerically, which is aligned with the fact that both policies achieve the optimal regret scaling of $O(\epsilon^{-1})$.\footnote{We observe similar numerical performance between $\RG$ and $\LQ$ when the network structure exhibits more complexity, i.e., paths originating from cycle nodes.}

\begin{figure}[t]
\centering
\begin{minipage}[c]{0.48\textwidth}
\centering \scalebox{0.6}{
\begin{tikzpicture}[
  >=latex,
  vertex/.style={circle,draw,thick,minimum size=20mm,align=center},
  edge/.style={thick}
]
\node[vertex] (v1) at (90:3.2cm)   {$\lambda_1=0.165$};
\node[vertex] (v2) at (18:3.2cm)   {$\lambda_2=0.09$};
\node[vertex] (v3) at (-54:3.2cm)  {$\lambda_3=0.325$};
\node[vertex] (v4) at (-126:3.2cm) {$\lambda_4=0.33$};
\node[vertex] (v5) at (162:3.2cm)  {$\lambda_5=0.09$};

\draw[edge] (v1) -- node[midway,sloped,above] {$r_1=1.75$} (v2);
\draw[edge] (v2) -- node[midway,sloped,below] {$r_2=2$}    (v3);
\draw[edge] (v3) -- node[midway,sloped,below] {$r_3=1.3$}  (v4);
\draw[edge] (v4) -- node[midway,sloped,below] {$r_4=1.4$}    (v5);
\draw[edge] (v5) -- node[midway,sloped,above] {$r_5=0.85$}    (v1);
\end{tikzpicture}}
\end{minipage}\hfill
\begin{minipage}[c]{0.48\textwidth}
\centering
\includegraphics[width=\textwidth]{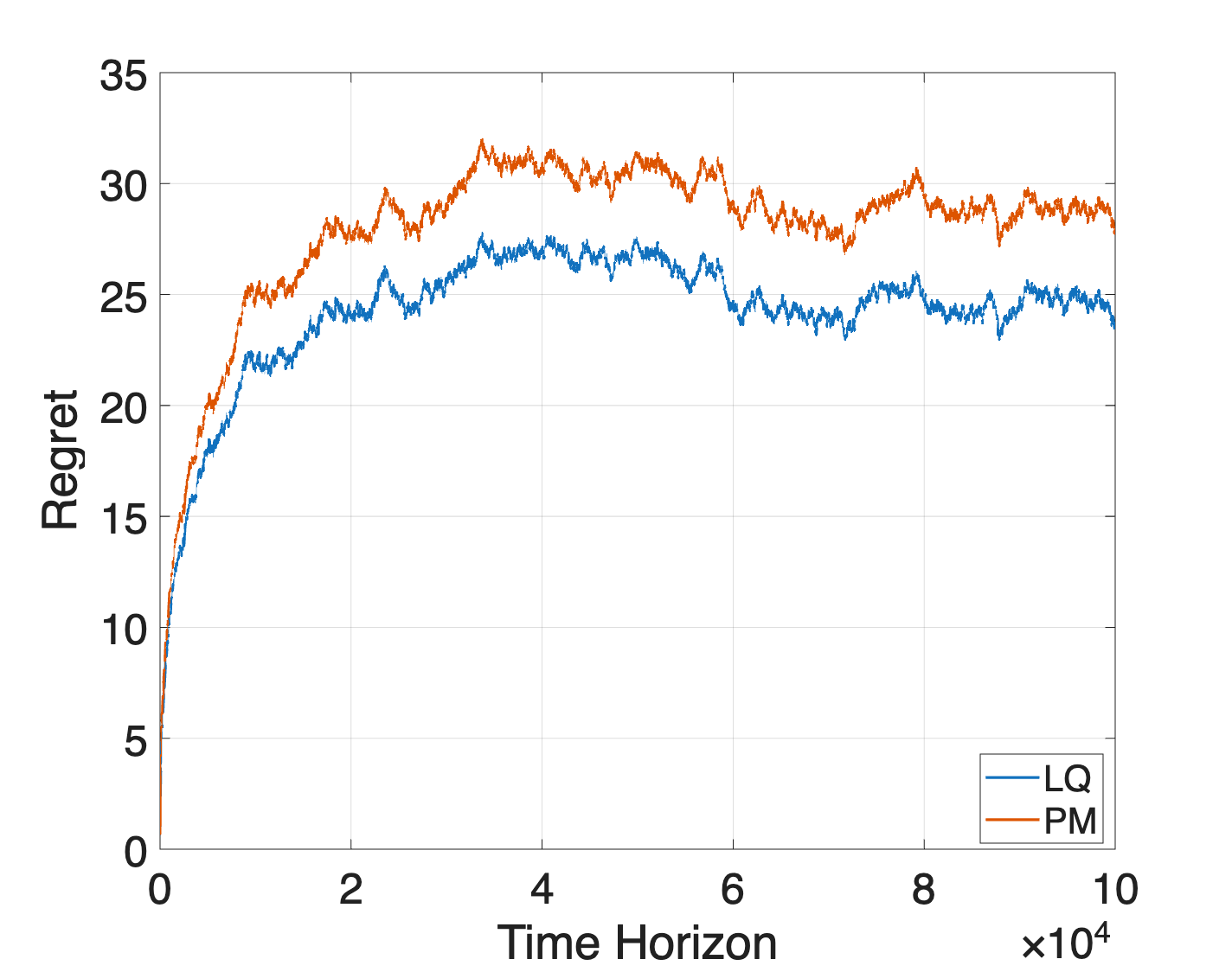}
\end{minipage}

\caption{\footnotesize
(LEFT) A cyclic two-way matching network that satisfies the GPG condition, where $z^*=\{0.085, 0.05, 0.32, 0.01, 0.08\}$ and $\epsilon=0.01$. (RIGHT) Regret comparison of the longest-queue ($\LQ$) and the probabilistic matching ($\PM$) policies
}
\label{figure:cycleoffive}
\end{figure}

\section{Conclusion}

In this paper, we considered two-way dynamic matching and comprehensively studied availability-based policies, whose matching decisions depend only on agent availability across types, instead of utilizing complete queue-length information.
We first proposed a global availability-based policy $\PM$ for general networks and a local availability-based policy $\TTP$ for acyclic networks, with both policies achieving the optimal regret scaling.
It remains an exciting open problem to find a local availability-based policy that achieves the optimal regret scaling for general networks.

We also provided the first explicit regret bound for the local availability-based policy $\TP$ proposed by \cite{kerimov2025optimality}.
Numerical outcomes suggest $\TP$ exhibits the optimal regret scaling, and we additionally show the optimal regret scaling of $\TP$ for a path of $4$ nodes via tailored arguments, indicating that our Lyapunov analysis for $\TP$ is not tight.
With the above evidences, we conjecture that $\TP$ exhibits the optimal regret scaling, and we believe the resolution of this conjecture would lead to novel insights and techniques.

Our numerical experiments show that there are instances where $\RG$ and $\LQ$ are dominated both by $\TP$ and $\TTP$, and vice versa.
Nevertheless, the performance gaps between all policies are remarkably small. Notably, we have not been able to construct an instance in which $\TTP$ achieves strictly better regret than $\TP$, and therefore we reiterate the conjecture that $\TP$ also achieves the optimal regret scaling of $O(\epsilon^{-1})$ (see \cite[Example 6.1]{kerimov2025optimality}).

For multi-way matching networks where each match can include more than two agent types, both the primal-dual policy of \cite{wei2023constant} and the sum-of-square policy of \cite{gupta2024greedy} achieve the optimal regret scaling.
Nevertheless, both policies are global and queue-length-based, and it would be intriguing to achieve the optimal regret scaling for multi-way dynamic matching via (local) availability-based policies.

While our work establishes strong theoretical guarantees for availability-based policies, an important future direction is to formalize their practical benefits of transparency and resistance against manipulation. Combining our results with the literature on strategic behaviors in queueing systems (e.g., \cite{estrada2025customer}) could help demonstrate how availability-based policies provide robustness in strategic environments under precise game-theoretic assumptions.

\bibliographystyle{alpha}
\bibliography{references}

\clearpage
\appendix

\section{Preliminaries} 

\subsection{Martingale and concentration inequalities}

\begin{lemma}[Doob's $L^2$ maximal inequality]\label{lem:doob-ineq}
    Let $X_1, \ldots, X_T$ be a discrete-time submartingale.
    Then,
    \begin{align*}
        \Expect \left[ \sup_{0 \leq s \leq T} |X_s|^2 \right]
        \leq 4\Expect\left[|X_T|^2\right].
    \end{align*}
\end{lemma}

\begin{lemma}[McDiarmid's Inequality]\label{lem:mcdiarmid-ineq}
    Let $X_1, \ldots, X_n \in \calX$ be independent random variables.
    Suppose that function $f: \calX^n \to \reals$ satisfies the bounded differences property with bounds $c_1, \ldots, c_n$: For all $i \in [n]$ and $x \in \calX^n$,
    \begin{align*}
        \sup_{x_i' \in \calX} |f(x_1, \ldots, x_{i-1}, x_i, x_{i+1}, \ldots, x_n) - f(x_1, \ldots, x_{i-1}, x_i', x_{i+1}, \ldots, x_n)|
        \leq c_i.
    \end{align*}
    Then, for any $\delta > 0$,
    \begin{align*}
        \Prob \left( |f(X_1, \ldots, X_n) - \Expect[f(X_1, \ldots, X_n)] | \geq \delta \right)
        \leq 2 \exp \left( -\frac{2\delta^2}{\sum_{i=1}^n c_i^2} \right).
    \end{align*}
\end{lemma}

\begin{lemma}\label{lem:exp-moment-partialsums}
Let $(X(k))_{k\ge1}$ be i.i.d.\ $n$-dimensional random vectors with a categorical distribution with parameters $\lambda$, 
i.e.,
$
\mathbb{P}(X(1)=e_i) = \lambda_i$, where $\sum_{i=1}^n\lambda_i = 1$ and $0< \lambda_i <1$ for $i\in [n]$.
For any $T \ge 1$, define
\begin{align*}
Z(T) \triangleq  \frac{1}{\sqrt{T}} \sum_{i=1}^n \max_{1 \leq t \leq T} \left|\sum_{k=1}^t X_i(k) - \lambda_i t  \right| \,. 
\end{align*}
Then, for all $\xi>0$, there exist constants $C_1, C_2(\xi), C_3(\xi)<\infty$ such that
 \begin{align}
         \sup_{T\ge 1}\mathbb{E} \left[ Z(T) \right] \leq\ & C_1 \triangleq  2\sqrt{n} \, , \label{eq:concentration-ineq-1} \\
         \sup_{T\ge 1}\mathbb{E} \left[  \exp\left( \xi Z(T) \right) \right] \leq\ &C_2(\xi) \triangleq \exp\left(2\sqrt{n}\xi + \sigma^2\xi^2\right) \, ,
         \label{eq:concentration-ineq-2} \\
        \sup_{T\ge 1} \mathbb{E} \left[ \left(Z(T)\right)^2 \cdot \exp\left(\xi Z(T)\right) \right] \leq\ &C_3(\xi) \triangleq 4e^{-2} n \cdot C_2\left(\xi + \frac{1}{\sqrt{n}}\right) \, , \label{eq:concentration-ineq-3}
\end{align}
where $\sigma > 0$ is an absolute constant.
\end{lemma}

\begin{proof}
We prove  \prettyref{eq:concentration-ineq-1},  \prettyref{eq:concentration-ineq-2} and  \prettyref{eq:concentration-ineq-3} sequentially. 

\paragraph{Proof of \prettyref{eq:concentration-ineq-1}}
For any $t \ge 1$ and $i\in [n]$, let
\[
Y_i(t) \triangleq \sum_{k=1}^t X_i(k) - \lambda_i t \,,
\]
then the process $\{Y_i(t)\}_{t\ge 0}$ is a martingale with increments $\Delta Y_i(k) = X_i(k) - \lambda_i$.
For each $T\ge 1$, define the coordinate-wise maximum 
\[
M_i(T) \triangleq \max_{0\le t\le T} |Y_i(t)|
\] 
and then $Z(T) = T^{-1/2}\sum_{i=1}^n M_i(T)$.

By \prettyref{lem:doob-ineq}, we have
\[
\mathbb{E}[M_i(T)^2] \le 4 \mathbb{E}[Y_i(T)^2] = 4 \Var (Y_i(T)) = 4 T \lambda_i (1-\lambda_i).
\]
By Jensen's inequality, we obtain
\[
\mathbb{E}[M_i(T)] \le \sqrt{4 T \lambda_i (1-\lambda_i)} \le 2\sqrt{T}\sqrt{\lambda_i}.
\]
Then, we have
\[
\expect{Z(T)} = T^{-1/2} \mathbb{E}\left[\sum_{i=1}^n M_i(T)\right] \le 2 \sum_{i=1}^n \sqrt{\lambda_i} \le 2 \sqrt{n} \,,
\]
where the second inequality holds because $\sum \lambda_i = 1, \lambda_i \ge 0$, and by Cauchy-Schwarz, $\sum_{i=1}^n \sqrt{\lambda_i}$ is maximized when $\lambda_i = 1/n$ for all $i\in[n]$.  

\paragraph{Proof of \prettyref{eq:concentration-ineq-2}}
We analyze $Z(T)$ as a function of the $T$ independent random vectors. Define the function $f: (\{e_1, \dots, e_n\})^T \to \reals$ by
\[
f(x(1), \dots, x(T)) = \frac{1}{\sqrt{T}} \sum_{i=1}^n \max_{1 \le t \le T} \left|\sum_{k=1}^t x_i(k) - \lambda_i t \right| \,,
\]
where $x_i(k)$ denotes the $i$-th component of the vector $x(k)$. Note that $Z(T) = f(X(1), \dots, X(T))$.

We check the bounded difference condition required by \prettyref{lem:mcdiarmid-ineq}.
Consider two sequences of vectors $\mathbf{x} = (x(1), \dots, x(T))$ and $\mathbf{x}' = (x'(1), \dots, x'(T))$ that differ only at a single index $\ell \in [T]$. Since the vectors are canonical basis vectors, let $x(\ell) = e_u$ and $x'(\ell) = e_v$ for some distinct $u, v \in [n]$. Let $r_i(t) \triangleq \sum_{k=1}^t x_i(k)$ and $r'_i(t) \triangleq \sum_{k=1}^t x'_i(k)$ denote the partial sums for the two sequences. The change at index $\ell$ affects the partial sums only for $t \ge \ell$. Specifically:

\begin{itemize}
    \item The $u$-th coordinate decreases by 1: $r'_u(t) = r_u(t) - 1$ for all $t \ge \ell$.
    \item The $v$-th coordinate increases by 1: $r'_v(t) = r_v(t) + 1$ for all $t \ge \ell$.
    \item All other coordinates $i \notin \{u,v\}$ remain unchanged: $r'_i(t) = r_i(t)$ for all $t$.
\end{itemize}
Applying this to the $u$-th component term in $f$:
\[
\left| \max_{1 \le t \le T} |r_u(t) - \lambda_u t| - \max_{1 \le t \le T} |r'_u(t) - \lambda_u t| \right| 
\le \max_{\ell \le t \le T} |r_u(t) - r'_u(t)| = 1 \,.
\]
Similarly, 
\[
\left| \max_{1 \le t \le T} |r_v(t) - \lambda_v t| - \max_{1 \le t \le T} |r'_v(t) - \lambda_v t| \right| \le 1 \,.
\]
The terms for $i \notin \{u, v\}$ do not change.
Therefore, the total change in the function value is bounded by:
\[
|f(\mathbf{x}) - f(\mathbf{x}')| \le \frac{1}{\sqrt{T}} \sum_{i=1}^n \left| \max_{t} |r_i(t) - \lambda_i t| - \max_{t} |r'_i(t) - \lambda_i t| \right| \le \frac{1}{\sqrt{T}} (1 + 1) = \frac{2}{\sqrt{T}} \,.
\]
Thus, $f$ satisfies the bounded difference condition with constants $c_k = 2/\sqrt{T}$ for each $k \in [T]$. The sum of squared constants is:
\[
\sum_{k=1}^T c_k^2 = \sum_{k=1}^T \frac{4}{T} = 4 \,.
\]
By \prettyref{lem:mcdiarmid-ineq}, for any $\delta > 0$:
\[
\mathbb{P}\left( |Z(T) - \mathbb{E}[Z(T)] | \ge \delta \right)
\le 2\exp\left( -\frac{2\delta^2}{\sum_{k=1}^T c_k^2} \right)
= 2\exp\left( -\frac{\delta^2}{2} \right) \,.
\]
Applying \cite[Proposition 2.6.1]{vershynin2018high}, which relates several equivalent sub-Gaussian properties, to the random variable $Z(T) - \mathbb{E}[Z(T)]$, we get that there exists an absolute constant $\sigma > 0$ such that
\begin{align}\label{eqn:sub-gaussian-mgf}
    \Expect \left[ e^{\delta(Z(T) - \mathbb{E}[Z(T)])} \right]
    \leq e^{\sigma^2 \delta^2},
    \quad \forall \delta \in \reals.
\end{align}
Then, it follows that
\begin{align*}
    \sup_{T \geq 1} \mathbb{E}\left[ e^{\xi Z(T)} \right]
    \le \sup_{T \geq 1} e^{\xi \expect{ Z(T)}} \cdot \mathbb{E}\left[ e^{\xi (Z(T) - \mathbb{E}[Z(T)])} \right] 
    \le \exp\left( 2\sqrt{n}\xi + \sigma^2\xi^2 \right) \,,
\end{align*}
where the second inequality follows from \eqref{eq:concentration-ineq-1} and \eqref{eqn:sub-gaussian-mgf}, concluding $C_2(\xi) = \exp(2\sqrt{n}\xi + \sigma^2\xi^2)$.

\paragraph{Proof of \prettyref{eq:concentration-ineq-3}}
We use the inequality $x^2 \le \frac{4e^{-2}}{\alpha^2} e^{\alpha x}$, valid for all $x \ge 0, \alpha > 0$.
Applying this with $x = Z(T)$ and choosing $\alpha = 1/\sqrt{n}$:
\[
(Z(T))^2 e^{\xi Z(T)} 
\le 
\frac{4e^{-2}}{(1/\sqrt{n})^2}e^{Z(T)/\sqrt{n}} e^{\xi Z(T)}
=
4 e^{-2} n e^{(\xi + 1/\sqrt{n}) Z(T)}
\,.
\]
Taking expectations and applying \prettyref{eq:concentration-ineq-2} with parameter $\xi + 1/\sqrt{n}$:
\[
\mathbb{E}\left[ (Z(T))^2 e^{\xi Z(T)} \right] 
\le 
4e^{-2} n \mathbb{E}\left[ e^{(\xi + 1/\sqrt{n}) Z(T)} \right] 
\le 4e^{-2} n  C_2(\xi + 1/\sqrt{n}) \,.
\]
Hence, it follows that $C_3(\xi) = 4e^{-2} n  C_2(\xi + 1/\sqrt{n}) $.
\end{proof}

\subsection{Lyapunov function analysis}
\label{sec:prelim-lyapunov}

In this subsection, we review standard tools on Lyapunov function analysis, which is one of the prevalent approaches to bound the expected value of some function with respect to the steady state of a Markov chain.

The following lemma bounds the expected value of a function $f(\cdot)$ when the drift of the Lyapunov function depends also on $f(\cdot)$.

\begin{lemma}[Corollary 4 in \cite{glynn2008bounding}]\label{lmm:drift-analy}
    Let $X = (X(t))_{t \geq 0}$ be a discrete-time Markov chain on a discrete state space $\calX$ with transition kernel $P$, and suppose $f: \calX \to \reals_{\geq 0}$.
    If there exists a function $\Phi: \calX \to \reals_{\geq 0}$ and a constant $c$ for which
    \begin{align*}
        \int_\calX P(x, dy) \Phi(y) - \Phi(x)
        \leq -f(x) + c,
        \quad \forall x \in \calX,
    \end{align*}
    then
    \begin{align*}
        \int_\calX \pi(dx) f(x) \leq c
    \end{align*}
    for every stationary distribution $\pi$ of $X$.
\end{lemma}
We will also apply the following lemma, which directly bounds the expectation of the Lyapunov function at all times instead of only under the stationary distribution.

\begin{lemma}[Lemma 5 in \cite{wei2023constant}]\label{lmm:drift-analy-ran-walk}
    Let $\Psi(t)$ be an $\{\calF_t\}$-adapted discrete-time stochastic process satisfying:
    \begin{itemize}
        \item Bounded variation: $|\Psi(t + 1) - \Psi(t)| \leq K$;
        \item Expected decrease: $\E[\Psi(t + 1) - \Psi(t) \mid \calF_t] \leq -\eta$, when $\Psi(t) \geq B$;
        \item $\Psi(0) \leq K + B.$
    \end{itemize}
    Then, for every $t \geq 0$, we have
    \begin{align*}
        \E[\Psi(t)]
        \leq K \left( 1 + \left \lceil \frac{B}{K} \right \rceil \right) + K \left( \frac{K - \eta}{2 \eta} \right).
    \end{align*}
\end{lemma}

\subsection{Markov chain}

The following lemma introduces a generic way of establishing the ergodicity of Markov chains.

\begin{lemma} [Corollary 8.7 in \cite{robert2003stochastic}]
\label{lmm:ergodic}
Let $(X(t))_{t \geq 0}$ be a discrete-time, homogeneous, irreducible and aperiodic Markov chain on a countable state space $\mathcal{X}$. If there exist a function $f : \mathcal{X} \rightarrow \mathbb{R}_{+}$ and constants $K, \eta > 0$ such that
\begin{itemize} 
\item[(i)] $\Expect_x[f(X(1)) - f(x)] \leq -\eta$ when $f(x) > K$,
\item[(ii)] $\Expect_x [f(X(1))] < \infty$ when $f(x) \leq K$, and
\item[(iii)] the set $\left\{x \in \mathcal{X} \mid f(x) \leq K \right\}$ is finite,
\end{itemize} then
the Markov chain $(X(t))_{t\geq 0}$ is ergodic. 
\end{lemma}

\subsection{Consistent policy}
\label{sec:con-consis}

In this subsection, we investigate the consistency property (\prettyref{def:consis}) of existing policies.
Recall that \cite[Lemma 4]{moyal2017instability} establish consistency for certain static priority policies\footnote{They consider the static priority policies that perform a match whenever possible and select matches according to a fixed priority order over all matches.} when no agents are discarded.
Also, \cite[Lemma 5.3]{kerimov2025optimality} claim that all (deterministic) policies that perform some match whenever possible are consistent, which is not accurate as we will demonstrate.
Then, we show that all static priority policies and the longest-queue policy are consistent.

We start with an example to refute \cite[Lemma 5.3]{kerimov2025optimality}.

\begin{example}\label{ex:non-consis}
Assume that $n$ is sufficiently large and the graph $(\calA, \calM)$ forms a path such that for every $i \in [n - 1]$, there is a match in $\calM$ containing types $i$ and $i + 1$.
To describe a policy that performs some match whenever possible, it suffices to specify, under a state $q$ where any two adjacent queues are not both non-empty, which match to perform when an agent of type $i > 1$ arrives with queues $i - 1$ and $i + 1$ being non-empty at the same time.
In this case, our policy $\Pi$ decides the match as follows: If $q_1 = 0$, then $\Pi$ performs the match $m(i - 1, i)$; otherwise, $\Pi$ performs the match $m(i, i + 1)$.
In other words, when queue $1$ is empty, $\Pi$ prioritizes matching to the queue with a smaller index; otherwise, $\Pi$ prioritizes matching to the queue with a larger index.

Next, we give the initial states and specify the arrival sequence.
Let $Q(0) = \mathbf{0}$ and $Q'(0) = (1, 0, \ldots, 0)$.
The first four arriving agents are of types $3, 5, 4, 6$, respectively.
During the first four periods, $\Pi$ will perform $m(3, 4)$ and $m(5, 6)$ under $Q$, and perform $m(4, 5)$ under $Q'$.
Hence, $\norm{Q(4)}_1 = 0$ and $\norm{Q'(4)}_1 = 3$, which implies
\begin{align*}
    \norm{Q(4) - Q'(4)}_1
    = 3
    > 1 = \norm{Q(0) - Q'(0)}_1.
\end{align*}
Therefore, $\Pi$ is not consistent.
\end{example}

We remark that, in the above example, one can repeat the arrival pattern to make the distance between $Q(t)$ and $Q'(t)$ arbitrarily large as $t$ increases.
That is, the next four arriving agents are of types respectively $8, 10, 9, 11$, and so on.

Recall that the longest-queue policy adopts the following matching rule: When the arriving agent has multiple non-empty neighboring queues, select the one with the largest length to match to, with ties broken according to a fixed order.\footnote{In \cite{kerimov2025optimality}, when there are multiple longest neighboring queues, the longest-queue policy breaks ties arbitrarily. However, breaking ties arbitrarily may render the longest-queue policy inconsistent. To illustrate, under two different states with identical sets of longest neighboring queues, if the policy breaks ties differently, then consistency would be violated.}
\cite{kerimov2025optimality} show that the longest-queue policy achieves the optimal regret scaling for general networks.
Next, we show that all static priority policies and the longest-queue policy are consistent.

\begin{proposition}
	\label{prop:consis-priori-sub-model}
    All static priority policies and the longest-queue policy are consistent.
\end{proposition}

\begin{proof}
    We first show that any static priority policy is consistent.
    Let $Q(0)$ and $Q'(0)$ be two valid initial states, and let $Q(1)$ and $Q'(1)$ be the corresponding states induced by a static priority policy $\Pi$ after one-step transition.
    Suppose that an agent in queue $i$ arrives at time $1$.
    \begin{itemize}
        \item If the arriving agent is not matched under both $Q(0)$ and $Q'(0)$, or is matched to an agent in the same queue under both $Q(0)$ and $Q'(0)$, then $Q(1) - Q'(1) = Q(0) - Q'(0)$, implying
        \begin{align*}
            \norm{Q(1) - Q'(1)}_1
            = \norm{Q(0) - Q'(0)}_1.
        \end{align*}

        \item 
        If the arriving agent is matched to an agent in queue $j$ under $Q(0)$ and is not matched under $Q'(0)$, then $Q(1) = Q(0) - \bfe_j$ and $Q'(1) = Q'(0) + \indc{i \in \calA_0} \cdot \bfe_i$, and we must have $Q_j(0) > 0$ and $Q_j'(0) = 0$.
        As a result,
        \begin{align*}
            \norm{Q(1) - Q'(1)}_1
            = \norm{Q(0) - \bfe_j - Q'(0) - \indc{i \in \calA_0} \cdot \bfe_i}_1
            \leq \norm{Q(0) - Q'(0)}_1,
        \end{align*}
        where the inequality holds since $|Q_i(0) - Q_i'(0) - \indc{i \in \calA_0}| \leq |Q_i(0) - Q_i'(0)| + 1$ and $|Q_j(0) - Q_j'(0) - 1| = |Q_j(0) - Q_j'(0)| - 1$ given that  $Q_j(0) > 0$ and $Q_j'(0) = 0$.
        The case where the arriving agent is not matched under $Q(0)$ and is matched under $Q'(0)$ can be handle analogously.

        \item Suppose that the arriving agent is match to an agent in queue $j$ under $Q(0)$ and is matched to an agent in queue $k$ under $Q'(0)$ with $j \neq k$, then $Q(1) = Q(0) - \bfe_j$ and $Q'(1) = Q'(0) - \bfe_k$.
        Assume by symmetry that $m(i, j) \succ_i m(i, k)$, then we must have $Q_j(0) > 0$ and $Q_j'(0) = 0$.
        As a result,
        \begin{align*}
            \norm{Q(1) - Q'(1)}_1
            = \norm{Q(0) - \bfe_j - Q'(0) + \bfe_k}_1
            \leq \norm{Q(0) - Q'(0)}_1,
        \end{align*}
        where the inequality holds since $|Q_k(0) - Q_k'(0) + 1| \leq |Q_k(0) - Q_k'(0)| + 1$ and $|Q_j(0) - Q_j'(0) - 1| = |Q_j(0) - Q_j'(0)| - 1$ given that $Q_j(0) > 0$ and $Q_j'(0) = 0$.
    \end{itemize}
    Combining all the above concludes that $\Pi$ is consistent.

    We now turn to the longest-queue policy, whose proof follows a similar argument.
    In particular, the analysis on the first two cases proceeds identically, and we conclude the consistency of the longest-queue policy by analyzing the last case.
    \begin{itemize}
        \item Suppose that the arriving agent is match to an agent in queue $j$ under $Q(0)$ and is matched to an agent in queue $k$ under $Q'(0)$ with $j \neq k$, then $Q(1) = Q(0) - \bfe_j$ and $Q'(1) = Q'(0) - \bfe_k$.
        By the matching rule of the longest-queue policy,
        \begin{align}\label{eqn:consistency-lq}
            Q_j(0) \geq Q_k(0) \quad \text{and} \quad Q_j'(0) \leq Q_k'(0);
        \end{align}
        moreover, one of the inequalities must be strict since the longest-queue policy breaks ties according to a fixed order.
        Notice that either $Q_j(0) > Q_j'(0)$ or $Q_k'(0) > Q_k(0)$, as otherwise,
        \begin{align*}
            Q_j'(0)
            \leq Q_k'(0)
            \leq Q_k(0)
            \leq Q_j(0)
            \leq Q_j'(0),
        \end{align*}
        where the first and the third inequalities hold by \eqref{eqn:consistency-lq}, contradicting the fact that at least one inequality in \eqref{eqn:consistency-lq} is strict.
        Assume by symmetry that $Q_j(0) > Q_j'(0)$, and it follows that
        \begin{align*}
            \norm{Q(1) - Q'(1)}_1
            = \norm{Q(0) - \bfe_j - Q'(0) + \bfe_k}_1
            \leq \norm{Q(0) - Q'(0)}_1,
        \end{align*}
        where the inequality holds since $|Q_k(0) - Q_k'(0) + 1| \leq |Q_k(0) - Q_k'(0)| + 1$ and $|Q_j(0) - Q_j'(0) - 1| = |Q_j(0) - Q_j'(0)| - 1$ given that $Q_j(0) > Q_j'(0)$.
    \end{itemize}
\end{proof}

\subsubsection{Proof of \prettyref{lmm:diff-queue-sta}}
\label{sec:proof-lmm-diff-queue-sta}

        Recall that $(Q(t))_{t \geq 0}$ are the states induced by $\Pi$ starting from $Q(0) = \mathbf{0}$.
    Let $(Q'(t))_{t \geq 0}$ be the states induced by $\Pi$ when we start from $Q'(0) \sim \pi$.
    Fix $t \geq 0$.
    Since $\Pi$ is consistent, by induction, there exists a coupling $\mu$ between $Q(t)$ and $Q'(t)$ such that
    \begin{align*}
        \E_{(Q(t), Q'(t)) \sim \mu} \left[ \norm{Q(t) - Q'(t)}_1 \right]
        \leq \E \left[ \norm{Q(0) - Q'(0)}_1 \right].
    \end{align*}
    Hence,
    \begin{align*}
        \E\left[ \norm{Q(t)}_1 \right] - \E\left[ \norm{Q'(t)}_1 \right]
        &= \E_{(Q(t), Q'(t)) \sim \mu} \left[ \norm{Q(t)}_1 - \norm{Q'(t)}_1 \right] \\
        &\leq \E_{(Q(t), Q'(t)) \sim \mu} \left[ \norm{Q(t) - Q'(t)}_1 \right] \\
        &\leq \E \left[ \norm{Q(0) - Q'(0)}_1 \right] = \E \left[ \norm{Q'(0)}_1 \right].
    \end{align*}
    Therefore,
    \begin{align*}
        \E\left[ \norm{Q(t)}_1 \right]
        \leq \E\left[ \norm{Q'(t)}_1 \right] + \E \left[ \norm{Q'(0)}_1 \right]
        = 2\E \left[ \norm{Q'(0)}_1 \right]
        \leq 2B,
    \end{align*}
    where the equality holds since both $Q'(0)$ and $Q'(t)$ follow the stationary distribution $\pi$.
    
\section{Discussion on tree priority policy on acyclic graphs}
\label{sec:static}

In this section, we discuss the possibility of improving the regret bound in \prettyref{thm:regert-static-priority} for $\TP$ on acyclic graphs.
In \prettyref{sec:warmup}, we 
analyze the all-time regret of $\TP$ on a path of at most $4$ nodes to demonstrate the barrier of directly improving the proof of \prettyref{thm:regert-static-priority}.
In \prettyref{sec:tp-better-queue3}, we leverage an alternative approach to derive optimal all-time regret for $\TP$ on a path of $4$ nodes, indicating that our Lyapunov analysis for \prettyref{thm:regert-static-priority} is not tight.

Since $\TP$ always matches the arriving agent whenever it has a non-empty neighboring queue, as stated in the following fact, any two adjacent queues cannot be non-empty at the same time.

\begin{fact}\label{fact:tp-greedy}
    Let $(Q(t))_{t \geq 0}$ be the states induced by $\TP$ under an arbitrary sample path.
    Then, for all $t \geq 0$ and $m(i, j) \in \calM$, $Q_i(t) \cdot Q_j(t) = 0$.
\end{fact}

Let $\calA_{\odd} \subseteq \calA$ be the set of nodes with an odd depth, and let $\calA_{\even} \triangleq \calA \setminus \calA_{\odd}$ be the set of nodes with an even depth.
We first show in the following lemma that truncating a subset of queues with an even (resp. old) depth would not decrease the length of each queue with an odd (resp. even) depth and would not increase the length of each other queue with an even (resp. odd) depth.

\begin{lemma}\label{lem:truncate-monotone}
    Suppose that $(\calA, \calM)$ is acyclic, and consider a static priority policy.
    For $\calA' \subseteq \calA_{\even}$ (resp. $\calA' \subseteq \calA_{\odd}$), let $(Q(t))_{t \geq 0}$ be the states of the original system $\calS$, and let $(Q'(t))_{t \geq 0}$ be the states of a new system $\calS'$ where each queue $i \in \calA'$ is truncated.
    If both systems have the same arrival process, then for all sample path and $t \geq 0$, $Q_i(t) \leq Q_i'(t)$ for every $i \in \calA_{\odd}$ (resp. $i \in \calA_{\even}$), and $Q_i(t) \geq Q_i'(t)$ for every $i \in \calA_{\even}$ (resp. $i \in \calA_{\odd}$).
\end{lemma}

\begin{proof}
    We only prove the statement for truncating even-depth queues, and the statement for truncating odd-depth queues can be established analogously.
    The statement is true for $t = 0$ since $Q_i(0) = Q_i'(0)$ for every $i \in \calA$.
    Assume for induction that the statement is true for time $t \geq 0$, and we show that it is also true for time $t + 1$ by inspecting the queues that are updated in either systems.
    Let $i$ be the queue that the arriving agent at time $t + 1$ belongs to, and let $\calN' \subseteq \calN(i)$ be the set of neighbors $j$ of $i$ such that $m(i, j) \in \calM(i)$.
    Note that the depth of each node in $\calN'$ and the depth of $i$ have different parities.
    \begin{itemize}
        \item If $i \in \calA_{\odd}$, and no match is performed in $\calS$, then $Q_j(t) = 0$ for every $j \in \calN'$.
        By the inductive hypothesis, $Q_j'(t) \leq Q_j(t) = 0$ for every $j \in \calN'$, and hence no match is performed in $\calS'$.
        Consequently, $Q_i(t + 1) = Q_i(t) + 1 \leq Q_i'(t) + 1 = Q_i'(t + 1)$.

        \item Suppose that $i \in \calA_{\odd}$, and a match $m(i, j)$ is performed in $\calS$ for some $j \in \calN$.
        Firstly, if no match is performed in $\calS'$, which implies $Q_j'(t + 1) = Q_j'(t) = 0$, then $Q_i(t + 1) = Q_i(t) \leq Q_i'(t) = Q_i'(t + 1) - 1$, and $Q_j(t + 1) \geq 0 = Q_j'(t + 1)$.
        Moreover, if a match $m(i, j)$ is performed in $\calS'$, then $Q_i(t + 1) = Q_i(t) \leq Q_i'(t) = Q_i'(t + 1)$, and $Q_j(t + 1) = Q_j(t) - 1 \geq Q_j'(t) - 1 = Q_j'(t + 1)$.
        Finally, if a match $m(i, k)$ is performed in $\calS'$ for some $k \in \calN' \setminus \{j\}$, we must have $m(i, j) \succ_i m(i, k)$ and $Q_j'(t + 1) = Q_j'(t) = 0$ since $Q_k(t) \geq Q_k'(t)$; it follows that $Q_i(t + 1) = Q_i(t) \leq Q_i'(t) = Q_i'(t + 1)$, $Q_j(t + 1) \geq 0 = Q_j'(t + 1)$, and $Q_k(t + 1) = Q_k(t) \geq Q_k'(t) \geq Q_k'(t + 1)$.

        \item If $i \in \calA_{\even}$, and no match is performed in $\calS'$, then we must have $Q_j(t) \leq Q_j'(t) = 0$ for every $j \in \calN'$ by the inductive hypothesis, which implies that no match is performed in $\calS$.
        Hence, $Q_i(t + 1) = Q_i(t) + 1 \geq Q_i'(t) + 1 \geq Q_i'(t + 1)$.

        \item Suppose that $i \in \calA_{\even}$, and a match $m(i, j)$ is performed in $\calS'$ for some $j \in \calN'$.
        Firstly, if no match is performed in $\calS$, which implies $Q_j(t + 1) = Q_j(t) = 0$, then $Q_i'(t + 1) = Q_i'(t) \leq Q_i(t) \leq Q_i(t + 1)$, and $Q_j'(t + 1) \geq 0 = Q_j(t + 1)$.
        Moreover, if a match $m(i, j)$ is performed in $\calS$, then $Q_i'(t + 1) = Q_i'(t) \leq Q_i(t) = Q_i(t + 1)$, and $Q_j'(t + 1) = Q_j'(t) - 1 \geq Q_j(t) - 1 = Q_j(t + 1)$.
        Finally, if a match $m(i, k)$ is performed in $\calS$ for some $k \in \calN' \setminus \{j\}$, we must have $m(i, j) \succ_i m(i, k)$ and $Q_j(t + 1) = Q_j(t) = 0$ since $Q_k'(t) \geq Q_k(t)$; it follows that $Q_i'(t + 1) = Q_i'(t) \leq Q_i(t) = Q_i(t + 1)$, $Q_j(t + 1) \geq 0 = Q_j'(t + 1)$, and $Q_k'(t + 1) = Q_k'(t) \geq Q_k(t) = Q_k'(t + 1) + 1$.
    \end{itemize}
    Therefore, the statement holds for time $t + 1$, which completes the proof.
\end{proof}

\subsection{Tightness of regret analysis on paths} \label{sec:warmup}

In this subsection, we assume $(\calA, \calM)$ forms a path with at least four nodes, and we show how to bound the expected queue lengths for the three nodes farthest away from the root.
Combining with \prettyref{lmm:opt-test}, this implies a regret upper bound consistent with \prettyref{thm:regert-static-priority} when the path consists of at most four nodes.
In particular, we will iteratively bound the queue lengths in a bottom-up manner starting from the leaf node, and our analysis clearly illustrates the necessity of the dependence on network depth in the regret bound achieved by our current approach.

Let $\mathcal{A}=\{1,2,\cdots, n\}$ and $\mathcal{M}=\{1,2,\cdots, n-1\}$, where $j\in \calM$ denote the match of between $j$ and $j+1$. 
Here, we consider the case when $n$ is the under-demanded node with  $s_n^*>0$ and $\mathcal{A}_+ =\{n\}$. Then, the root of $(\calA,\calM)$ is node $n$ by construction.
We illustrate the constructed matching network in \prettyref{fig:deviationgraph}. 

\begin{figure}[ht]
\centering
\scalebox{0.7}{
\begin{tikzpicture}[scale=1]
    \node at (-7,0) [circle,draw,minimum size=1cm] (i1) {$\lambda_1$};
    \node at (-4,0) [circle,draw,minimum size=1cm] (i2) {$\lambda_2$};
    \node at (-1,0) [circle,draw,minimum size=1cm] (i3) {$\lambda_3$};
    \node at (2,0) [circle,draw,minimum size=1cm] (i5) {$\lambda_{n-1}$};
    \node at (5,0) [circle,draw,minimum size=1cm,fill=yellow] (i6) {$\lambda_n$};
    
    \draw (i1)--(i2) node [midway, above, pos=0.5] {$r_1$};
    \draw (i2)--(i3) node [midway, above, pos=0.5] {$r_2$};
    \path (i3)-- node[auto=false]{\ldots} (i5);
    \draw (i5)--(i6) node [midway, above, pos=0.5] {$r_{n-1}$};
\end{tikzpicture}}
\caption{A path network where $\mathcal{A}_+=\{n\}$.}
\label{fig:deviationgraph}
\end{figure}

Note that under $\TP$, the priority orders $(\succ_i)_{i \in [n]}$ is unique: for all $i \in \{2, \ldots, n - 1\}$, $m(i - 1, i) \succ_i m(i, i + 1)$.
We introduce the following family of (artificial) systems $\left\{ \mathcal{S}_i \right\}_{1 \leq i < n}$, where we truncate queues $i + 1, \ldots, n$ in $\calS_i$, and construct a coupling with our original system such that each system is equipped with the same arrival process as in the original system.
Under $\calS_i$, denote the number of agents of type $j$ in the queue at the end of time $t$ 
by $\overbar{Q}_{j}^i(t)$ for $1\le j \le n$.  
Note that for all $t \geq 0$  and $1\leq j \leq i$, we have
\begin{align}
    \overbar{Q}_j^i(t)= A_j(t)- \overbar{D}_{j-1}^i(t) \indc{j > 1} - \overbar{D}_{j}^i(t) \indc{j < n}\,,\label{eq:hat_Q_j}
\end{align}
where for any $1\le j\le n-1$, $\overbar{D}_j^i(t)$ denotes the number of matches of $m(j,j+1)$ performed up to and including time $t$ in $\calS_i$, and we set $\overbar{D}_0^i(t) = \overbar{D}_n^i(t) = 0$. 
In particular, by the property of $\calS_i$, $\overbar{Q}_j^i(t) = 0$ for any $j \in \{i + 1, \ldots, n\}$. 

The following corollary, as a direct consequence of \prettyref{lem:truncate-monotone}, characterizes an alternating pattern in queue-length comparisons between each $\calS_i$ and the original system under $\TP$.

\begin{corollary}\label{coro:warm-up-tp-coupling}
    Under $\TP$, for any $t\ge 0$ and $1\le i < n$, if $i$ is odd, then
    \begin{align}
        \overbar{Q}_{2m}^i(t)  \le Q_{2m}(t) \,, \quad  \overbar{Q}_{2m+1}^i(t)  \ge Q_{2m+1}(t)\,, \quad \forall 0\le 2 m \le i-1 \,;\label{eq:setofineq_odd}
    \end{align}
    if $i$ is even, then
    \begin{align}
        \overbar{Q}_{2m}^i(t)  \ge Q_{2m}(t),  \,\quad \overbar{Q}_{2m+1}^i(t)  \le Q_{2m+1}(t) \,, \quad \forall 0\le 2 m \le i\,. \label{eq:setofineq_even} 
    \end{align}
\end{corollary}

\begin{proof}
    For $i = n - 1$, the statement follows directly from \prettyref{lem:truncate-monotone}.
    For $i \in \{1, \ldots, n - 2\}$, since $m(i, i + 1) \succ_i m(i + 1, i + 2)$, provided that queue $i + 1$ is truncated, whether truncating queues $i + 2, \ldots, n$ or not does not affect the lengths of queues $1, \ldots, i$.
    Hence, we can apply \prettyref{lem:truncate-monotone} to the system where only queue $i + 1$ is truncated to conclude the statement.
\end{proof}

Our goal is to use the coupling between our original system and $\calS_i$'s to characterize the queue-lengths under $\TP$.
We also note that this coupling ensures the processes we are going to analyze to be  Markovian; e.g., the process $(Q_1(t))_{t \geq 0}$ itself is not a Markov chain, since the transition probabilities depend on the state of queue 2, whereas $(\overbar{Q}_1^1(t))_{t \geq 0}$ is a Markov chain. In general, by the construction of the artificial systems, $(\overbar{Q}^i(t))_{t \geq 0}$ is a Markov chain for all $1 \leq i < n$.

We then introduce the following lemma that characterizes the optimal solution of the static planning problem $\SPP(\lambda)$.

\begin{lemma}[Theorem 4.1 in \cite{kerimov2025optimality}]\label{lmm:z}
    For any $1\leq i < n$, $z_i^* + z_{i-1}^* \indc{ i \geq 2}=\lambda_i$. 
\end{lemma}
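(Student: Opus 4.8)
The plan is to obtain the identity by reading off a single equality constraint of $\SPP(\lambda)$ at the optimum. For the path network in \Cref{fig:deviationgraph}, match $j \in \calM$ is the edge $(j, j+1)$, so the column of $M$ indexed by match $j$ has its two ones in rows $j$ and $j+1$. Hence the $i$-th scalar equation of the constraint $Mz + s = \lambda$ is
\[
    z_{i-1}\,\indc{i \geq 2} + z_i\,\indc{i \leq n-1} + s_i = \lambda_i,
\]
and evaluating at the unique non-degenerate optimum $(z^*, s^*)$ gives $z_{i-1}^*\,\indc{i \geq 2} + z_i^*\,\indc{i \leq n-1} + s_i^* = \lambda_i$. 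For $i < n$ the $z_i^*$-term is always present, so it remains only to show $s_i^* = 0$ for every $i \in [n-1]$.

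This last step is where the tree structure enters. Since $(\calA, \calM)$ is a tree, \cite[Lemma 3.1]{kerimov2023optimality} (already invoked in \Cref{sec:result-static}) guarantees a unique under-demanded queue; in the present warm-up that queue is $n$, i.e.\ $\calA_+ = \{n\}$, which by the definition of $\calA_+$ means precisely $s_i^* = 0$ for all $i \neq n$. Substituting $s_i^* = 0$ into the displayed constraint for $i < n$ yields $z_i^* + z_{i-1}^*\,\indc{i \geq 2} = \lambda_i$, as claimed; the only care needed is the boundary case $i = 1$, where the $\indc{i \geq 2}$ indicator correctly drops the nonexistent $z_0^*$ and leaves $z_1^* = \lambda_1$.

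There is essentially no analytic obstacle here — the argument is bookkeeping of LP constraints together with the one structural fact $\calA_+ = \{n\}$. If one prefers to avoid quoting that fact, an equivalent route is to write the basis matrix $M_B$ explicitly: ordering its columns as $z_1, \dots, z_{n-1}, s_n$, the matrix $M_B$ is lower bidiagonal with ones on the diagonal and subdiagonal, so $(M_B^{-1})_{ij} = (-1)^{i-j}$ for $i \geq j$ and $0$ otherwise. Then $(z^*_1,\dots,z^*_{n-1},s^*_n)^\top = M_B^{-1}\lambda$ gives $z_i^* = \sum_{j=1}^{i} (-1)^{i-j}\lambda_j$, and $z_i^* + z_{i-1}^*$ telescopes to $\lambda_i$ because consecutive sign-alternating partial sums cancel termwise; this same explicit form of $M_B^{-1}$ also re-proves the $[-1,1]$ entrywise bound used in the proof of \Cref{lmm:opt-test}.
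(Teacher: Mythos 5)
Your proof is correct. The paper itself gives no argument for this lemma---it simply cites Theorem~4.1 of \cite{kerimov2023optimality}, which is the general tree formula $w_i = \sum_{j \in \calT(i)} (-1)^{d(i,j)} \lambda_j$ restated later in \Cref{sec:proof-drift-tree}. Your primary route is more elementary and self-contained: it reads off row $i$ of $Mz + s = \lambda$, uses the tree structure only through the already-established fact $\calA_+ = \{n\}$ (so $s_i^* = 0$ for all $i < n$), and is done. Your alternative route via the explicit $M_B^{-1}$ is, in effect, a direct re-derivation of the cited Theorem~4.1 specialized to the path: it yields $z_i^* = \sum_{j=1}^{i}(-1)^{i-j}\lambda_j$, which is exactly $w_i = \sum_{j \in \calT(i)}(-1)^{d(i,j)}\lambda_j$ for a path rooted at $n$, and the telescoping check $z_i^* + z_{i-1}^* = \lambda_i$ is correct. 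So the first route buys you a shorter argument that does not reprove the cited theorem, while the second makes the lemma independent of the external reference and, as you note, also gives the entrywise $[-1,1]$ bound on $M_B^{-1}$ for free. Either version would serve as a valid replacement for the paper's citation-only treatment.
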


Following Lemma \ref{lmm:z}, we get $z_1^*=\lambda_1$, $z_2^*=\lambda_2 - \lambda_1$, and $z_3^*=\lambda_3 - z_2^* = \lambda_3 - \lambda_2 + \lambda_1$, and note that $z_1^*, z_2^*, z_3^* \geq \epsilon$, which will be useful to prove the following results.
Next, we discuss the intuition behind the construction of our Lyapunov functions.
In $\calS_1$, we only need to focus on the length of queue $1$, and we naturally adopt the quadratic Lyapunov function $\calL(t) \triangleq (Q_1(t))^2$.
When it comes to $\calS_2$, to achieve all-time constant regret via Lemma \ref{lmm:opt-test}, we should have $D_1(t) \approx A_1(t)$ and $D_2(t) \approx A_2(t)-A_1(t)$.
This implies that ideally we want both
\begin{align*}
    A_1(t) - D_1(t) = Q_1(t)
    \quad \text{and} \quad
    A_2(t) - A_1(t) - D_2(t) = Q_2(t) - Q_1(t)
\end{align*}
to be small.
Hence, a natural choice of the Lyapunov function would be $\calL(t) \triangleq \beta (Q_1(t))^2 + (Q_2(t) - Q_1(t))^2$ for appropriately chosen coefficient $\beta \geq 0$.
However, we always have $Q_1(t) \cdot Q_2(t) = 0$ by \prettyref{fact:tp-greedy}, implying that we can safely drop the first term in $\calL(t)$, giving rise to our final choice of the Lyapunov function for $\calS_2$.
Similar derivations also lead to our construction of the Lyapunov function for $\calS_3$, which will become clear momentarily.

In the following two lemmas, we show that $\Expect[Q_1(t)]$ and $\Expect[Q_2(t)]$ can be bounded by $O(\epsilon^{-1})$ at all times respectively.

\begin{lemma} \label{lmm:level1} 
    $\Expect[Q_1(t)] \leq \epsilon^{-1}$ for all $t \geq 0$ under $\TP$.
\end{lemma}

\begin{proof} 

Consider the Lyapunov function $\calL(t)\triangleq(\overbar{Q}_1^1(t))^2$.
Conditioned on $\calL(t)> 0$, when an agent of type $1$ arrives, 
we have $\overbar{Q}_1^1(t+1)= \overbar{Q}_1^1(t)+1$; when an agent of type $2$ arrives, match 1 is performed under $\TP$ and $\overbar{Q}_1^1(t+1)= \overbar{Q}_1^1(t)-1$. 
Thus, for all $t \geq 0$, we have
\begin{align*}
    \Expect[\calL(t + 1) - \calL(t) \mid \overbar{Q}^1(t), \calL(t) > 0 ]
    &= \Expect[ (\overbar{Q}_1^1(t + 1) + \overbar{Q}_1^1(t))(\overbar{Q}_1^1(t + 1) - \overbar{Q}_1^1(t)) \mid \overbar{Q}^1(t), \calL(t) > 0 ] \\
    &\leq -2\overbar{Q}_1^1(t) (\lambda_2-\lambda_1) + 1,
\end{align*}
     where $\lambda_2-\lambda_1=z_2^* \geq \epsilon$. 
     Per \prettyref{lmm:ergodic}, the Markov chain $(\overbar{Q}_1^1(t))_{t \geq 0}$ is ergodic, and we denote its stationary distribution by $\pi$. Then by Lemma \ref{lmm:drift-analy}, we have $$ \Expect_{\pi}[\overbar{Q}_1^1(0)] \leq \frac{1}{2(\lambda_2 - \lambda_1)} \leq \frac{1}{2\epsilon}.$$
Per \prettyref{lmm:diff-queue-sta} and \prettyref{prop:consis-priori-sub-model}, we have $\Expect[\overbar{Q}_1^1(t)] \leq \epsilon^{-1}$ for all $t \geq 0$. Finally, it follows from \prettyref{coro:warm-up-tp-coupling} that $\Expect[{Q}_1(t)] \leq \epsilon^{-1}$ for all $t \geq 0$, since $Q_1(t) \leq \overbar{Q}_1^1(t)$ for all $t \geq 0$.
    \end{proof}

\begin{lemma}\label{lmm:level2}
    $\Expect[Q_2(t)] \leq 2\epsilon^{-1}$ for all $t>0$ under $\TP$.
\end{lemma}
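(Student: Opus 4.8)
The plan is to bound $\Expect[Q_2(t)]$ by working in the artificial system $\calS_2$, where by Proposition~\ref{prop:coupling2} (with $i = 2$ even) we have $\overbar{Q}_2^2(t) \ge Q_2(t)$ and $\overbar{Q}_1^2(t) \le Q_1(t)$. Following the Lyapunov-function intuition already sketched in the text, I would take $\calL(t) := (\overbar{Q}_2^2(t) - \overbar{Q}_1^2(t))^2$, and argue this quantity has negative drift whenever it is positive. Note $\overbar{Q}_2^2(t) \cdot \overbar{Q}_1^2(t) = 0$ (property \ref{P:greedy-pro-3} transported to $\calS_2$), so $\calL(t) > 0$ means exactly one of the two queues is nonempty. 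As in Lemma~\ref{lmm:level1}, $(\overbar{Q}_{[2]}^2(t))_{t\ge0}$ is a Markov chain, so the drift analysis applies cleanly.

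The core computation is a case analysis on the arriving type. Let $W(t) := \overbar{Q}_2^2(t) - \overbar{Q}_1^2(t)$. When an agent of type $3$ or higher arrives, nothing in $\{1,2\}$ changes in $\calS_2$ (type-$3$ agents are removed), so $W$ is unchanged. When type $1$ arrives: if queue $2$ is nonempty, match $1$ fires, so $\overbar{Q}_2^2$ drops by $1$ and $W$ decreases by $1$; if queue $2$ is empty, queue $1$ grows by $1$ and $W$ decreases by $1$. So a type-$1$ arrival always gives $\Delta W = -1$. When type $2$ arrives: if queue $1$ is nonempty, match $1$ fires and queue $1$ drops, so $W$ increases by $1$; if queue $1$ is empty, queue $2$ grows by $1$, so $W$ increases by $1$. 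So a type-$2$ arrival always gives $\Delta W = +1$. Hence $\Expect[W(t+1) - W(t) \mid \cdot] = \lambda_2 - \lambda_1 = z_2^* \ge \epsilon$ unconditionally — but we need the drift of $\calL = W^2$ to be negative, which requires the sign of $W$ to be controlled. The key structural fact is that $W(t) \le 0$ always: since $\overbar{Q}_1^2 \cdot \overbar{Q}_2^2 = 0$, either $W = \overbar{Q}_2^2 \ge 0$ or $W = -\overbar{Q}_1^2 \le 0$; I claim in $\calS_2$ the queue-$2$ side is never the active one, i.e. $\overbar{Q}_2^2(t) = 0$ for all $t$ — indeed by the argument above, whenever queue $2$ would be about to grow (type-$2$ arrival, queue $1$ empty), but wait, that does let queue $2$ grow. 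Let me instead keep $W$ possibly positive and handle it directly: conditioned on $\calL(t) > 0$, we have $\Expect[\calL(t+1) - \calL(t) \mid \cdot] = \Expect[(W(t+1)+W(t))(W(t+1)-W(t)) \mid \cdot]$. On the event $W(t) = \overbar{Q}_2^2(t) > 0$ (queue $2$ active, queue $1$ empty), a type-$1$ arrival can only decrease $W$ and a type-$2$ arrival increases it, giving drift $\le -2 W(t)(\lambda_1 - \lambda_2) + 1$; but $\lambda_1 < \lambda_2$ so this is positive — so this event must be shown not to occur, i.e. queue $2$ empty in $\calS_2$. On the event $W(t) = -\overbar{Q}_1^2(t) < 0$, drift is $\le 2 W(t)(\lambda_2 - \lambda_1) + 1 = -2\overbar{Q}_1^2(t)(\lambda_2-\lambda_1)+1$, which is the good case.

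So the plan hinges on first establishing the structural claim that in $\calS_2$, queue $2$ is always empty — equivalently, $\overbar{Q}_1^2(t) = -W(t) \ge 0$, so $\calL(t) = (\overbar{Q}_1^2(t))^2$. This should follow because, starting from empty, a type-$2$ arrival when queue $1$ is empty is in fact impossible under $\SP$ in $\calS_2$ given the relation $z_1^* = \lambda_1 > z_2^*$... actually more carefully: since the combined chain on $\{1,2\}$ in $\calS_2$ behaves exactly like a single queue tracking $A_1(t) - A_2(t)$ truncated appropriately, and $z_2^* = \lambda_2 - \lambda_1$ means $\lambda_2 > \lambda_1$, so type-$2$ arrivals outpace type-$1$; the invariant $\overbar{Q}_2^2(t) = (A_2(t) - A_1(t) - (\text{stuff}))^+$-type formula should be checked directly from \eqref{eq:hat_Q_j} with $i=2$. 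Once $\calL(t) = (\overbar{Q}_1^2(t))^2$ is pinned down, the drift bound reads $\Expect[\calL(t+1)-\calL(t) \mid \overbar{Q}^2(t), \calL(t)>0] \le -2\overbar{Q}_1^2(t)(\lambda_2 - \lambda_1) + 1 \le -2\epsilon\,\overbar{Q}_1^2(t) + 1$. Then Lemma~\ref{lmm:ergodic} gives ergodicity, Lemma~\ref{lmm:drift-analy} gives $\Expect_\pi[\overbar{Q}_1^2(0)] \le \frac{1}{2\epsilon}$, hence $\Expect_\pi[|W(0)|] \le \frac{1}{2\epsilon}$. This bounds $\overbar{Q}_1^2$; but I want $\overbar{Q}_2^2 \ge Q_2$. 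If $\overbar{Q}_2^2 \equiv 0$ that would give $Q_2 \equiv 0$, too strong — so the correct reading of Proposition~\ref{prop:coupling2} at $i=2$ must instead be used with the bound on $\Expect[\overbar{Q}_1^2]$ feeding into a separate relation. The honest version: bound $\Expect[\overbar Q_2^2(t)]$ directly via a Lyapunov function $(\overbar Q_2^2(t))^2$ after transferring, using that $z_1^* + z_2^* = \lambda_2$ so the effective output rate of queue $2$ exceeds its input, giving drift $-2\overbar Q_2^2(t)\cdot(\text{positive gap}) + O(1)$; then Lemma~\ref{lmm:drift-analy} yields $\Expect_\pi[\overbar Q_2^2(0)] \le c/\epsilon$ and Lemma~\ref{lmm:diff-queue-sta} with Corollary~\ref{cor:consis-priori-sub-model} propagates it to all $t$, finally $Q_2(t) \le \overbar Q_2^2(t)$ from Proposition~\ref{prop:coupling2} gives $\Expect[Q_2(t)] \le 2\epsilon^{-1}$. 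The main obstacle is getting the signs and the correct "effective drift" constant right in the two-queue coupled system $\calS_2$ — in particular identifying which of $\overbar Q_1^2, \overbar Q_2^2$ is the one with a clean negative drift, and verifying the drift constant is at least $\epsilon$ (this is where $z_2^* \ge \epsilon$ and $z_1^*,z_2^* \ge \epsilon$, established right after Lemma~\ref{lmm:z}, get used), after which the ergodicity/drift/consistency machinery is a routine repeat of Lemma~\ref{lmm:level1}.
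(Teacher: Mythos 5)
There is a genuine gap. Your case analysis of $\Delta W$ misses a transition: in $\calS_2$ a type-$3$ arrival is \emph{not} inert when $\overbar{Q}_2^2(t)>0$. The system $\calS_2$ removes unmatched type-$3$ agents at the end of the period, but the arriving type-$3$ agent is still offered a match first: if queue $2$ is nonempty, $\SP$ performs match $(2,3)$, so $\overbar{Q}_2^2$ drops by $1$ and $\Delta W = -1$. You wrote ``nothing in $\{1,2\}$ changes'' for type-$3$ arrivals, which is only correct on the event $\overbar{Q}_2^2(t)=0$. Because you dropped this transition, you concluded that on the event $\overbar{Q}_2^2(t)>0$ the drift of $\calL = W^2$ has the wrong sign (constant $\lambda_1 - \lambda_2 < 0$), and then tried to argue that this event never occurs -- which is false -- before abandoning the approach. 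The correct drift on $\{\overbar{Q}_2^2(t)>0\}$ comes from both type-$1$ and type-$3$ arrivals decreasing $W$ while only type-$2$ arrivals increase it, giving drift coefficient $\lambda_1 + \lambda_3 - \lambda_2 = z_3^* \ge \epsilon$ (this is the role of $z_3^*\ge\epsilon$ noted right after Lemma~\ref{lmm:z}); together with the $\{\overbar{Q}_1^2(t)>0\}$ case, where the coefficient is $\lambda_2-\lambda_1 = z_2^*\ge\epsilon$, you get $\Expect[\calL(t+1)-\calL(t)\mid\cdot] \le -2\epsilon|W(t)|+1$ in both cases, exactly as in the paper's Claim~\ref{clm:1}.

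Your fallback suggestion -- bound $\Expect[\overbar{Q}_2^2(t)]$ directly via the Lyapunov function $(\overbar{Q}_2^2(t))^2$ -- also has a problem: $\overbar{Q}_2^2(t)$ in isolation is not a Markov chain (its transitions depend on whether $\overbar{Q}_1^2(t)$ is zero), so the drift machinery (Lemmas~\ref{lmm:drift-analy} and~\ref{lmm:ergodic}) does not apply to it cleanly. The paper circumvents this by using $\calL(t) = (\overbar{Q}_2^2(t)-\overbar{Q}_1^2(t))^2$, a function of the genuine Markov chain $(\overbar{Q}_{[2]}^2(t))_{t\ge0}$, and exploiting the mutual-exclusivity structure $\overbar{Q}_1^2(t)\cdot\overbar{Q}_2^2(t)=0$ from \ref{P:greedy-pro-3} so that $|W(t)| = \overbar{Q}_1^2(t) + \overbar{Q}_2^2(t)$. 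Also your stated gap ``$z_1^*+z_2^*=\lambda_2$ so the effective output rate of queue $2$ exceeds its input'' is not the relevant identity: the outflow rate of queue $2$ (when nonempty) is $\lambda_1+\lambda_3$ and the inflow is $\lambda_2$, and the surplus is $z_3^*$, not something derivable from $z_1^*+z_2^*=\lambda_2$. Once these two points are fixed, the remaining steps you outline -- ergodicity via Lemma~\ref{lmm:ergodic}, stationary bound via Lemma~\ref{lmm:drift-analy}, transfer to all times via Lemma~\ref{lmm:diff-queue-sta} and Corollary~\ref{cor:consis-priori-sub-model}, and the comparison $Q_2(t)\le\overbar{Q}_2^2(t)$ from Proposition~\ref{prop:coupling2} combined with Lemma~\ref{lmm:level1} -- do match the paper's proof.
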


\begin{proof}
    Consider the Lyapunov function $\calL(t)\triangleq(\overbar{Q}_2^2(t)-\overbar{Q}_1^2(t))^2$.
    Conditioned on $\calL(t)> 0$, we can either have $\overbar{Q}_1^2(t)>0$ or $\overbar{Q}_2^2(t)>0$ by \prettyref{fact:tp-greedy}. 

\begin{claim} \label{clm:1}For all $t\geq 0$ and $i=1,2$, we have
    \begin{align}
        \Expect\left[(\calL(t + 1) - \calL(t) \mid \overbar{Q}^2(t), \calL(t) > 0 ,\overbar{Q}_i^2(t)>0 \right ]  \leq -2\epsilon \overbar{Q}_i^2(t) + 1.\label{sec6:ineq1}
    \end{align}
\end{claim}  
The proof of Claim \ref{clm:1} is deferred to \prettyref{app:claim1}. 
Using Claim \ref{clm:1}, we have
\begin{align*}
        \Expect[(\calL(t + 1) - \calL(t) \mid \overbar{Q}^2(t), \calL(t)> 0] 
        &\leq -(2\epsilon\overbar{Q}_1^2(t) + 1) \cdot\indc{\overbar{Q}_1^2(t)>0} -(2\epsilon\overbar{Q}_2^2(t)+1) \cdot \indc{\overbar{Q}_2^2(t)>0} \\
        &\leq -2\epsilon|\overbar{Q}_2^2(t)-\overbar{Q}_1^2(t)|+1,
    \end{align*}
\noindent where the first inequality follows from Claim \ref{clm:1} and the second inequality follows from the fact that $\overbar{Q}_1^2(t) \cdot \overbar{Q}_2^2(t) = 0$ for all $t \geq 0$ by \prettyref{fact:tp-greedy}.
Denote the stationary distribution of the Markov Chain $(\overbar{Q}^2(t))_{t \geq 0}$ by $\pi$, which is granted by \prettyref{lmm:ergodic}.
Per Lemma \ref{lmm:drift-analy}, we have
 $$
 \Expect_{\pi}[|\overbar{Q}_2^2(0) - \overbar{Q}_1^2(0)|] \leq \frac{1}{2\epsilon},
 $$
Per \prettyref{lmm:diff-queue-sta} and \prettyref{prop:consis-priori-sub-model}, we have $\Expect[|\overbar{Q}_2^2(t) - \overbar{Q}_1^2(t)|] \leq \epsilon^{-1}$ for all $t \geq 0$.
Since $\overbar{Q}_2^2(t) \geq Q_2(t)$ and $\overbar{Q}_1^2(t) \leq Q_1(t)$ for all $t \geq 0$ by \prettyref{coro:warm-up-tp-coupling}, together with \prettyref{lmm:level1}, we get $\Expect[\overbar{Q}_1^2(t)] \leq \epsilon^{-1}$ and $\Expect[Q_2(t)]\le \Expect[\overbar{Q}_2^2(t)] \leq 2\epsilon^{-1}$ for all $t\geq0$. 
 \end{proof}

Next, we upper bound $\Expect[Q_3(t)]$, for which we can no longer achieve the optimal scaling of $O(\epsilon^{-1})$.

\begin{lemma} \label{lmm:lvl3}
    $\Expect[Q_3(t)] \leq O(\epsilon^{-2})$ for all $t \geq 0$ under $\TP$.
\end{lemma}

\begin{proof}
Consider the following Lyapunov function
\begin{align}
        \calL(t) \triangleq\beta_1\left(\overbar{Q}_1^3(t)\right)^2 + \beta_2\left(\overbar{Q}_2^3(t)- \overbar{Q}_1^3(t)\right)^2 + \left(\overbar{Q}_3^3(t) - \overbar{Q}_2^3(t) + \overbar{Q}_1^3(t)\right)^2 \label{eq:calL_tree} \,,
    \end{align}
    where we will determine $\beta_1,\beta_2 \in \mathbb{R}_{\geq 0}$ momentarily.  
    Define $\calB(t)\triangleq\{i \in \mathcal{A} \mid \overbar{Q}_i^3(t) > 0\}$ 
 as the set of non-empty queues at the end of time $t$.
 Define the following events $\calE_1(t) \triangleq \{\calB(t)=\{1\}\}, \calE_2(t) \triangleq \{\calB(t)=\{2\}\}, \calE_3(t) \triangleq \{\calB(t)=\{3\}\}$, and $\calE_4(t) \triangleq  \{\calB(t)=\{1,3\}\}$. Note that by \prettyref{fact:tp-greedy}, the union of these events forms a partition when $\calL(t) > 0$. Next, we introduce the following claim on bounding $ \Expect[\calL(t+1)-\calL(t) \mid \overbar{Q}^3(t), \calL(t) > 0, \calE_i(t)]$ for $1\le i \le 4$.

\begin{claim}\label{clm:2}
  For all $t \geq 0$, we have
\begin{align*} 
    & \Expect[\calL(t+1)-\calL(t) \mid \overbar{Q}^3(t), \calL(t) > 0, \calE_1(t)]  \\
    & \leq  - 2 \left [ \left(\beta_1+\beta_2\right)\left(\lambda_2 -\lambda_1 \right)  -\left(\lambda_3-\lambda_2+\lambda_1 \right) \right]\cdot \left| \overbar{Q}_3^3(t) - \overbar{Q}_2^3(t) + \overbar{Q}_1^3(t) \right|   + \beta_1+\beta_2+1.
 \end{align*}
 Moreover, for all $k=2,3,4$, and $t \geq 0$, we have
 \begin{align*}
     \Expect[\calL(t+1)-\calL(t) \mid \overbar{Q}^3(t), \calL(t) > 0, \calE_k(t)] \leq -2 \epsilon \left| \overbar{Q}_3^3(t) - \overbar{Q}_2^3(t) + \overbar{Q}_1^3(t) \right| + 2(\beta_1 + \beta_2 + 1).
 \end{align*}
\end{claim}

The proof of \prettyref{clm:2} is deferred to \prettyref{app:claim2}.
Note that per Claim \ref{clm:2}, $\calL(t)$ has a negative drift under $\calE_2(t), \calE_3(t), \calE_4(t)$ regardless of the choices of $\beta_1$ and $\beta_2$, but the sign of the drift is unclear under $\calE_1(t)$. 
Let $\delta\triangleq(\beta_1+\beta_2)(\lambda_2-\lambda_1) - (\lambda_3- \lambda_2+\lambda_1)$ be the coefficient in the drift of $\calL(t)$ under $\calE_1(t)$.
In order to ensure that $\calL(t)$ has a negative drift under $\mathcal{E}_1(t)$ as well, we will pick $\beta_1$ and $\beta_2$ to ensure that $\delta > 0$.
 Therefore, the overall drift is given by
\begin{align*}
    \Expect[\calL(t+1)-\calL(t) \mid \overbar{Q}^3(t), \calL(t)>0]
    &\leq -2 \min\{\delta,\epsilon\} \left|\overbar{Q}_3^3(t) - \overbar{Q}_2^3(t) + \overbar{Q}_1^3(t)\right| + 2(\beta_1 + \beta_2 + 1) \\
    &= -2 \min\{\delta,\epsilon\} \left|\overbar{Q}_3^3(t) - \overbar{Q}_2^3(t) + \overbar{Q}_1^3(t)\right| + \frac{2(\delta + \lambda_3)}{\lambda_2 - \lambda_1}.
\end{align*}
It follows by \prettyref{lmm:ergodic} that the stationary distribution $\pi$ of the Markov chain $(\overbar{Q}^3(t))_{t \geq 0}$ exists.
Moreover, by \prettyref{lmm:drift-analy}, and if we choose appropriate $\beta_1$ and $\beta_2$ such that $\delta = \epsilon$,
\begin{align}
    \Expect_{\pi}\left[\left|\overbar{Q}_3^3(0) - \overbar{Q}_2^3(0) + \overbar{Q}_1^3(0)\right|\right] \leq \frac{2(\delta + \lambda_3)}{2(\lambda_2 - \lambda_1)\min\{\delta,\epsilon\}}
    = \frac{2(\epsilon + \lambda_3)}{2\epsilon(\lambda_2 - \lambda_1)}
    \leq O(\epsilon^{-2}),\label{eqn:exp-len-queue-3}
\end{align}
where the last inequality holds since $\lambda_3 \leq 1$ and $\lambda_2 - \lambda_1 \geq \epsilon$.
To translate the expected queue-length under the steady-state to that at all times, by \prettyref{lmm:diff-queue-sta} and \prettyref{prop:consis-priori-sub-model}, we have 
$$
\Expect\left[\left|\overbar{Q}_3^3(t) - \overbar{Q}_2^3(t) + \overbar{Q}_1^3(t)\right|\right] \leq O({\epsilon^{-2}})
$$
for all $t \geq 0$.
Moreover, by \prettyref{lmm:level2}, we have $\E[Q_2(t)] \leq 2\epsilon^{-1}$ for all $t \geq 0$.
Therefore, by \prettyref{coro:warm-up-tp-coupling}, we conclude that $\E[Q_3(t)] \leq \E[\overbar{Q}_3^3(t)] \leq O(\epsilon^{-2})$ for all $t\geq0$.
\end{proof}

\begin{remark}
    Note that in the analysis of \prettyref{lmm:lvl3}, we are unable to achieve the optimal scaling of $O(\epsilon^{-1})$ for the expected length of queue $3$ in certain cases by using the generalized quadratic Lyapunov function defined in \eqref{eq:calL_tree}.
    Specifically, when $\lambda_3 = \Omega(1)$ and $\lambda_2 - \lambda_1 = O(\epsilon)$, the last inequality in \eqref{eqn:exp-len-queue-3} will be tight.
    Furthermore, when generalizing our analysis to matching networks with an arbitrary depth, similar situations would repetitively occur as the depth grows, indicating that it is inevitable for the resulting regret to depend on the depth.
\end{remark}

\subsubsection{Proof of \prettyref{clm:1}}\label{app:claim1}.
First, assume that $\overbar{Q}_1^2(t)>0$. If an agent of type 1 arrives, then we have
    \begin{align*}
        &(\overbar{Q}_2^2(t+1)- \overbar{Q}_1^2(t+1) ) - (\overbar{Q}_2^2(t)-\overbar{Q}_1^2(t) ) =  -1,\\
        &(\overbar{Q}_2^2(t+1)- \overbar{Q}_1^2(t+1) ) + (\overbar{Q}_2^2(t)-\overbar{Q}_1^2(t) ) =  -2\overbar{Q}_1^2(t) - 1. 
    \end{align*}
   If an agent of type 2 arrives, then we have
    \begin{align*}
        &(\overbar{Q}_2^2(t+1)- \overbar{Q}_1^2(t+1) ) - (\overbar{Q}_2^2(t)-\overbar{Q}_1^2(t) ) =  1,\\
        &(\overbar{Q}_2^2(t+1)- \overbar{Q}_1^2(t+1) ) + (\overbar{Q}_2^2(t)-\overbar{Q}_1^2(t) ) =  -2\overbar{Q}_1^2(t) + 1. 
    \end{align*}

   \noindent And, if an agent of type $3$ arrives, $\calL(t + 1)= \calL(t)$.
   Thus, we have for all $t \geq 0$, 
    \begin{align*}
        \Expect\left[(\calL(t + 1) - \calL(t) \mid \overbar{Q}^2(t), \calL(t) > 0, \overbar{Q}_1^2(t)>0 \right] &\leq -2 \overbar{Q}_1^2(t) (\lambda_2-\lambda_1) + \lambda_1 + \lambda_2 \nonumber \\
        & \leq -2\epsilon \overbar{Q}_1^2(t) + 1,\ 
    \end{align*}
    \noindent where we used the fact that $\lambda_2 - \lambda_1 \geq \epsilon$ and $\lambda_1 + \lambda_2 \leq 1$.
    
    Now assume that $\overbar{Q}_2^2(t)>0$. If an agent of type $1$ or $3$ arrives, then we have
    \begin{align*}
        &(\overbar{Q}_2^2(t+1)- \overbar{Q}_1^2(t+1) ) - (\overbar{Q}_2^2(t)-\overbar{Q}_1^2(t) ) =  -1,\\
        &(\overbar{Q}_2^2(t+1)- \overbar{Q}_1^2(t+1) ) + (\overbar{Q}_2^2(t)-\overbar{Q}_1^2(t) ) =  2\overbar{Q}_2^2(t) - 1 ,
    \end{align*}
     and if an agent of type $2$ arrives, then we have
      \begin{align*}
        &(\overbar{Q}_2^2(t+1)- \overbar{Q}_1^2(t+1) ) - (\overbar{Q}_2^2(t)-\overbar{Q}_1^2(t) ) =  1,\\
        &(\overbar{Q}_2^2(t+1)- \overbar{Q}_1^2(t+1) ) + (\overbar{Q}_2^2(t)-\overbar{Q}_1^2(t) ) =  2\overbar{Q}_2^2(t) + 1 .
    \end{align*}
     Thus, we have for all $t \geq 0$,
    \begin{align*}
        \Expect\left[(\calL(t + 1) - \calL(t) \mid \overbar{Q}^2(t), \calL(t) > 0,\overbar{Q}_2^2(t)>0 \right]
        &\leq -2 \overbar{Q}_2^2(t) (\lambda_3 - \lambda_2+ \lambda_1) + \lambda_1 + \lambda_2 + \lambda_3 \nonumber \\
        &\leq -2 \epsilon \overbar{Q}_2^2(t) + 1 \,,
    \end{align*}
   where we used the fact that $\lambda_3-\lambda_2+\lambda_1 \geq \epsilon$ and $\lambda_1+\lambda_2+\lambda_3\leq1$.

\subsubsection{Proof of Claim \ref{clm:2}}\label{app:claim2}

Under $\calE_1(t)$, any arriving agent with types $1$ or $3$ increases $\overbar{Q}_1^3(t)$ or $\overbar{Q}_3^3(t)$ by 1, respectively.
If the arriving agent is of type 2, $\overbar{Q}_1^3(t)$ decreases by $1$, and an arriving agent of type 4 does not affect the queue-lengths since $\overbar{Q}_3^3(t)=0$. Thus, we have
 \begin{align*}
     &\Expect[\calL(t+1)-\calL(t) \mid \overbar{Q}^3(t), \calL(t) > 0, \calE_1(t)] \\
     &\leq  (-2\beta_1(\lambda_2 -\lambda_1)\overbar{Q}_1^3(t)+\beta_1) + (-2\beta_2(\lambda_2 -\lambda_1)\overbar{Q}_1^3(t)+\beta_2) + (2(\lambda_3-\lambda_2+\lambda_1)\overbar{Q}_1^3(t) + 1) \\ &= (-2(\beta_1+\beta_2)(\lambda_2 -\lambda_1)\overbar{Q}_1^3(t)+\beta_1+\beta_2) + (2(\lambda_3-\lambda_2+\lambda_1)\overbar{Q}_1^3(t) + 1)\\
     &= -2 \left[ \left(\beta_1+\beta_2 \right)\left(\lambda_2 -\lambda_1 \right) - \left(\lambda_3-\lambda_2+\lambda_1 \right) \right]|\overbar{Q}_3^3(t) - \overbar{Q}_2^3(t) + \overbar{Q}_1^3(t)| + \beta_1+\beta_2+1 \,,
 \end{align*}
 where the last equality holds because under $\calE_1(t)$, we have $\overbar{Q}_3^3(t) = \overbar{Q}_2^3(t)=0$. 
 Under $\calE_2(t)$, any arriving agent of type 2 increases $\overbar{Q}_2^3(t)$ by 1, and any arriving agent of types 1 or 3 decreases $\overbar{Q}_2^3(t)$ by 1, while $\overbar{Q}_1^3(t+1)=\overbar{Q}_3^3(t+1)=0$. 
 An arriving agent of type 4 does not affect the queue-lengths.
 Thus, we have
 \begin{align*}
      \Expect[\calL(t+1)-\calL(t) \mid \overbar{Q}^3(t), \calL(t) > 0, \calE_2(t)]  &\leq -2(\beta_2+1)(\lambda_3-\lambda_2+\lambda_1)\overbar{Q}_2^3(t)  + \beta_2+1  \\
     &\leq -2(\beta_2+1) \epsilon |\overbar{Q}_3^3(t) - \overbar{Q}_2^3(t) + \overbar{Q}_1^3(t)| + \beta_2 + 1\\
      &\leq -2 \epsilon |\overbar{Q}_3^3(t) - \overbar{Q}_2^3(t) + \overbar{Q}_1^3(t)| + \beta_2+1.
 \end{align*}
 Under $\calE_3(t)$, any arriving agent of type 1 increases $\overbar{Q}_1^3(t)$ by 1, any arriving agent of type 2 decreases $\overbar{Q}_3^3(t)$ by 1, any arriving agent of type 3 increases $\overbar{Q}_3^3(t)$ by 1, and any arriving agent of type 4 decreases $\overbar{Q}_3^3(t)$ by 1. Thus, we have
 \begin{align*}
       \Expect[\calL(t+1)-\calL(t) \mid \overbar{Q}^3(t), \calL(t) > 0, \calE_3(t)]  &\leq 2(\beta_1+\beta_2) \lambda_1 - 2 (\lambda_4-\lambda_3 + \lambda_2 -\lambda_1) \overbar{Q}_3^3(t)+1 \\
     &\leq -2\epsilon |\overbar{Q}_3^3(t) - \overbar{Q}_2^3(t) + \overbar{Q}_1^3(t)| + 2(\beta_1+\beta_2) + 1.
 \end{align*}
Finally, under  $\calE_4(t)$, any arriving agent of types 1 or 3 increases $\overbar{Q}_1^3(t)$ or $\overbar{Q}_3^3(t)$ by 1, respectively. Any arriving agent of type 2 decreases $\overbar{Q}_1^3(t)$ by 1, and any arriving agent of type 4 decreases $\overbar{Q}_3^3(t)$ by 1. Thus, we have
\begin{align*}
    &\Expect[\calL(t+1)-\calL(t) \mid \overbar{Q}^3(t), \calL(t) > 0, \calE_4(t)] \\
    &\leq -2(\lambda_2-\lambda_1)(\beta_1+\beta_2)\overbar{Q}_1^3(t)+ (\lambda_1+\lambda_2)(\beta_1+\beta_2) -2 (\lambda_4 - \lambda_3 + \lambda_2 -\lambda_1)(\overbar{Q}_1^3(t) + \overbar{Q}_3^3(t))  +1 \\
    &\leq  -2 (\lambda_4 - \lambda_3 + \lambda_2 -\lambda_1)(\overbar{Q}_1^3(t) + \overbar{Q}_3^3(t))  +1 + (\lambda_1+\lambda_2)(\beta_1+\beta_2)  \\
    &\leq -2 \epsilon|\overbar{Q}_3^3(t) - \overbar{Q}_2^3(t) + \overbar{Q}_1^3(t)|  + (\beta_1 + \beta_2)+1,
\end{align*} 
concluding the proof.

\subsection{Towards optimal regret scaling}
\label{sec:tp-better-queue3}

In this subsection, we assume that $(\calA, \calM)$ is a path with $4$ nodes, where $1$ is the leaf node and $4$ is the root.
We aim to upper bound the expected length of queue $3$ by $O(\epsilon^{-1})$ under the stationary distribution.
It then follows by \prettyref{prop:consis-priori-sub-model} and \prettyref{lmm:diff-queue-sta} that $\E[Q_3(t)] \leq O(\epsilon^{-1})$ for every $t \geq 0$.

\begin{proposition}\label{prop:queue-3-path-4}
    Suppose that $(\calA, \calM)$ is a path with $4$ nodes, where $1$ is the leaf node.
    Then, under $\TP$, $\E_{\pi}[Q_3(t)] \leq O(\epsilon^{-1})$.
\end{proposition}

\begin{proof}
Recall that queue $4$ is truncated as it is under-demanded, and let $\calS$ denote the system obtained by further truncating queue $2$.
By \prettyref{lem:truncate-monotone}, it suffices to show that $\E_{\pi}[Q_3(t)] \leq O(\epsilon^{-1})$ under $\calS$.
From now on, we use $(Q(t))_{t \geq 0}$ to denote the queue-length process under $\calS$.

Next, we define another system $\calS'$ by modifying $\calS$.
In particular, we will ``split'' queue $3$ into two sub-queues $3'$ and $3''$ with arrival probability $\lambda_{3'} \triangleq \lambda_2 - \lambda_1 - \epsilon / 2$ and $\lambda_{3''} \triangleq \lambda_3 - (\lambda_2 - \lambda_1 - \epsilon / 2)$, respectively, where agents in queue $3'$ can only match with agents in queue $2$, and agents in queue $3''$ can only match with agents in queue $4$.
Notice that both $\lambda_{3'}$ and $\lambda_{3''}$ are strictly positive since, by \prettyref{lem:char-eps-i},
\begin{align*}
    \min\{\lambda_2 - \lambda_1, \lambda_3 - \lambda_2 + \lambda_1\}
    \geq \epsilon.
\end{align*}
We implement the splitting by randomly assigning each agent in queue $3$ to queue $3'$ and queue $3''$ with probability proportional to respectively $\lambda_{3'}$ and $\lambda_{3''}$.
Under $\calS'$, when an agent in queue $2$ arrives, $\TP$ first tries to match it with an agent in queue $1$, then tries to match it with an agent in queue $3'$ if queue $1$ is empty, and finally discards this agent otherwise.
Moreover, when an agent in queue $4$ arrives, $\TP$ first tries to match it with an agent in queue $3''$, and then discards this agent if queue $3''$ is empty.
Notice that every arriving agent to queues $1$, $3$, or $3''$ will immediately join the queue since its neighboring queues are all truncated and hence empty.

We couple the arrival processes of $\calS$ and $\calS'$ in the most natural way, where arrivals in queue $3'$ and queue $3''$ under $\calS'$ are coupled with arrivals in queue $3$ under $\calS$.
Denote $(Q'(t))_{t \geq 0}$ as the queue-length process under $\calS'$. 
We first show that, for every sample path, $Q_1(t) = Q_1'(t)$ for every $t \geq 0$.

\begin{lemma}\label{lem:couple-q1-q1prime}
    Under the above coupling between the arrival processes of $\calS$ and $\calS'$, for every sample path, $Q_1(t) = Q_1'(t)$ for every $t \geq 0$.
\end{lemma}

\begin{proof}
    The statement holds for $t = 0$ since $Q_1(0) = Q_1'(0) = 0$.
    Assume for induction that $Q_1(t) = Q_1'(t)$ for time $t \geq 0$, and we show that it also holds for time $t + 1$.
    If the arrival under $\calS$ at time $t + 1$ belongs to queue $3$ or queue $4$, then $Q_1(t + 1) = Q_1(t) = Q_1'(t) = Q_1'(t + 1)$.
    If the arrival under $\calS$ at time $t + 1$ belongs to queue $1$, then $Q_1(t + 1) = Q_1(t) + 1$, and $Q_1'(t + 1) = Q_1'(t) + 1$, implying $Q_1(t + 1) = Q_1'(t + 1)$.
    If the arrival under $\calS$ at time $t + 1$ belongs to queue $2$, since by the matching rule of $\TP$, either both the arrivals under $\calS$ and $\calS'$ are matched to an agent in queue $1$, or none of them does, implying $Q_1(t + 1) = Q_1'(t + 1)$.
\end{proof}

Next, we show that under every sample path, $Q_3(t) \leq Q_{3'}'(t) + Q_{3''}'(t)$ for every $t \geq 0$.

\begin{lemma}\label{lem:couple-q2-q2prime-q3prime}
    Under the above coupling between the arrival processes of $\calS$ and $\calS'$, for every sample path, $Q_3(t) \leq Q_{3'}'(t) + Q_{3''}'(t)$ for every $t \geq 0$.
\end{lemma}

\begin{proof}
    The statement holds for $t = 0$ since $Q_3(0) = Q_{3'}'(0) = Q_{3''}'(0) = 0$.
    Assume for induction that $Q_3(t) \leq Q_{3'}'(t) + Q_{3''}'(t)$ for time $t \geq 0$, and we show that it also holds for time $t + 1$.
    Recall that we couple the arrival processes of $\calS$ and $\calS'$ in the most natural way, and we prove the statement for different possibilities of the arrival at time $t + 1$.
    Suppose that the arrival under $\calS$ at time $t + 1$ belongs to queue $i$.
    \begin{itemize}
        \item If $i = 1$, then $Q_3(t + 1) = Q_3(t)$, $Q_{3'}'(t + 1) = Q_{3'}'(t)$, and $Q_{3''}'(t + 1) = Q_{3''}'(t)$, implying $Q_3(t + 1) \leq Q_{3'}'(t + 1) + Q_{3''}'(t + 1)$.
        
        \item If $i = 3$, then $Q_3(t + 1) = Q_3(t) + 1$ and $Q_{3'}'(t + 1) + Q_{3''}'(t + 1) = Q_{3'}'(t) + Q_{3''}'(t) + 1$, implying $Q_3(t + 1) \leq Q_{3'}'(t + 1) + Q_{3''}'(t + 1)$.

        \item If $i = 2$, by \prettyref{lem:couple-q1-q1prime}, either both arrivals under $\calS$ and $\calS'$ are matched with an agent in queue $1$, or none of them does.
        In the former case, we have $Q_3(t + 1) = Q_3(t)$, $Q_{3'}'(t + 1) = Q_{3'}'(t)$, and $Q_{3''}'(t + 1) = Q_{3''}'(t)$, implying $Q_3(t + 1) \leq Q_{3'}'(t + 1) + Q_{3''}'(t + 1)$.
        
        Now, we focus on the latter case, where none of the arrivals under $\calS$ and $\calS'$ are matched with an agent in queue $1$.
        If $Q_3(t) = 0$, then $Q_3(t + 1) = Q_3(t) = 0 \leq Q_{3'}'(t + 1) + Q_{3''}'(t + 1)$.
        Otherwise, the arrival under $\calS$ is matched to an agent in queue $3$, which gives $Q_3(t + 1) = Q_3(t) - 1$.
        Since $Q_{3'}'(t + 1) \geq Q_{3'}'(t) - 1$ and $Q_{3''}'(t + 1) = Q_{3''}'(t)$, it follows that
        \begin{align*}
            Q_3(t + 1)
            = Q_3(t) - 1
            \leq Q_{3'}'(t) + Q_{3''}'(t) - 1
            \leq Q_{3'}'(t + 1) + Q_{3''}'(t + 1),
        \end{align*}
        as desired.

        \item Suppose that $i = 4$.
        If $Q_3(t) = 0$, then $Q_3(t + 1) = Q_3(t) = 0\leq Q_{3'}'(t + 1) + Q_{3''}'(t + 1)$ holds.
        Otherwise, the arrival under $\calS$ is matched to an agent in queue $3$, which gives $Q_3(t + 1) = Q_3(t) - 1$.
        Since $Q_{3'}'(t + 1) = Q_{3'}'(t)$ and $Q_{3''}'(t + 1) \geq Q_{3''}'(t) - 1$, it follows that
        \begin{align*}
            Q_3(t + 1)
            = Q_3(t) - 1
            \leq Q_{3'}'(t) + Q_{3''}'(t) - 1
            \leq Q_{3'}'(t + 1) + Q_{3''}'(t + 1),
        \end{align*}
        as desired.
    \end{itemize}
    Therefore, the statement holds for time $t + 1$, completing the proof.
\end{proof}

Note that we can view the system consisting of queues $1$, $2$, and $3'$ as an $M/M/1$ queueing system with request arrival rate $\lambda_1 + \lambda_{3'} = \lambda_2 - \epsilon / 2$ and service rate $\lambda_2$, and view the system consisting of queues $3''$ and $4$ as an $M/M/1$ queueing system with request arrival rate $\lambda_{3''} = \lambda_1 + \lambda_3 - \lambda_2 + \epsilon / 2$ and service rate $\lambda_4$.
These two queueing systems are independent by the construction of $\calS'$, and the service rate in every system is greater than its request arrival rate by $\epsilon / 2$ since $\lambda_4 \geq \lambda_1 + \lambda_3 - \lambda_2 + \epsilon$ by \prettyref{lem:char-eps-i}.
Hence, by Lemmas~\ref{lem:couple-q1-q1prime} and~\ref{lem:couple-q2-q2prime-q3prime},
\begin{align*}
    \E_{\pi}[Q_1(t) + Q_3(t)]
    \leq \E_{\pi}[Q_1'(t) + Q_{3'}'(t) + Q_{3''}'(t)]
    \leq \frac{4}{\epsilon},
\end{align*}
where the second inequality holds by the standard queue-length bound for $M/M/1$ queueing systems, implying $\E_{\pi}[Q_3(t)] \leq O(\epsilon^{-1})$.
\end{proof}

\paragraph{Discussion on generalizing \prettyref{prop:queue-3-path-4}.}
We show in \prettyref{prop:queue-3-path-4} that the expected length of $Q_3$ is upper bounded by $O(\epsilon^{-1})$ in the stationary distribution via a careful splitting argument.
It is then natural to ask whether one can generalize this approach to bound the expected queue lengths in more general networks.
Unfortunately, as we argue now, this approach does not even apply to other queues in a longer path.
Recall that our approach involves two key steps:
\begin{enumerate}
    \item We first split certain queues to create multiple independent sub-systems, where each sub-system contains only one ``server'' queue or only one ``request'' queue so that the expected queue length can be bounded using standard results for $M/M/1$ queueing systems.
    \item Then, we apply Lemmas~\ref{lem:couple-q1-q1prime} and~\ref{lem:couple-q2-q2prime-q3prime} to argue that, under each sample path, the total queue length in these sub-systems upper bounds the total queue length in the original system.
\end{enumerate}
Nevertheless, the second step no long holds true in more general networks.
Intuitively, this is because assigning agents into different sub-systems may accidentally result in better matching decisions, rendering a smaller total queue length.

We illustrate this barrier in the case where $(\calA, \calM)$ is a path with $5$ nodes.
To bound $\E_{\pi}[Q_4(t)]$, one attempt would be to truncate the queues with an even depth, i.e., the queues with an odd index.
Suppose that we split queue $2$ into sub-queues $2'$ and $2''$, and we couple the original system $\calS$ and the new system $\calS'$ in the most natural way.
We specify a sample path under which $Q_2(t) + Q_4(t) > Q_{2'}'(t) + Q_{2''}'(t) + Q_4'(t)$, where $(Q(t))_{t \geq 0}$ and $(Q'(t))_{t \geq 0}$ are the queue-length processes of $\calS$ and $\calS'$, respectively.
Consider the arrival sequence $(2, 4, 3, 1)$ under $\calS$, and suppose the corresponding arrival sequence under $\calS'$ is $(2', 4, 3, 1)$.
Only one match $(2, 3)$ is performed by $\TP$ under $\calS$, whereas two matches $(3, 4)$ and $(1, 2')$ are performed by $\TP$ under $\calS'$, implying
\begin{align*}
    Q_2(4) + Q_4(4)
    = 1
    > 0
    = Q_{2'}'(4) + Q_{2''}'(4) + Q_4'(t).
\end{align*}
Similar examples can be exploited to show that splitting other queues such as $3$ or $4$ does not always lead to a larger total queue length under all sample paths.

\section{Deferred proofs}

\subsection{Proof of \prettyref{lmm:opt-test}}
\label{sec:proof-lmm-opt-test}

      Recall that the constraints of $\SPP(\lambda)$ can be written as
    \begin{align*}
        \begin{bmatrix}
            M & I
        \end{bmatrix}
        \begin{bmatrix}
            z \\
            s
        \end{bmatrix}
        = \lambda,
    \end{align*}
    where $I \in \reals^{n \times n}$ is the identity matrix.
    Let $M_B \in \reals^{n \times n}$ be the matrix obtained by selecting the columns of $[M \; I]$ corresponding to basic variables of $\SPP(\lambda)$, which implies that $M_B$ is full-rank.
    Hence, $(z^*_{\calM_+}, s^*_{\calA_+})^T = M_B^{-1} \lambda$.
    For each vector $v \in \reals^n$, we use $(M_B^{-1}v)_{\calM_+}$ to denote the first $|\calM_+|$ components of $M_B^{-1} v$.
    Fix $t > 0$.
    Note that
    \begin{align*}
        \calR^*(t)
        \leq t \cdot r^T_{\calM_+} z^*
        = t \cdot r^T_{\calM_+}\left(M_B^{-1} \lambda\right)_{\calM_+}
    \end{align*}
    and
    \begin{align*}
        \calR^{\Pi}(t)
        = \E \left[ r^T D(t) \right]
        = \E \left[ r^T_{\calM_+} \left(M_B^{-1} (A(t) - Q(t))\right)_{\calM_+} \right]
        = \E \left[ r^T_{\calM_+} \left(M_B^{-1} (t\lambda - Q(t))\right)_{\calM_+} \right].
    \end{align*}
    Therefore, the regret of $\Pi$ after time $t$ is
    \begin{align*}
        \calR^*(t) - \calR^{\Pi}(t)
        \leq \E \left[ r^T_{\calM_+}\left(M_B^{-1} Q(t)\right)_{\calM_+} \right]
        \leq r_{\max} \cdot \norm{M_B^{-1}}_{\infty} \cdot \E \left[ \norm{Q(t)}_1 \right]
        \leq r_{\max} B \cdot \norm{M_B^{-1}}_{\infty}.
    \end{align*}        
    By \cite[Theorem 4.1]{kerimov2025optimality}, each entry of $M_B^{-1}$ lies between $[-1, 1]$, implying that $\norm{M_B^{-1}}_{\infty} \leq n$.
    This concludes the proof.

\subsection{Deferred proofs for \prettyref{sec:proof-regret-static-priority}}
\label{sec:proof-static}

\subsubsection{Proof of \prettyref{lmm:ub-coeffi}}

    The upper bound holds straightforwardly for $i \in \calA_0$ such that $d(r, i) \leq 2$ since $\calP(i) = \emptyset$.
    For $i \in \calA_0$ such that $d(r, i) > 2$, assume by induction that the upper bound holds for all $j \in \calP(i)$, and we show that it also holds for $i$.
    By \eqref{eq:def-coeffi},
    \begin{align*}
        \alpha_i
        &= 1 + \frac{1}{\epsilon_i} \sum_{j \in \calP(i)} \alpha_j (\lambda_j - \epsilon_j)
        \leq 1 + \frac{1}{\epsilon} \sum_{j \in \calP(i)} \alpha_j
        \leq 1 + \frac{1}{\epsilon} \sum_{j \in \calP(i)} \left(1 + \frac{1}{\epsilon}\right)^{\lfloor (d(r, j) - 1) / 2 \rfloor}\\
        &= 1 + \frac{1}{\epsilon} \sum_{t = 0}^{\lfloor (d(r, i) - 3) / 2 \rfloor} \left(1 + \frac{1}{\epsilon}\right)^t
        = \left(1 + \frac{1}{\epsilon}\right)^{\lfloor (d(r, i) - 1) / 2 \rfloor},
    \end{align*}
    where the first inequality holds since $\epsilon_i \geq \epsilon$ and $0 \leq \epsilon_j \leq \lambda_j \leq 1$ for every $j \in \calP(i)$, and the second inequality holds by the inductive hypothesis.
    Hence, the upper bound also holds for $i$, concluding the proof.

\subsubsection{Proof of \prettyref{prop:tp-one-step-drift-final}}

Fix $t \geq 0$, and we aim to upper bound the Lyapunov drift $\E[\calL(t + 1) - \calL(t) \mid Q(t)]$, where the expectation is taken over the randomness of the arrival at time $t + 1$.
For each match $m \in \calM$, let $x_m$ denote the probability that $m$ is performed by $\TP$ at time $t + 1$.
Let $\calE_1 \triangleq \{i \in \calA_0 \mid f_i(Q(t)) > 0\}$ be the set of over-demanded nodes $i$ with a strictly positive $f_i(Q(t))$.
The next lemma simplifies the Lyapunov drift.

\begin{lemma}\label{lmm:drift-two-terms}
It holds that
    \begin{align}\label{eq:drift-two-terms}
        \mathbb{E}[\calL(t + 1) - \calL(t) \mid Q(t)]
        \leq 2\sum_{i \in \calE_1} \alpha_i \cdot f_i(Q(t)) \cdot f_i(\lambda -Mx) + n \left(1 + \frac{1}{\epsilon}\right)^{\lfloor (d_r - 1) / 2 \rfloor} \,.
    \end{align}    
\end{lemma}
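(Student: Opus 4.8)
The plan is to expand the square increments of $\calL$ directly and separate the ``drift'' (linear) part from the ``noise'' (quadratic) part. For each $i\in\calA^-$ write $g_i(t):=f_i(Q(t))^+$, so $\calL(t)=\sum_i\alpha_i g_i(t)^2$. Using the identity $a^2-b^2=(a-b)(a+b)$, I would write
\begin{align*}
    \calL(t+1)-\calL(t)
    = \sum_{i\in\calA^-}\alpha_i\bigl(g_i(t+1)-g_i(t)\bigr)\bigl(g_i(t+1)+g_i(t)\bigr).
\end{align*}
The first key observation is that, by greedy property \ref{P:greedy-pro-1}, at most one match is performed at time $t+1$ and it contains the arriving agent (\ref{P:greedy-pro-2}), so $Q(t+1)-Q(t)$ has at most two nonzero entries, each of absolute value $1$. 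Consequently, for each $i$, $|f_i(Q(t+1))-f_i(Q(t))|\le 2$, and hence $|g_i(t+1)-g_i(t)|\le 2$ as well (the map $x\mapsto x^+$ is $1$-Lipschitz). Moreover only those $i\in\calA^-$ whose subtree $\calT^-(i)$ contains one of the (at most two) affected queues can have $g_i(t+1)\ne g_i(t)$; one needs to argue these are $O(1)$ in number — in fact, since the affected queues are the arriving type $j$ and possibly one neighbor, the indices $i$ with $f_i$ changing are exactly the ancestors of $j$ (and of its matched neighbor) in $\calA^-$, which is why the resulting bound picks up a factor of $n$ rather than something larger. This handles the ``$+g_i(t)$'' cross term and, together with $\alpha_i\le(1+1/\epsilon)^{\lfloor d(r,i)/2\rfloor}\le(1+1/\epsilon)^{\lfloor(d_r-1)/2\rfloor}$ from Lemma~\ref{lmm:ub-coeffi}, produces the additive error term $n(1+1/\epsilon)^{\lfloor(d_r-1)/2\rfloor}$: we bound $\sum_i\alpha_i|g_i(t+1)-g_i(t)|\cdot|g_i(t+1)-g_i(t)|\le \sum_i\alpha_i\cdot 2\cdot 2$ over the $O(1)$ changed indices, absorbing constants into the $n$ factor (or, more carefully, bounding the number of changed $i$ by the depth and the per-term contribution by $4\alpha_i$).

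For the linear drift term I would take the conditional expectation over the arrival at time $t+1$. Write $g_i(t+1)-g_i(t)$ as $\bigl(f_i(Q(t+1))-f_i(Q(t))\bigr)$ on the event $f_i(Q(t))>0$ (i.e.\ $i\in\calE_1$), where $g_i=f_i$ and, for the dominant contribution, $g_i(t+1)+g_i(t)\approx 2f_i(Q(t))$; the discrepancy between $g_i(t+1)$ and $f_i(Q(t+1))$ when $f_i(Q(t+1))$ dips below $0$ only helps the drift (makes it more negative) and can be discarded, while for $i\notin\calE_1$ the term $g_i(t)=0$ and the surviving $g_i(t+1)^2\le 4$ contribution is again folded into the additive error. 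Since $f_i$ is linear, $\E[f_i(Q(t+1))-f_i(Q(t))\mid Q(t)] = f_i\bigl(\E[\Delta A(t+1)] - M\,\E[\Delta D(t+1)]\mid Q(t)\bigr) = f_i(\lambda - Mx)$ by the definition of $x=x^{\SP}(Q(t))$ as the vector of match probabilities and $\E[\Delta A(t+1)]=\lambda$. Combining, the linear part is at most $2\sum_{i\in\calE_1}\alpha_i f_i(Q(t)) f_i(\lambda-Mx)$, which is exactly the first term on the right-hand side of \eqref{eq:drift-two-terms}.

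The main obstacle I anticipate is the careful bookkeeping in the ``error'' term: one must verify that only $O(n)$ of the indices $i\in\calA^-$ see their $f_i$ value change, that each such change is bounded by $2$, and that the positive-part truncation never increases the drift — in particular handling the boundary cases where $f_i(Q(t))>0$ but $f_i(Q(t+1))\le 0$, or where $f_i(Q(t))\le 0<f_i(Q(t+1))$. These are the places where the $x\mapsto x^+$ nonlinearity interacts with the combinatorics of which queues change, and getting the constant in front of $n(1+1/\epsilon)^{\lfloor(d_r-1)/2\rfloor}$ right (or absorbing stray constants into it) requires care; everything else is a routine expansion of a quadratic Lyapunov function.
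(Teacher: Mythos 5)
The overall strategy you propose — expand the quadratic Lyapunov function, separate the linear ``drift'' term from the quadratic ``noise'' term, use linearity of $f_i$ to turn the expected one-step drift into $f_i(\lambda - Mx)$, and bound the noise by a uniform constant times $\sum_i \alpha_i$ — is exactly the paper's approach, so the high-level plan is sound and aligned with the intended proof. However, there is a concrete quantitative gap that prevents you from reaching the stated constant.

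You claim that ``$Q(t+1)-Q(t)$ has at most two nonzero entries, each of absolute value $1$,'' which leads you to the bound $|f_i(Q(t+1))-f_i(Q(t))|\le 2$, hence $|g_i(t+1)-g_i(t)|^2\le 4$. In fact the change is strictly smaller: under a greedy two-way policy (\ref{P:greedy-pro-1}, \ref{P:greedy-pro-2}), the arriving agent of type $j$ is either immediately matched to some neighbor $i$, in which case $Q(t+1)-Q(t)=-\mathbf{e}_i$ (the arriving agent's queue $j$ is unchanged because the agent never actually enters it), or it is not matched, in which case $Q(t+1)-Q(t)=\mathbf{e}_j$ or $\mathbf{0}$ (if discarded). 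So $\Delta A(t+1)-M\Delta D(t+1)$ has \emph{exactly one} nonzero entry of value $\pm 1$, and therefore $|f_i(Q(t+1))-f_i(Q(t))|\le 1$, giving a noise contribution of at most $1$ per node rather than $4$. Your looser bound would force a factor $4n(1+1/\epsilon)^{\lfloor(d_r-1)/2\rfloor}$ in the error term, which does not match the lemma as stated; you flag this yourself as needing care, but ``absorbing constants into the $n$ factor'' is not available because $n$ is already tight. The tight increment bound also trivializes the positive-part truncation bookkeeping you worry about: for $i\in\calE_1$ with $f_i(Q(t))\ge 1$ the increment bound gives $f_i(Q(t+1))\ge 0$, so $g_i=f_i$ on both sides and the identity $a^2-b^2=(a-b)^2+2b(a-b)$ applies directly, with no sign-crossing cases to chase. (Your observation that only ancestors of the changed queue see their $f_i$ move is a correct refinement the paper does not need — it simply bounds $|\calA^-|\le n$ — but that refinement alone does not repair the factor-$4$ loss.)
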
 

\begin{proof}
    Recall that $Q(t + 1) = Q(t) + \Delta A(t + 1) - M \Delta D(t + 1)$.
    Since $f_i(\cdot)$ is linear, for every $i \in \calA_0$,
    \begin{align}
        f_i(Q(t + 1)) - f_i(Q(t))
        &= f_i(Q(t + 1) - Q(t))
        = f_i(\Delta A(t + 1) - M \Delta D(t + 1)) \label{eqn:diff-f-alter-expres} \\
        &= \sum_{j \in \calT^-(i)} (-1)^{d(i, j) + 1} (\Delta A(t + 1) - M \Delta D(t + 1))_j. \label{eqn:diff-f-alter-expres-detail}
    \end{align}
    Since $\TP$ always matches the arriving agent whenever it has a non-empty neighboring queue, there is precisely one non-zero entry in $\Delta A(t + 1) - M \Delta D(t + 1)$, which must be either $1$ or $-1$.
    Hence, \eqref{eqn:diff-f-alter-expres-detail} implies that, for every $i \in \calA_0$,
    \begin{align}
        |f_i(Q(t + 1)) - f_i(Q(t))| \leq 1 \,. \label{eq:f_i_diff}
    \end{align}

    By \eqref{eq:L_t},
    \begin{align*}
        \E[\calL(t + 1) - \calL(t) \mid Q(t)]
        = \sum_{i \in \calA_0} \alpha_i \cdot \E[(f_i(Q(t + 1))^+)^2 - (f_i(Q(t))^+)^2 \mid Q(t)].
    \end{align*}
    For each $i \in \calA_0$ with $f_i(Q(t)) < 0$, \eqref{eq:f_i_diff} implies $f_i(Q(t + 1)) \leq f_i(Q(t)) + 1 \leq 0$, and hence $f_i(Q(t))^+ = f_i(Q(t + 1))^+ = 0$.
    Also, for each $i \in \calA_0$ with $f_i(Q(t)) = 0$, \eqref{eq:f_i_diff} implies $f_i(Q(t + 1)) \leq f_i(Q(t)) + 1 \leq 1$, and hence $(f_i(Q(t + 1))^+)^2 - (f_i(Q(t))^+)^2 \leq 1$.
    Recall that $\calE_1 \triangleq \{i \in \calA_0 \mid f_i(Q(t)) > 0\}$ denotes the set of nodes $i$ with a strictly positive $f_i(Q(t))$.
    As a result,
    \begin{align}\label{eq:L_decompose}
        \E[\calL(t + 1) - \calL(t) \mid Q(t)]
        \leq \sum_{i \in \calE_1} \alpha_i \cdot \E[(f_i(Q(t + 1))^+)^2 - (f_i(Q(t))^+)^2 \mid Q(t)] + \sum_{i \in \calA_0 \setminus \calE_1} \alpha_i \,.
    \end{align}

    Fix $i \in \calE_1$.
    By \eqref{eq:f_i_diff}, we have $f_i(Q(t + 1)) \geq f_i(Q(t)) - 1 \geq 0$.
    It holds that
    \begin{align*}
          (f_i(Q(t + 1))^+)^2 - (f_i(Q(t))^+)^2
          &= f_i(Q(t + 1))^2 - f_i(Q(t))^2\\ 
          & = \left(f_i(Q(t + 1)) - f_i(Q(t)) \right)^2 - 2 f_i(Q(t))  \left(  f_i(Q(t))-f_i(Q(t + 1)) \right)  \\
          & \le 1 + 2 f_i(Q(t))  \left(f_i(Q(t + 1)) -  f_i(Q(t)) \right) \\
          & = 2f_i(Q(t)) \cdot f_i(\Delta A(t + 1) - M \Delta D(t + 1)) + 1 \,,
    \end{align*}
    where the inequality holds by \prettyref{eq:f_i_diff}, and the last equality holds by \eqref{eqn:diff-f-alter-expres}. 
    It follows that
    \begin{align*}
        \mathbb{E}[(f_i(Q(t + 1))^+)^2 - (f_i(Q(t))^+)^2 \mid Q(t)]
        &\leq \mathbb{E}[2f_i(Q(t)) \cdot f_i(\Delta A(t + 1) - M \Delta D(t + 1)) + 1 \mid Q(t)]\\
        &= 2f_i(Q(t)) \cdot f_i(\Expect[\Delta A(t + 1) - M \Delta D(t + 1) \mid Q(t)]) + 1 \\
        &= 2f_i(Q(t)) \cdot f_i(\lambda-Mx) + 1,
    \end{align*}
    where the first equality holds since $f_i(\cdot)$ is linear.
    Combining the above displayed equation and \prettyref{eq:L_decompose}, we obtain 
    \begin{align*}
        \mathbb{E}[\calL(t + 1) - \calL(t) \mid Q(t)]
        &\leq \sum_{i \in \calE_1} \alpha_i(2 f_i(Q(t)) \cdot f_i(\lambda-Mx) + 1) + \sum_{i \in \calA_0 \setminus \calE_1} \alpha_i \\
        &= 2\sum_{i \in \calE_1} \alpha_i \cdot f_i(Q(t)) \cdot f_i(\lambda-Mx) + \sum_{i \in \calA_0} \alpha_i \\
        &\leq 2\sum_{i \in \calE_1} \alpha_i \cdot f_i(Q(t)) \cdot f_i(\lambda-Mx) + n \left(1 + \frac{1}{\epsilon}\right)^{\lfloor (d_r - 1) / 2 \rfloor},
    \end{align*}
    where the last inequality holds by \prettyref{lmm:ub-coeffi} and the fact that $d(r, i) \leq d_r$ for every $i \in \calA_0$.
\end{proof}

It remains to upper bound the RHS of \eqref{eq:drift-two-terms}.
Define $\calE_2 \triangleq \{i \in \calA_0 \mid \| Q_{\calC(i)}(t) \|_1 > 0\}$ as the set of over-demanded nodes $i$ such that at least one of its child nodes has a non-empty queue after time $t$.
The following lemma establishes that $f_i(\lambda - M x) = -\epsilon_i$ for every $i \in \calE_2$, and $f_i(\lambda - Mx) \leq \lambda_i - \epsilon_i$ for every $i \in \calE_1 \setminus \calE_2$.

\begin{lemma}\label{lmm:drift-tree}
    $f_i(\lambda - Mx) = -\epsilon_i$ for every $i \in \calE_2$, and $f_i(\lambda - Mx) \leq \lambda_i - \epsilon_i$ for every $i \in \calE_1 \setminus \calE_2$. 
\end{lemma}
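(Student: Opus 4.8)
The plan is to compute $f_i(\lambda - Mx)$ directly from the definition \eqref{eq:f_i}, exploiting the recursive structure of the subtree rooted at $i$ together with the specific behaviour of $\SP$. First I would recall that $(Mx)_j$ is the probability that the arriving agent at time $t+1$ is matched \emph{inside} queue $j$, and that by greediness (\ref{P:greedy-pro-2}) a match is performed at time $t+1$ only through the arriving agent. Writing $f_i(\lambda - Mx) = \sum_{j \in \calT^-(i)} (-1)^{d(i,j)+1} (\lambda_j - (Mx)_j)$, I would peel off the contribution of each node $j$ in the subtree: node $j$ appears with sign $(-1)^{d(i,j)+1}$, its parent (if still in $\calT^-(i)$) with the opposite sign, so the edge-match $m(j,\mathrm{parent}(j))$ cancels in the telescoping except at the ``top'' edge connecting $i$ to its unique child on the path, and the edge-matches between $j$ and its children contribute once. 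The key combinatorial identity to extract is that $f_i(\lambda - Mx)$ collapses to $\lambda_i - (\text{mass of matches }\SP\text{ assigns across the edge from }i\text{ to its children at state }Q(t))$, up to boundary terms handled by the convention $d(i,i)=0$ and the fact that $i \in \calA^-$.

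For the case $i \in \calE_2$: some child $c \in \calC(i)$ has $Q_c(t) > 0$. By the $\SP$ priority order, the match $m(i,c)$ connecting $i$ to a child is prioritized over the match connecting $i$ to its parent; moreover, by \ref{P:greedy-pro-3}, $Q_i(t) = 0$, so whenever an agent of type $i$ arrives it must be matched to one of its children (the higher-priority direction), and whenever an agent of type $c$ arrives with $Q_c(t)>0$ already, $\SP$ matches it downward into $c$'s own subtree, not into $i$. Tracing this through the telescoped expression, the total matching mass across edges from $i$ to $\calC(i)$ equals exactly $\lambda_i$ plus whatever arrives to queues in $\calT^-(i)$ that gets absorbed internally — and the net effect on $f_i$ is that $f_i(\lambda - Mx) = -w_i$, where $w_i = z^*_{m(i,\mathrm{parent}(i))}$ is precisely the arrival-rate gap that $\SP$ is designed to drain. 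I would verify this by checking it reduces, in the path example of \Cref{sec:warmup}, to the identities $\lambda_2 - \lambda_1 = z_2^*$, $\lambda_3 - \lambda_2 + \lambda_1 = z_3^*$ used there (via Lemma~\ref{lmm:z}).

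For the case $i \in \calE_1 \setminus \calE_2$: here $f_i(Q(t)) > 0$ but every child of $i$ has an empty queue, so $\SP$ never matches the arriving agent of type $i$ (its children's queues are empty), hence $(Mx)_i$ only accounts for the rare event that an arriving child-type agent is matched \emph{into} queue $i$ — but since $Q_c(t) = 0$ for all $c \in \calC(i)$ and $i$ may itself be empty or not, a careful accounting gives $(Mx)_i \ge 0$ with the downward-match terms for deeper nodes still present; after telescoping, $f_i(\lambda - Mx) \le \lambda_i - w_i$, where the $-w_i$ again comes from the guaranteed drain of the edge above $i$ whenever that match is the unique available one. The inequality (rather than equality) absorbs the possibility that some strictly deeper subtree of $i$ is still active and contributes a nonpositive correction.

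The main obstacle I anticipate is the bookkeeping of signs and cancellations in the telescoping sum \eqref{eq:f_i} when the active/empty pattern of the subtree is arbitrary: one must argue that the only matches $\SP$ can perform at time $t+1$ that affect $f_i$ are those on the path from $i$ down toward the (unique) arriving agent, and that the priority order guarantees the arriving agent is always pushed in the ``downward'' direction relative to $i$ whenever $i \in \calE_2$. Making this rigorous will require a short case analysis on the type $j$ of the arriving agent — whether $j \notin \calT(i)$ (then $(Mx)_k = 0$ for all $k \in \calT^-(i)$ except possibly $k = i$ if $j = \mathrm{parent}(i)$... but that match is \emph{lower} priority, so it's performed only if no downward match is available), $j = i$, or $j \in \calT^-(i)$ — combined with \ref{P:greedy-pro-3} to rule out inconsistent states. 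I expect each case to be a one- or two-line computation once the telescoping identity for $f_i$ is isolated as a preliminary sub-claim.
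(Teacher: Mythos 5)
Your overall strategy---expand $f_i(\lambda-Mx)$ directly, observe that only certain matches across the boundary of $\calT^-(i)$ can affect it, and use $\SP$'s priority order together with \ref{P:greedy-pro-3} to control when the boundary match fires---is the same as the paper's. But there is one concrete missing ingredient and a few confusions in the write-up.

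The missing ingredient is the identity
\begin{equation*}
  w_i \;=\; \sum_{j \in \calT(i)} (-1)^{d(i,j)}\,\lambda_j,
\end{equation*}
which is Theorem~4.1 of \cite{kerimov2023optimality}. Your ``telescoping'' produces $\sum_{j \in \calT^-(i)} (-1)^{d(i,j)+1}\lambda_j$ from the $\lambda$-part, and it is exactly this identity that rewrites that sum as $\lambda_i - w_i$. You gesture at verifying it by reduction to the path identities of \Cref{lmm:z}, but you never state or invoke it in general, so the $\lambda$-part of your computation never rigorously closes.

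Once that identity is in hand, the proof is cleaner than your telescoping suggests. Writing $W_i := \sum_{j\in\calT^-(i)} (-1)^{d(i,j)+1}(M\Delta D(t+1))_j$ (so $f_i(\lambda-Mx) = \lambda_i - w_i - \E[W_i\mid Q(t)]$), the observation is: a match $(\ell_1,\ell_2)$ with both endpoints inside $\calT^-(i)$ contributes $(-1)^{d(i,\ell_1)+1}+(-1)^{d(i,\ell_2)+1}=0$ since $\ell_1,\ell_2$ are adjacent; a match with both endpoints outside $\calT^-(i)$ contributes $0$ trivially (including the match $(i,\mathrm{parent}(i))$, since $i\notin\calT^-(i)$); and only a boundary match $(i,c)$ with $c\in\calC(i)$ contributes $+1$. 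So $W_i\in\{0,1\}$, giving the inequality immediately. Your description of the second case is actually incorrect as stated---you write that ``$\SP$ never matches the arriving agent of type $i$'' when $i\in\calE_1\setminus\calE_2$, but it may well match a type-$i$ arrival to $\mathrm{parent}(i)$ if that queue is non-empty; the point is that such a match contributes $0$ to $W_i$, not that it does not occur. For the equality when $i\in\calE_2$, you correctly identify the two needed facts: $Q_i(t)=0$ (via \ref{P:greedy-pro-3}, since some child queue is non-empty) rules out a child-arrival matching into queue $i$, and the priority order forces a type-$i$ arrival to match downward into the non-empty child queue. Hence $W_i=1$ precisely when a type-$i$ agent arrives, giving $\E[W_i\mid Q(t)]=\lambda_i$ and $f_i(\lambda-Mx)=-w_i$.
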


\begin{proof}
    By \eqref{eq:f_i},
    \begin{align} \label{eq:def-fyi}
        f_i(\lambda - Mx)
        & = \sum_{j \in \calT^-(i)} (-1)^{d(i, j) + 1} (\lambda - Mx)_j \notag \\
        & =\lambda_i - \epsilon_i  - \sum_{j \in \calT^-(i)} (-1)^{d(i, j) + 1} ( Mx)_j \,, 
    \end{align}
    where the second equality holds by \prettyref{lem:char-eps-i}.

    Hence, to upper bound $f_i(\lambda - Mx)$ given by \eqref{eq:def-fyi}, it suffices to lower bound
    \begin{align*}
        (\lambda_i - \epsilon_i) - f_i(\lambda - Mx)
        = \sum_{j \in \calT^-(i)} (-1)^{d(i, j) + 1} (Mx)_j
        = \mathbb{E}\left[ \sum_{j \in \calT^-(i)} (-1)^{d(i, j) + 1} (M \Delta D(t + 1))_j \; \Bigg \vert \; Q(t) \right].
    \end{align*}
    Define random variable $W_i \triangleq \sum_{j \in \calT^-(i)} (-1)^{d(i, j) + 1} (M \Delta D(t + 1))_j$.
    Depending on the matches performed at time $t + 1$, $W_i = 0$ happens in the following three cases:
    \begin{enumerate}
        \item No match is performed, implying $\Delta D(t + 1) = \mathbf{0}$.

        \item The performed match $m(\ell_1, \ell_2)$ satisfies $\ell_1, \ell_2 \in \calA \setminus \calT^-(i)$, implying $(M \Delta D(t + 1))_j = 0$ for every $j \in \calT^-(i)$.

        \item The performed match $m(\ell_1, \ell_2)$ satisfies $\ell_1, \ell_2 \in \calT^-(i)$.
        In this case, $(M \Delta D(t + 1))_j = 0$ for every $j \in \calT^-(i) \setminus \{\ell_1, \ell_2\}$, and
        \begin{align*}
            (-1)^{d(i, \ell_1) + 1} (M \Delta D(t + 1))_{\ell_1} + (-1)^{d(i, \ell_2) + 1} (M \Delta D(t + 1))_{\ell_2}
            = (-1)^{d(i, \ell_1) + 1} + (-1)^{d(i, \ell_2) + 1} = 0,
        \end{align*}
        where the last equality holds since $\ell_1$ and $\ell_2$ are adjacent.
    \end{enumerate}
    As the only remaining case, suppose that the performed match $m(\ell_1, \ell_2)$ satisfies $\ell_1 = i$ and $\ell_2 \in \calC(i)$.
    In this case, $(M \Delta D(t + 1))_j = 0$ for every $j \in \calT^-(i) \setminus \{\ell_2\}$ and $(M \Delta D(t + 1))_{\ell_2} = 1$, implying that $W_i = 1$.
    In all cases, we always have $W_i \geq 0$, and hence $f_i(\lambda - Mx) = \lambda_i - \epsilon_i - \mathbb{E}[W_i \mid Q(t)] \leq \lambda_i - \epsilon_i$.
    This concludes the second part of \prettyref{lmm:drift-tree}.
    Furthermore, if $i \in \calE_2$, by the matching rule of $\SP$, $W_i=1$ holds if and only if an agent of type $i$ arrives at time $t + 1$, which happens with probability $\lambda_i$.
    Therefore, $\mathbb{E}[W_i \mid Q(t)] = \lambda_i$ for $i \in \calE_2$, concluding the first part of \prettyref{lmm:drift-tree}.
\end{proof}

We write
\begin{align*}
    \sum_{i \in \calE_1} \alpha_i \cdot f_i(Q(t)) \cdot f_i(\lambda -Mx)
    = \underbrace{\sum_{i \in \calE_1 \cap \calE_2} \alpha_i \cdot f_i(Q(t)) \cdot f_i(\lambda -Mx)}_{U_1} + \underbrace{\sum_{i \in \calE_1 \setminus \calE_2} \alpha_i \cdot f_i(Q(t)) \cdot f_i(\lambda -Mx)}_{U_2}.
\end{align*}
\prettyref{lmm:drift-tree} ensures that $U_1 < 0$ since $f_i(Q(t)) > 0$ for every $i \in \calE_1$, and provides an upper bound for $f_i(\lambda - Mx)$ for $i \in \calE_1 \setminus \calE_2$.
To show that $U_2$ is small, it remains to upper bound $ f_i(Q(t)) $ for $i \in \calE_1 \setminus \calE_2$.
For every $i \in \calA_0 \setminus \calE_2$, let $\calH(i)$ be the set of nodes $j \in \calT^-(i) \cap \calE_2$
that satisfy the following conditions:
\begin{enumerate}
    \item $d(i, j) \equiv 0 \pmod{2}$, and
    \item for every $k \in \calP(j) \cap \calT^-(i)$ (nodes $k$ on the path between $i$ and $j$ (excluding $i$ and $j$) with $d(i, k) \equiv 0 \pmod{2}$), $k \notin \calE_2$.
\end{enumerate}
Equivalently, we can constructively define $\calH(i)$ as follows: Imagine walking from $i$ down to the leaf nodes in the subtree rooted at $i$ with step size $2$.
If we encounter a node in $\calE_2$, then we add the current node into $\calH(i)$ and stop; otherwise, we continue walking.
One can also see by this constructive definition that $\calT^-(j) \cap \calT^-(k) = \emptyset$ for all $j, k \in \calH(i)$ with $j \neq k$, i.e., there is no overlapping between any two rooted subtrees with different roots $j, k \in \calH(i)$. 

The following lemma states that for every $i \in \calA_0 \setminus \calE_2$, we can upper bound $f_i(Q(t))$ by the sum of $f_j(Q(t))$ over all $j \in \calH(i)$. 
\begin{lemma}\label{lmm:compen}
    For every $i \in \calA_0 \setminus \calE_2$,
    \begin{align*}
        f_i(Q(t))
        \leq \sum_{j \in \calH(i)} f_j(Q(t)).
    \end{align*}
\end{lemma}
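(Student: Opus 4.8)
The plan is to prove the inequality by a direct sign-counting argument: expand both sides as signed sums of the entries of $Q := Q(t)$ and show that the part of $f_i(Q)$ not accounted for by $\sum_{j\in\calH(i)}f_j(Q)$ consists only of nonpositive terms. The structural facts I would use are that every node of $\calH(i)$ sits at even distance from $i$, and that $\calH(i)$ is an antichain in the tree — two comparable members would put the deeper one in violation of the second condition defining $\calH(i)$ — so the rooted subtrees $\{\calT^-(j)\}_{j\in\calH(i)}$ are pairwise disjoint and contained in $\calT^-(i)$.

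First I would rewrite the right-hand side. For $j\in\calH(i)$ and $k\in\calT^-(j)$ we have $d(i,k)=d(i,j)+d(j,k)$ with $d(i,j)$ even, so $(-1)^{d(j,k)+1}=(-1)^{d(i,k)+1}$; hence, setting $S:=\bigcup_{j\in\calH(i)}\calT^-(j)\subseteq\calT^-(i)$ (a disjoint union by the antichain property),
\begin{align}\label{eq:H-sum-pf}
    \sum_{j\in\calH(i)}f_j(Q) \;=\; \sum_{j\in\calH(i)}\sum_{k\in\calT^-(j)}(-1)^{d(j,k)+1}Q_k \;=\; \sum_{k\in S}(-1)^{d(i,k)+1}Q_k .
\end{align}
Subtracting \eqref{eq:H-sum-pf} from $f_i(Q)=\sum_{k\in\calT^-(i)}(-1)^{d(i,k)+1}Q_k$, it remains to prove
\begin{align}\label{eq:remainder-pf}
    \sum_{k\in\calT^-(i)\setminus S}(-1)^{d(i,k)+1}Q_k \;\le\; 0 .
\end{align}
Since $Q(t)$ has nonnegative entries, the terms in \eqref{eq:remainder-pf} with $d(i,k)$ even contribute $-Q_k\le 0$, so it suffices to show $Q_k=0$ for every $k\in\calT^-(i)\setminus S$ with $d(i,k)$ odd.

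Fix such a $k$ and let $p$ be its parent. If $p=i$ then $Q_k=0$ because $i\notin\calE_2$. Otherwise $p\in\calT^-(i)$ with $d(i,p)$ even, and $p\notin S$ and $p\notin\calH(i)$ (in either case the children of $p$ — in particular $k$ — would lie in $S$, contradicting $k\notin S$). I then claim $p\notin\calE_2$. Suppose not: since $p$ satisfies the first condition in the definition of $\calH(i)$ but $p\notin\calH(i)$, the second condition fails for $p$, so some internal vertex of the path from $i$ to $p$ at even distance from $i$ belongs to $\calE_2$. Taking the one closest to $i$, call it $j$, minimality gives that $j$ satisfies both conditions defining $\calH(i)$, so $j\in\calH(i)$; but $j$ is a strict ancestor of $p$, so $p\in\calT^-(j)\subseteq S$, contradicting $p\notin S$. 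Hence $p\notin\calE_2$, and therefore $Q_k=0$ because $k$ is a child of $p$. This proves \eqref{eq:remainder-pf}, and the lemma follows.

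I expect the main obstacle to be the bookkeeping in the last paragraph — confirming that the $\calE_2$-vertex closest to $i$ on the path actually belongs to $\calH(i)$, which is precisely where the two conditions in the definition of $\calH(\cdot)$ must be reconciled (and similarly the antichain claim used for \eqref{eq:H-sum-pf}). The rewriting \eqref{eq:H-sum-pf} and the sign analysis of \eqref{eq:remainder-pf} are routine. An equivalent proof can instead be organized as strong induction on the subtree depth $d_i$, using the identity $f_i(Q)=\sum_{c\in\calC(i)}\bigl(Q_c-f_c(Q)\bigr)$ (with $f_c\equiv 0$ for leaves $c$) together with $Q_c=0$ for $c\in\calC(i)$ to decompose $\calH(i)$ over the grandchildren of $i$; the argument above sidesteps that recursion.
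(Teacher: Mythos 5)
Your proof is correct and follows essentially the same route as the paper's: both rewrite $\sum_{j\in\calH(i)}f_j(Q)$ as a signed sum over the disjoint union $S = \bigcup_{j\in\calH(i)}\calT^-(j)$ using the parity $d(i,j)\equiv 0\pmod 2$, reduce the lemma to showing the leftover sum over $\calT^-(i)\setminus S$ is nonpositive, and for that argue that odd-distance leftover vertices have zero queue because their parents lie outside $\calE_2$. The one respect in which you go further than the printed proof is worth noting: the paper asserts the antichain property of $\calH(i)$ and the claim that even-distance vertices in $\calT^-(i)\setminus(S\cup\calH(i))$ avoid $\calE_2$ essentially ``by the definition of $\calH(i)$,'' whereas you actually supply the minimality argument — take the $\calE_2$-vertex of even distance closest to $i$ on the path, verify it satisfies both conditions in the definition and hence lands in $\calH(i)$, and derive the contradiction $p\in S$. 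That step is the genuine content of the lemma, so making it explicit is a real improvement in rigor even though the overall decomposition is identical.
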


\begin{proof}
    Recall that $\calE_2 \triangleq \{i \in \calA_0 \mid \|Q_{\calC(i)}(t)\|_1 > 0\}$.
    Fix $i \in \calA_0 \setminus \calE_2$.
    Recall that $\calT^-(j) \cap \calT^-(k) = \emptyset$ for all $j, k \in \calH(i)$ with $j \neq k$.
    Define $\calU \triangleq \bigcup_{j \in \calH(i)} \calT^-(j)$.
    Since $d(i, j) \equiv 0 \pmod{2}$ for every $j \in \calH(i)$, we have
    \begin{align*}
        \sum_{k \in \calU} (-1)^{d(i, k) + 1} Q_k(t)
        = \sum_{j \in \calH(i)} \sum_{k \in \calT^-(j)} (-1)^{d(j, k) + 1} Q_k(t)
        = \sum_{j \in \calH(i)} f_j(Q(t)).
    \end{align*}
    Moreover, by the definition of $\calH(i)$, $j \notin \calE_2$ for every $j \in \calT^-(i) \setminus (\calU \cup \calH(i))$ such that $d(i, j) \equiv 0 \pmod{2}$, implying $Q_j(t) = 0$ for every $j \in \calT^-(i) \setminus \calU$ such that $d(i, j) \equiv 1 \pmod{2}$. 
    Hence,
    \begin{align*}
        \sum_{j \in \calT^-(i) \setminus \calU} (-1)^{d(i, j) + 1} Q_j(t)
        = \sum_{j \in \calT^-(i) \setminus \calU} \indc{d(i, j) \equiv 0 \pmod{2}} \cdot (-1)^{d(i, j) + 1} Q_j(t)
        \leq 0.
    \end{align*}
    Combining the above two displayed equations,
    \begin{align*}
        f_i(Q(t))
        &= \sum_{j \in \calT^-(i)} (-1)^{d(i, j) + 1} Q_j(t)\\
        &= \sum_{j \in \calU} (-1)^{d(i, j) + 1} Q_j(t) + \sum_{j \in \calT^-(i) \setminus \calU} (-1)^{d(i, j) + 1} Q_j(t)
        \leq \sum_{j \in \calH(i)} f_j(Q(t)),
    \end{align*}
    concluding the proof.  
\end{proof}

Next, we apply Lemmas~\ref{lmm:drift-tree} and~\ref{lmm:compen} to formally upper bound the first term in the RHS of \eqref{eq:drift-two-terms}, showing that $\calL(t)$ has a one-step negative drift.

\begin{lemma}\label{lem:first-term-drift-two-terms}
    It holds that
    \begin{align*}
        \sum_{i \in \calE_1} \alpha_i \cdot f_i(Q(t)) \cdot f_i(\lambda - Mx)
        \leq -\sum_{i \in \calE_1 \cap \calE_2} f_i(Q(t)) \cdot \epsilon_i.
    \end{align*}
\end{lemma}

\begin{proof}
    We have
    \begin{align}
        &\sum_{i \in \calE_1} \alpha_i \cdot f_i(Q(t)) \cdot f_i(\lambda - Mx) \notag \\
        &= \sum_{i \in \calE_1 \cap \calE_2} \alpha_i \cdot f_i(Q(t)) \cdot f_i(\lambda - Mx) + \sum_{i \in \calE_1 \setminus \calE_2} \alpha_i \cdot f_i(Q(t)) \cdot f_i(\lambda - Mx) \notag \\
        &\leq \sum_{i \in \calE_1 \cap \calE_2} \alpha_i \cdot f_i(Q(t)) \cdot (-\epsilon_i) + \sum_{i \in \calE_1 \setminus \calE_2} \alpha_i \cdot f_i(Q(t)) \cdot (\lambda_i - \epsilon_i) \notag \\
        &\leq \sum_{i \in \calE_1 \cap \calE_2} \alpha_i \cdot f_i(Q(t)) \cdot (-\epsilon_i) + \sum_{i \in \calE_1 \setminus \calE_2} \alpha_i (\lambda_i - \epsilon_i) \sum_{j \in \calH(i)} f_j(Q(t)) \notag \\
        &= \sum_{i \in \calE_2} f_i(Q(t)) \left( - \indc{i \in \calE_1} \cdot \alpha_i \epsilon_i + \sum_{j \in \calE_1 \setminus \calE_2} \indc{i \in \calH(j)} \cdot \alpha_j (\lambda_j - \epsilon_j)  \right) \,, \label{eqn:tp-drift-cale1-cale2}
    \end{align}
    where the first inequality holds by \prettyref{lmm:drift-tree} and the fact that $f_i(Q(t)) > 0$ for every $i \in \calE_1$, the second inequality holds by \prettyref{lmm:compen}, and the last equality holds since $\calH(i) \subseteq \calE_2$ for every $i \in \calE_1 \setminus \calE_2$.
    To further upper bound \eqref{eqn:tp-drift-cale1-cale2}, we consider each $i \in \calE_2$ individually.
    On one hand, for every $i \in \calE_2 \setminus \calE_1$, since $f_i(Q(t)) \leq 0$ by the definition of $\calE_1$, we have
    \begin{align*}
        &f_i(Q(t)) \left( - \indc{i \in \calE_1} \cdot \alpha_i \epsilon_i + \sum_{j \in \calE_1 \setminus \calE_2} \indc{i \in \calH(j)} \cdot \alpha_j (\lambda_j - \epsilon_j)  \right) \\
        & = f_i(Q(t)) \left( \sum_{j \in \calE_1 \setminus \calE_2} \indc{i \in \calH(j)} \cdot \alpha_j (\lambda_j - \epsilon_j)  \right)
        \leq 0.
    \end{align*}
    On the other hand, for every $i \in \calE_1 \cap \calE_2$, which satisfies $f_i(Q(t)) > 0$, it holds that 
    \begin{align*}
        - \indc{i \in \calE_1} \cdot \alpha_i \epsilon_i + \sum_{j \in \calE_1 \setminus \calE_2} \indc{i \in \calH(j)} \cdot \alpha_j (\lambda_j - \epsilon_j)
        &\leq -\alpha_i \epsilon_i + \sum_{j \in \calP(i) \cap (\calE_1 \setminus \calE_2)} \alpha_j (\lambda_j - \epsilon_j)\\
        &\leq -\alpha_i \epsilon_i + \sum_{j \in \calP(i)} \alpha_j (\lambda_j - \epsilon_j) = -\epsilon_i,
    \end{align*}
    where the first inequality holds since $i \in \calH(j)$ implies $j \in \calP(i)$, and the last equality holds by \eqref{eq:def-coeffi}.
    Combining the above three displayed equations concludes the proof.
\end{proof}

By Lemmas~\ref{lmm:drift-two-terms} and~\ref{lem:first-term-drift-two-terms}, and the fact that $\epsilon \leq \epsilon_i \leq 1$ for every $i \in \calA$, we get
\begin{align}
    \mathbb{E}[\calL(t + 1) - \calL(t) \mid Q(t)]
    &\leq -2\epsilon \sum_{i \in \calE_1 \cap \calE_2} f_i(Q(t)) + n\left(1 + \frac{1}{\epsilon}\right)^{\lfloor (d_r - 1) / 2 \rfloor}. \label{eq:drift-ito-f}
\end{align}
Then, we apply the following lemma to translate the above Lyapunov drift in terms of $f$ to a drift in terms of the total queue length.

\begin{lemma}\label{lmm:conn-drift-queue-len}
    For every $q \in \mathbb{Z}^n_{\geq 0}$, let $\calE_1 \triangleq \{i \in \calA_0 \mid f_i(q) > 0\}$ and $\calE_2 \triangleq \{i \in \calA_0 \mid \| q_{\calC(i)} \|_1 > 0\}$.
    Then,
    \begin{align*}
        \frac{1}{2^{d_r}} \sum_{i \in \calA_0} q_i
        \leq \sum_{i \in \calE_1 \cap \calE_2} f_i(q).
    \end{align*}
\end{lemma}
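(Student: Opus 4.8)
Since $(\calA,\calM)$ is a tree, $\lvert\calA_+\rvert=1$, so $\calA_+=\{r\}$ and $\calA_0=\calT^-(r)$; thus the asserted inequality is $\Phi(r)\ge 2^{-d_r}\sum_{j\in\calT^-(r)}q_j$, where for $v\in\calA$ I will abbreviate
\[
\Phi(v):=\sum_{i\in(\calE_1\cap\calE_2)\cap\calT(v)}f_i(q),\qquad R(v):=\sum_{j\in\calT^-(v)}q_j,
\]
reading $\calE_1=\{i\in\calA^-:f_i(q)>0\}$, $\calE_2=\{i\in\calA^-: q_j>0\text{ for some }j\in\calC(i)\}$, and using the convention $f_\ell(q):=0$ for a leaf $\ell$. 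Two elementary facts will be used throughout. Expanding the definition \eqref{eq:f_i} one level down gives the recursion $f_i(q)=\sum_{j\in\calC(i)}\bigl(q_j-f_j(q)\bigr)$ for every $i\in\calA^-$; and since each sign $(-1)^{d(i,j)+1}$ is at most $1$ and $q\ge 0$, one has $f_i(q)\le R(i)$ for every $i\in\calA$. The plan is to prove, by strong induction on the subtree depth $d_v$, that $\Phi(v)\ge 2^{-d_v}R(v)$ for every $v\in\calA$; the case $v=r$ is then exactly the lemma.

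\textbf{The induction.} The decomposition $\calT(v)=\{v\}\cup\bigcup_{j\in\calC(v)}\calT(j)$ gives $\Phi(v)=\indc{v\in\calE_1\cap\calE_2}f_v(q)+\sum_{j\in\calC(v)}\Phi(j)$ and $R(v)=\sum_{j\in\calC(v)}\bigl(q_j+R(j)\bigr)$. If $v$ is a leaf both sides vanish. Otherwise every child satisfies $d_j\le d_v-1$, so the inductive hypothesis yields $\Phi(j)\ge 2^{-d_j}R(j)\ge 2^{1-d_v}R(j)$, and hence $\sum_{j\in\calC(v)}\Phi(j)\ge 2^{1-d_v}\sum_{j\in\calC(v)}R(j)$. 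If $v\notin\calE_2$, then $q_j=0$ for every child $j$, so $R(v)=\sum_{j\in\calC(v)}R(j)$ and $\Phi(v)=\sum_{j\in\calC(v)}\Phi(j)\ge 2^{1-d_v}R(v)\ge 2^{-d_v}R(v)$. If $v\in\calE_2$, then using $2^{1-d_v}=2\cdot 2^{-d_v}$ the target $\Phi(v)\ge 2^{-d_v}R(v)$ reduces to
\[
\indc{v\in\calE_1}f_v(q)\ \ge\ \frac{1}{2^{d_v}}\Bigl(\textstyle\sum_{j\in\calC(v)}q_j-\sum_{j\in\calC(v)}R(j)\Bigr)=:\frac{X}{2^{d_v}}.
\]
Combining the recursion with $f_j(q)\le R(j)$ gives $f_v(q)=\sum_{j\in\calC(v)}\bigl(q_j-f_j(q)\bigr)\ge X$. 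If $v\in\calE_1$ then $f_v(q)>0$, so $2^{-d_v}X\le\max\{X,0\}\le\max\{f_v(q),0\}=f_v(q)=\indc{v\in\calE_1}f_v(q)$; if $v\notin\calE_1$ then $f_v(q)\le 0$ forces $X\le 0$, so $2^{-d_v}X\le 0=\indc{v\in\calE_1}f_v(q)$. In either case the displayed inequality holds, which completes the inductive step.

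\textbf{Conclusion and main obstacle.} Taking $v=r$ gives $\sum_{i\in\calE_1\cap\calE_2}f_i(q)=\Phi(r)\ge 2^{-d_r}R(r)=2^{-d_r}\sum_{i\in\calA_0}q_i$, as claimed. The crux — and the main obstacle — will be the case $v\in\calE_2$: one must see that the lone ``available'' quantity $\indc{v\in\calE_1}f_v(q)$, which equals $f_v(q)$ precisely when $f_v(q)>0$ and equals $0$ otherwise, always suffices to absorb the top-layer discrepancy $X=\sum_{j\in\calC(v)}q_j-\sum_{j\in\calC(v)}R(j)$ once the children's estimates are substituted; the slack that makes this work is exactly the factor $2$ coming from $d_j\le d_v-1$, and the fact that $f_v(q)\ge X$ with the right sign is a consequence of the parity of $d(i,j)$ in $f_i$ (children counted with sign $+$, grandchildren with sign $-$, and so on) — the same alternating structure that forces a $2^{d_r}$, rather than polynomial, dependence on the depth. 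With this observation in place, long $\calE_2$-free segments near the root need no separate ``descent'' treatment: on such vertices the recursion $R(v)=\sum_{j\in\calC(v)}R(j)$ lets the induction pass through automatically.
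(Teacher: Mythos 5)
Your proof is correct and follows essentially the same strategy as the paper: a strong induction on subtree depth proving the strengthened claim $\Phi(v)\ge 2^{-d_v}\sum_{j\in\calT^-(v)}q_j$, reducing at each node to the one-layer inequality $\indc{v\in\calE_1\cap\calE_2}f_v(q)\ge 2^{-d_v}X$ with $X=\sum_{j\in\calC(v)}\bigl(q_j-\sum_{k\in\calT^-(j)}q_k\bigr)$, and then a case analysis on membership in $\calE_1,\calE_2$. Your use of the recursion $f_v(q)=\sum_{j\in\calC(v)}(q_j-f_j(q))$ together with $f_j(q)\le R(j)$ to get $f_v(q)\ge X$ is a slight streamlining of the paper's direct sign-by-sign estimate, but the structure and key inequality are identical.
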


\begin{proof}
    Fix $q \in \integers^n_{\geq 0}$.
    We prove a stronger statement that
    \begin{align}\label{eq:conn-drift-queue-len}
        \frac{1}{2^{d_i}} \sum_{j \in \calT^-(i)} q_j
        \leq \sum_{j \in \calT(i) \cap \calE_1 \cap \calE_2} f_j(q)
    \end{align}
    for every $i \in \calA$, and \prettyref{lmm:conn-drift-queue-len} follows by setting $i = r$.

    For those $i \in \calA$ with $d_i = 0$, i.e., $i$ is a leaf node, \eqref{eq:conn-drift-queue-len} holds straightforwardly since both sides equal $0$. 
    Assume by induction that \eqref{eq:conn-drift-queue-len} holds for all $i \in \calA$ with $d_i < k$ such that $k \in [d_r]$, and we show that \eqref{eq:conn-drift-queue-len} holds for all $i \in \calA$ with $d_i = k$.
    Fix $i \in \calA$ with $d_i = k$.
    Observe that
    \begin{align*}
        \sum_{j \in \calT^-(i) \cap \calE_1 \cap \calE_2} f_j(q)
        &= \sum_{j \in \calC(i)} \sum_{k \in \calT(j) \cap \calE_1 \cap \calE_2} f_k(q)\\
        &\geq \sum_{j \in \calC(i)} \frac{1}{2^{d_j}} \sum_{k \in \calT^-(j)} q_k
        \geq \frac{1}{2^{d_i - 1}} \sum_{j \in \calC(i)} \sum_{k \in \calT^-(j)} q_k,
    \end{align*}
    where the first inequality holds by the inductive hypothesis.
    Hence,
    \begin{align*}
        \sum_{j \in \calT(i) \cap \calE_1 \cap \calE_2} f_j(q)
        \geq \indc{i \in \calE_1 \cap \calE_2} \cdot f_i(q) + \frac{1}{2^{d_i - 1}} \sum_{j \in \calC(i)} \sum_{k \in \calT^-(j)} q_k,
    \end{align*}
    and it suffices to show that
    \begin{align*}
        &\indc{i \in \calE_1 \cap \calE_2} \cdot f_i(q) + \frac{1}{2^{d_i - 1}} \sum_{j \in \calC(i)} \sum_{k \in \calT^-(j)} q_k
        \geq \frac{1}{2^{d_i}} \sum_{j \in \calT^-(i)} q_j,
    \end{align*}
    which is equivalent to
    \begin{align}\label{eq:2dijbiqj}
        \indc{i \in \calE_1 \cap \calE_2} \cdot f_i(q)
        \geq \frac{1}{2^{d_i}} \sum_{j \in \calC(i)} \left( q_j - \sum_{k \in \calT^-(j)} q_k \right).
    \end{align}

    We show that \eqref{eq:2dijbiqj} holds under three different cases.
    Firstly, if $i \notin \calE_1$, i.e., $f_i(q) \leq 0$, then $\indc{i \in \calE_1 \cap \calE_2} \cdot f_i(q) = 0$, and
    \begin{align*}
        \sum_{j \in \calC(i)} \left( q_j - \sum_{k \in \calT^-(j)} q_k \right)
        \leq \sum_{j \in \calT^-(i)} (-1)^{d(i, j) + 1} q_j
        = f_i(q)
        \leq 0,
    \end{align*}
    implying \eqref{eq:2dijbiqj}.
    Next, if $i \notin \calE_2$, i.e., $\|q_{\calC(i)}\|_1 = 0$, then $\indc{i \in \calE_1 \cap \calE_2} \cdot f_i(q) = 0$, and
    \begin{align*}
        \sum_{j \in \calC(i)} \left( q_j - \sum_{k \in \calT^-(j)} q_k \right)
        = -\sum_{j \in \calC(i)} \sum_{k \in \calT^-(j)} q_k
        \leq 0,
    \end{align*}
    implying \eqref{eq:2dijbiqj}.
    Finally, if $i \in \calE_1 \cap \calE_2$, then $f_i(q) > 0$ and $\indc{i \in \calE_1 \cap \calE_2} = 1$; by \prettyref{eq:f_i}, 
    \begin{align*}
        f_i(q) - \frac{1}{2^{d_i}} \sum_{j \in \calC(i)} \left( q_j - \sum_{k \in \calT^-(j)} q_k \right)
        &= \sum_{j \in \calT^-(i)} (-1)^{d(i, j) + 1} q_j - \frac{1}{2^{d_i}} \sum_{j \in \calC(i)} \left( q_j - \sum_{k \in \calT^-(j)} q_k \right)\\
        &= \sum_{j \in \calC(i)} \left( \left( 1 - \frac{1}{2^{d_i}} \right) q_j + \sum_{k \in \calT^-(j)} \left( (-1)^{d(i, k) + 1} + \frac{1}{2^{d_i}} \right) q_k \right)\\
        &\geq \sum_{j \in \calT^-(i)} \left( 1 - \frac{1}{2^{d_i}} \right) (-1)^{d(i, j) + 1} q_j\\
        &= \left( 1 - \frac{1}{2^{d_i}} \right) f_i(q)
        > 0,
    \end{align*}
    where the inequality holds since $q_j \geq 0$ for every $j \in \calA$.
    This concludes the proof.  
\end{proof}

Combining \eqref{eq:drift-ito-f} and \prettyref{lmm:conn-drift-queue-len}, we obtain
\begin{align*}
    \mathbb{E}[\calL(t + 1) - \calL(t) \mid Q(t)]
    \leq -\frac{\epsilon}{2^{d_r - 1}} \norm{Q(t)}_1 + n\left(1 + \frac{1}{\epsilon}\right)^{\lfloor (d_r - 1) / 2 \rfloor},
\end{align*}
as desired.
By \prettyref{lmm:ergodic}, the above drift bound also implies that the Markov chain $(Q(t))_{t \geq 0}$ is ergodic, concluding the proof of \prettyref{prop:tp-one-step-drift-final}.

\subsection{Deferred proofs for \prettyref{sec:proof-thm-msp_sketch}}
\label{sec:proof-proof-thm-msp}

\subsubsection{Proof of \prettyref{prop:modified-thm6-gz06}}

  By the second-order Taylor's expansion at $0$, for some $\theta'\in (0,\theta]$ and any $x_1,x_2 \in \calX$, we have 
\begin{align*}
     \exp\left(\theta(\Phi(x_2) - \Phi(x_1))\right)
    \leq\ & \exp\left(\theta(\max\{-\delta, \Phi(x_2) - \Phi(x_1)\})\right) \\
    =\ & 1 + \theta(\max\{-\delta, \Phi(x_2) - \Phi(x_1)\}) \\
    & + \frac{\theta^2}{2} (\max\{-\delta, \Phi(x_2) - \Phi(x_1)\})^2
    \exp\left(\theta'(\max\{-\delta, \Phi(x_2) - \Phi(x_1)\})\right) \\
    \leq\ &
    1 + \theta(\max\{-\delta, \Phi(x_2) - \Phi(x_1)\}) \\
    & + \frac{\theta^2}{2} (\max\{-\delta, \Phi(x_2) - \Phi(x_1)\})^2
    \exp\left(\theta(\Phi(x_2) - \Phi(x_1))^+\right)\, .
\end{align*}
We now fix an arbitrary $x \in \calX$ such that $\Phi(x) > K$, set $x_1 = x$, $x_2 = X(t_0)$ and take the expectation of both sides above to obtain
\begin{align*}
\mathbb{E}_x[\exp\left(\theta(\Phi(X(t_0)) - \Phi(x))\right)]
&\le 1 + \theta \mathbb{E}_x[(\max\{-\delta, \Phi(X(t_0)) - \Phi(x)\})]  \\
&\quad + \frac{\theta^2}{2}\,
  \mathbb{E}_x\left[\max\{-\delta, \Phi(X(t_0)) - \Phi(x)\})^2
    \exp\left(\theta(\Phi(X(t_0)) - \Phi(x))^+\right) \right] \\
&\stackrel{(a)}{\le} 1 - \gamma\theta + \frac{\theta^2}{2} L_3(\delta, \theta,t_0) \\
&\stackrel{(b)}{\le} 1 - \gamma\theta/2 ,
\end{align*}
where $(a)$ follows from \prettyref{eq:drift_negative} and the definition of $L_3$; $(b)$ holds by \prettyref{eq:L_3}. 
Note that the condition $\Phi(x) > K$ is equivalent to
$\exp(\theta \cdot \Phi(x)) > \exp(\theta K)$. Thus, $\exp(\theta \cdot \Phi(x))$
is a geometric Lyapunov function with geometric drift size parameter   $1-\gamma\theta/2$, drift time parameter 
$t_0$, and exception parameter $\exp(\theta K)$.

\subsubsection{Proof of \prettyref{prop:lipschitz-msp}}

For all $i \in \calA_0$ and $t \geq 0$, define
\begin{align*}
    R_i(t)
    \triangleq q_i(0) + A_i(t) - \sum_{j \in \calC(i)} D_{m(j, i)}(t)
\end{align*}
as the number of agents arriving at $i$ before and including time $t$ 
that are not matched with agents in the children of $i$.
By the matching rule of $\TTP$, $R_i(t)$ also equals the number of agents arriving at $i$ before and including time $t$ that are not matched immediately upon arrival, i.e.,
\begin{align}\label{eq:alter-interp-Rit}
    R_i(t)
    = q_i(t) + D_{m(i, P(i))}(t)
\end{align}
for all $i \in \calA_0$ and $t \geq 0$.

Analogous to the Lipschitz-continuity results for several classical queueing-network models~\cite{chen2001fundamentals}, we establish \prettyref{prop:lipschitz-msp} via Skorokhod reflection mappings.
However, off-the-shelf reflection-map representations are generally not available for dynamic matching systems, and we derive an explicit characterization tailored to dynamic matching in the following lemma.

\begin{lemma}\label{prop:char-len-msp}
    Under $\TTP$, for all $i \in \calA$ and $t \geq 0$,
    \begin{align}\label{eq:sum-qjt-tree-msp}
        \sum_{j \in \calC(i)} q_j(t)
        = \sum_{j \in \calC(i)} R_j(t) - A_i(t) + \max_{0 \leq s \leq t} \left[ A_i(s) - \sum_{j \in \calC(i)} R_j(s) \right]^+.
    \end{align}
\end{lemma}

\begin{proof}
    We prove \eqref{eq:sum-qjt-tree-msp} by induction.
    Firstly, \eqref{eq:sum-qjt-tree-msp} holds straightforwardly for the base case of $t = 0$ since all terms in \eqref{eq:sum-qjt-tree-msp} equal zero.
    We assume that $t \geq 1$, and \eqref{eq:sum-qjt-tree-msp} holds for $t-1$ and all $i\in \calA$.
    We now show that \eqref{eq:sum-qjt-tree-msp} holds for $t$ and all $i \in \calA$.
    Fix $i \in \calA$.
    It holds that
    \begin{align}
        \sum_{j \in \calC(i)} q_j(t)
        &= \sum_{j \in \calC(i)} \left( q_j(t - 1) + \Delta A_j(t) - \Delta D_{m(j, i)}(t) - \sum_{k \in \calC(j)} \Delta D_{m(k, j)}(t) \right) \notag \\
        &= \sum_{j \in \calC(i)} ( R_j(t) - \Delta D_{m(j, i)}(t) ) - A_i(t - 1) + \max_{0 \leq s \leq t - 1} \left[ A_i(s) - \sum_{j \in \calC(i)} R_j(s) \right]^+, \label{eq:sum-qjt-with-d}
    \end{align}
    where the second inequality holds by the inductive hypothesis and the identity $R_j(t) = R_{j}(t-1)+\Delta A_j(t) -  \sum_{k \in \calC(j)} \Delta D_{m(k, j)}(t)$.
    Since $\TTP$ only matches the new arrivals in queue-$i$ with agents in the children of $i$, we have $\sum_{j \in \calC(i)} \Delta D_{m(j, i)}(t) \leq \Delta A_i(t)$.

    On one hand, if $\sum_{j \in \calC(i)} \Delta D_{m(j, i)}(t) = \Delta A_i(t)$, then by \eqref{eq:sum-qjt-with-d},
    \begin{align*}
        \sum_{j \in \calC(i)} q_j(t)
        = \sum_{j \in \calC(i)} R_j(t) - A_i(t) + \max_{0 \leq s \leq t - 1} \left[ A_i(s) - \sum_{j \in \calC(i)} R_j(s) \right]^+ \,.
    \end{align*}
    Since $\sum_{j \in \calC(i)} q_j(t) \geq 0$, by rearranging the terms in the RHS, it follows that
    \begin{align*}
        \max_{0 \leq s \leq t - 1} \left[ A_i(s) - \sum_{j \in \calC(i)} R_j(s) \right]^+
        \geq \max \left\{ 0, A_i(t) - \sum_{j \in \calC(i)} R_j(t) \right\}
        = \left[ A_i(t) - \sum_{j \in \calC(i)} R_j(t) \right]^+.
    \end{align*}
    Combining the above two displayed equations yields
    \begin{align*}
        \sum_{j \in \calC(i)} q_j(t)
        = \sum_{j \in \calC(i)} R_j(t) - A_i(t) + \max_{0 \leq s \leq t} \left[ A_i(s) - \sum_{j \in \calC(i)} R_j(s) \right]^+,
    \end{align*}
    concluding the inductive step.
    
    On the other hand, if $\sum_{j \in \calC(i)} \Delta D_{m(j, i)}(t) < \Delta A_i(t)$, which implies there are redundant new agents in queue-$i$ after matching with all existing agents in the children of $i$, then we must have $\sum_{j \in \calC(i)} q_j(t) = 0$.
    By rewriting \eqref{eq:sum-qjt-with-d}, we get
    \begin{align*}
        \sum_{j \in \calC(i)} q_j(t)
        = \sum_{j \in \calC(i)} R_j(t) - A_i(t) + \max_{0 \leq s \leq t - 1} \left[A_i(s) - \sum_{j \in \calC(i)} R_j(s)\right]^+ + \Delta A_i(t) - \sum_{j \in \calC(i)} \Delta D_{m(j, i)}(t).
    \end{align*}
    Since $\sum_{j \in \calC(i)} q_j(t) = 0$ and $\Delta A_i(t) - \sum_{j \in \calC(i)} \Delta D_{m(j, i)}(t) > 0$, by rearranging the RHS of the above equation, 
    it follows that
    \begin{align}
        \max_{0 \leq s \leq t - 1} \left[A_i(s) - \sum_{j \in \calC(i)} R_j(s)\right]^+
        < A_i(t) - \sum_{j \in \calC(i)} R_j(t)
        \leq \left[ A_i(t) - \sum_{j \in \calC(i)} R_j(t) \right]^+, \label{eq:max-s-t-1-leq-rt-ait}
    \end{align}
    which implies $A_i(t) - \sum_{j \in \calC(i)} R_j(t) > 0$.
    Therefore, 
    \begin{align*}
        \sum_{j \in \calC(i)} q_j(t)
        &= 0
        = \sum_{j \in \calC(i)} R_j(t) - A_i(t) + \left[ A_i(t) - \sum_{j \in \calC(i)} R_j(t) \right]^+ \\
        &= \sum_{j \in \calC(i)} R_j(t) - A_i(t) + \max_{0 \leq s \leq t} \left[ A_i(s) - \sum_{j \in \calC(i)} R_j(s) \right]^+,
    \end{align*}
    where the second equality follows from $A_i(t) - \sum_{j \in \calC(i)} R_j(t) > 0$, and the last equality holds by \eqref{eq:max-s-t-1-leq-rt-ait}.
    This concludes the inductive step.
\end{proof}

For all $i \in \calA$ and $t \geq 0$, analogously define $R_i'(t)$, and $D_{m(j, i)}'(t)$ for $j \in \calC(i)$ under the arrival trajectory $(A'(t))_{t \geq 0}$.
For all $i \in \calA$ and $t \geq 0$, we have
\begin{align}
    \abs{\sum_{j \in \calC(i)} q_j(t) - \sum_{j \in \calC(i)} q_j'(t)}
    &\leq \abs{\sum_{j \in \calC(i)} R_j(t) - \sum_{j \in \calC(i)} R_j'(t)} + \abs{A_i(t) - A_i'(t)} \notag \\ &\quad + \abs{\max_{0 \leq s \leq t} \left[ A_i(s) - \sum_{j \in \calC(i)} R_j(s) \right]^+ - \max_{0 \leq s \leq t} \left[ A_i'(s) - \sum_{j \in \calC(i)} R_j'(s) \right]^+} \notag \\
    &\leq 2 \left( \max_{0 \leq s \leq t} \abs{\sum_{j \in \calC(i)} R_j(s) - \sum_{j \in \calC(i)} R_j'(s)} + \max_{0 \leq s \leq t} \abs{A_i(s) - A_i'(s)} \right), \label{eq:ub-sum-diff-qj-qjp-one-level}
\end{align}
where the first inequality holds by \prettyref{prop:char-len-msp}, and the second inequality holds since $|\max_i x_i - \max_i y_i| \leq \max_i |x_i - y_i|$ and $|x^+ - y^+| \leq |x - y|$. 

The following lemma bounds the first $\max$ term in \eqref{eq:ub-sum-diff-qj-qjp-one-level}.

\begin{lemma}\label{lem:first-term-ub-sum-diff-qj-qjp-one-level}
    for all $i \in \calA$ and $t \geq 0$,
    \begin{align}\label{eq:lip-msp-induct}
        \max_{0 \leq s \leq t} \abs{\sum_{j \in \calC(i)} R_j(s) - \sum_{j \in \calC(i)} R_j'(s)}
        \leq \sum_{j \in \calT^-(i)} \max_{0 \leq s \leq t} \abs{A_j(s) - A_j'(s)}.
    \end{align}
\end{lemma}

To conclude \eqref{eq:lip-msp-exp}, by \eqref{eq:ub-sum-diff-qj-qjp-one-level},
\begin{align*}
    \abs{\sum_{i \in \calA_0} q_i(t) - \sum_{i \in \calA_0} q_i'(t)}
    &\leq \sum_{i \in \calA} \abs{\sum_{j \in \calC(i)} q_j(t) - \sum_{j \in \calC(i)} q_j'(t)} \\
    &\leq 2 \sum_{i \in \calA} \left( \max_{0 \leq s \leq t} \abs{\sum_{j \in \calC(i)} R_j(s) - \sum_{j \in \calC(i)} R_j'(s)} + \max_{0 \leq s \leq t} \abs{A_i(s) - A_i'(s)} \right) \\
    &\leq 2\sum_{i \in \calA} \left( \sum_{j \in \calT^-(i)} \max_{0 \leq s \leq t} \abs{A_j(s) - A_j'(s)} + \max_{0 \leq s \leq t} \abs{A_i(s) - A_i'(s)} \right) \\
    &\leq 2(d_r + 1) \sum_{i \in \calA} \max_{0 \leq s \leq t} \abs{A_i(s) - A_i'(s)},
\end{align*}
where the third inequality holds by \prettyref{lem:first-term-ub-sum-diff-qj-qjp-one-level}.

It remains to prove \prettyref{lem:first-term-ub-sum-diff-qj-qjp-one-level}.

\begin{proof}[Proof of \prettyref{lem:first-term-ub-sum-diff-qj-qjp-one-level}]
For all $i \in \calA$ and $t \geq 0$,
\begin{align}\label{eqn:eqn-sum-dt-child}
    \sum_{j \in \calC(i)} D_{m(j, i)}(t)
    = \sum_{j \in \calC(i)} (R_j(t) - q_j(t))
    = A_i(t) - \max_{0 \leq s \leq t} \left[ A_i(s) - \sum_{j \in \calC(i)} R_j(s) \right]^+,
\end{align}
where the first equality holds by \eqref{eq:alter-interp-Rit}, and the second equality holds by \prettyref{prop:char-len-msp}.
As a result,
\begin{align}
    \sum_{j \in \calC(i)} R_j(t)
    &= \sum_{j \in \calC(i)} \left( q_j(0) + A_j(t) - \sum_{k \in \calC(j)} D_{m(k, j)}(t) \right) \notag \\
    &= \sum_{j \in \calC(i)} \left( q_j(0) + \max_{0 \leq s \leq t} \left[ A_j(s) - \sum_{k \in \calC(j)} R_k(s) \right]^+ \right), \label{eq:char-sum-rjt}
\end{align}
where the first equality holds by the definition of $R_j(t)$, and the second inequality holds by \eqref{eqn:eqn-sum-dt-child}; the same formula also holds for $\sum_{j \in \calC(i)} R_j'(t)$.
Applying \eqref{eq:char-sum-rjt} and the assumption that $q(0) = q'(0)$, we get 
\begin{align}
    &\max_{0 \leq s \leq t} \abs{\sum_{j \in \calC(i)} R_j(s) - \sum_{j \in \calC(i)} R_j'(s)} \notag \\
    &= \max_{0 \leq s \leq t} \abs{\sum_{j \in \calC(i)} \max_{0 \leq \ell \leq s} \left[ A_j(\ell) - \sum_{k \in \calC(j)} R_k(\ell) \right]^+ - \sum_{j \in \calC(i)} \max_{0 \leq \ell \leq s} \left[ A_j'(\ell) - \sum_{k \in \calC(j)} R_k'(\ell) \right]^+} \notag \\
    &\leq \max_{0 \leq s \leq t} \sum_{j \in \calC(i)} \left( \max_{0 \leq \ell \leq s} \abs{\sum_{k \in \calC(j)} R_k(\ell) - \sum_{k \in \calC(j)} R_k'(\ell)} + \max_{0 \leq \ell \leq s} \abs{A_j(\ell) - A_j'(\ell)} \right) \notag \\
    &\leq \sum_{j \in \calC(i)} \left( \max_{0 \leq s \leq t} \abs{\sum_{k \in \calC(j)} R_k(s) - \sum_{k \in \calC(j)} R_k'(s)} + \max_{0 \leq s \leq t} \abs{A_j(s) - A_j'(s)} \right), \label{eq:max-diff-rj-rjp-one-level}
\end{align}
where the first inequality holds since $|\max_i x_i - \max_i y_i| \leq \max_i |x_i - y_i|$ and $|x^+ - y^+| \leq |x - y|$.

Now, we are ready to inductively establish \eqref{eq:lip-msp-induct}.
Fix $t \geq 0$.
First, \eqref{eq:lip-msp-induct} holds for every leaf node $i$ since $\calC(i) = \calT^-(i) = \emptyset$.
Assume for induction that $i \in \calA$ is an non-leaf node, and \eqref{eq:lip-msp-induct} holds for all nodes in $\calT^-(i)$.
Then,
\begin{align*}
    &\max_{0 \leq s \leq t} \abs{\sum_{j \in \calC(i)} R_j(s) - \sum_{j \in \calC(i)} R_j'(s)} \\
    &\leq \sum_{j \in \calC(i)} \left( \max_{0 \leq s \leq t} \abs{\sum_{k \in \calC(j)} R_k(s) - \sum_{k \in \calC(j)} R_k'(s)} + \max_{0 \leq s \leq t} \abs{A_j(s) - A_j'(s)} \right) \\
    &\leq \sum_{j \in \calT^-(i)} \max_{0 \leq s \leq t} \abs{A_j(s) - A_j'(s)},
\end{align*}
where the first inequality holds by \eqref{eq:max-diff-rj-rjp-one-level}, and the second inequality holds by the inductive hypothesis.
This concludes the inductive step.
\end{proof}

\subsubsection{Proof of \prettyref{prop:fluid-drift-msp}}

    Fix $t \geq 0$.
    For any node $i \in \calA$, define $\beta_i: \reals_{\geq 0}^n \to \reals_{\geq 0}$ such that for every $q \in \reals_{\geq 0}^n$, $\beta_i(q)$ denotes the matching rate from $i$ to its children under $\TTP$ at time $t + 1$ assuming $q(t) = q$, i.e.,
    \begin{align*}
        \beta_i(q)
        \triangleq \sum_{j \in \calC(i)} \Delta D_{m(j, i)}(t + 1).
    \end{align*}
    Notice that $\beta_i(q) = 0$ for every leaf node $i$.
    By the matching rule of $\TTP$, for any node $i \in \calA$, one has the recursion
    \begin{align}\label{eqn:recursion-betai}
        \beta_i(q)
        = \min\left\{\lambda_i,\ \sum_{j\in\calC(i)} \bigl(\lambda_j + q_j - \beta_j(q)\bigr)\right\},
    \end{align}
    reflecting the fact that $i$ can receive at most $\lambda_i$ units of flow from its children, and each child $j$ has at most $\lambda_j + q_j - \beta_j(q)$ units of flow available to send upward.

Recall the definition of $\epsilon_i$'s given in \eqref{eqn:def-eps-i}.
By \prettyref{lem:char-eps-i}, we also have
\begin{align}
\epsilon_i
= \lambda_i - \sum_{j\in\calC(i)} \epsilon_j,
\label{eq:eps-rec-new}
\end{align}
for every $i \in \calA$.
For every $q \in \reals_{\geq 0}^n$, define
\begin{align}
F(q)\triangleq \beta_r(q) + 2\sum_{i\in\mathcal A_0}\beta_i(q). \label{eq:F_Q}
\end{align}
By construction, $F(q)$ is exactly the total one-step decrease of
$\Phi(q)$ under $\TTP$: a match between $r$ and its child
reduces $\Phi(q)$ by $1$, while a match between two over-demanded nodes reduces $\Phi(q)$ by $2$.
Hence, to prove \eqref{eqn:fluid-drift-prop}, since the total arrival to over-demanded queues at time $t + 1$ is $1 - \lambda_r$, it suffices to show
\begin{align}\label{eq:beta-main-new}
F(q)
\geq 1-\lambda_r + \min \left\{\epsilon, \Phi(q) \right\}
\end{align}
for every $q \in \reals_{\geq 0}^n$.

The proof of \eqref{eq:beta-main-new} relies on two properties of $F(\cdot)$ provided in the following lemma.

\begin{lemma}\label{lem:property-F-msp}
    The following properties hold for $F(\cdot)$:
    \begin{enumerate}
        \item \label{item:property1-F-msp} If $\Phi(q) \leq \epsilon$, then $F(q) = 1 - \lambda_r + \Phi(q)$.
        \item \label{item:property2-F-msp} $F(\cdot)$ is non-decreasing, i.e., $F(q') \leq F(q)$ if $q' \leq q$, where the comparison is made entry-wise.
    \end{enumerate}
\end{lemma}

To see that \prettyref{lem:property-F-msp} implies \eqref{eq:beta-main-new}, on one hand, if $\Phi(q) \leq \epsilon$, then $F(q) = 1 - \lambda_r + \Phi(q)$ by Property~\ref{item:property1-F-msp}.
On the other hand, if $\Phi(q) > \epsilon$, consider an arbitrary $q' \in \reals_{\geq 0}^n$ satisfying $\Phi(q') = \epsilon$ and $q' \leq q$, and it holds that
\begin{align*}
    F(q)
    \geq F(q')
    = 1 - \lambda_r + \epsilon,
\end{align*}
where the inequality holds by Property~\ref{item:property2-F-msp}, and the equality holds by Property~\ref{item:property1-F-msp}.

It remains to prove \prettyref{lem:property-F-msp}.

\begin{proof}[Proof of \prettyref{lem:property-F-msp}.]
    We prove the desired properties separately.

    \paragraph{Proof of Property~\ref{item:property1-F-msp}.}
    Fix $q \in \reals_{\geq 0}^n$ with $\Phi(q) \leq \epsilon$.
    We first show that for every $i \in \calA$,
    \begin{align}\label{eqn:identity-betai-msp}
        \beta_i(q)
        = \sum_{j \in \calC(i)} \epsilon_j - \sum_{j \in \calT^-(i)} (-1)^{d(i, j)} q_j,
    \end{align}
    where $d(i, j)$ is the (unweighted) distance between $i$ and $j$.
    By definition, if $i$ is a leaf node, then $\beta_i(q)=0$ and $\calC(i) = \calT^-(i) = \emptyset$, so \eqref{eqn:identity-betai-msp} holds.
    Assume for induction that \eqref{eqn:identity-betai-msp} holds for all children $j\in\calC(i)$ of a non-leaf node $i$.
    Then,
    \begin{align*}
        \beta_i(q)
        &= \min \left\{\lambda_i,\ \sum_{j\in\calC(i)} \left(\lambda_j + q_j - \beta_j(q)\right)\right\} \\
        &= \min \left\{\lambda_i,\ \sum_{j\in\calC(i)} \left(\lambda_j + q_j - \sum_{k \in \calC(j)} \epsilon_k + \sum_{k \in \calT^-(j)} (-1)^{d(j, k)} q_k \right)\right\} \\
        &= \min \left\{\lambda_i,\ \sum_{j\in\calC(i)} \epsilon_j - \sum_{j \in \calT^-(i)} (-1)^{d(i, j)} q_j \right\} \\
        &= \sum_{j\in\calC(i)} \epsilon_j - \sum_{j \in \calT^-(i)} (-1)^{d(i, j)} q_j,
    \end{align*}
    where the first equality holds by \eqref{eqn:recursion-betai}, the second equality holds by the induction hypothesis, the third equality holds by \eqref{eq:eps-rec-new}, and the last equality holds since $\sum_{j \in \calT^-(i)} (-1)^{d(i, j)} q_j \leq \sum_{j \in \calT^-(i)} q_j \leq \Phi(q) \leq \epsilon$ and $\lambda_i - \sum_{j \in \calC(i)} \epsilon_j = \epsilon_i \geq \epsilon$.
    This completes the inductive step. 
    
    Next, we apply \eqref{eqn:identity-betai-msp} to show Property~\ref{item:property1-F-msp}.
    By the definition of $F(\cdot)$,
    \begin{align*}
        F(q)
        &= \beta_r(q) + 2\sum_{i\in\mathcal A_0}\beta_i(q) \\
        &= \sum_{j\in\calC(r)}\epsilon_j - \sum_{j \in \calT^-(r)} (-1)^{d(r, j)} q_j + 2\sum_{i\in\mathcal A_0} \left( \sum_{j\in\calC(i)}\epsilon_j - \sum_{j \in \calT^-(i)} (-1)^{d(i, j)} q_j \right) \\
        &= \left( \sum_{j\in\calC(r)}\epsilon_j + 2\sum_{i\in\mathcal A_0} \sum_{j\in\calC(i)}\epsilon_j \right) - \left( \sum_{j \in \calT^-(r)} (-1)^{d(r, j)} q_j + 2\sum_{i\in\mathcal A_0} \sum_{j \in \calT^-(i)} (-1)^{d(i, j)} q_j \right),
    \end{align*}
    where the second equality holds by \eqref{eqn:identity-betai-msp}.
    We bound the above two terms separately.
    Firstly,
    \begin{align*}
        \sum_{j\in\calC(r)}\epsilon_j + 2\sum_{i\in\mathcal A_0} \sum_{j\in\calC(i)}\epsilon_j
        \overset{(a)}{=} 2 \sum_{j\in\calA_0}\epsilon_j - \sum_{j\in\calC(r)}\epsilon_j
        \overset{(b)}{=} 1-\lambda_r,
    \end{align*}
    where $(a)$ holds because the sum over children of all $i \in \mathcal A_0$ counts every node in $\mathcal A_0 \setminus \mathcal C(r)$ exactly once, so $\sum_{i \in \mathcal A_0}\sum_{j \in \mathcal C(i)} \epsilon_j = \sum_{j \in \mathcal A_0} \epsilon_j - \sum_{j \in \mathcal C(r)} \epsilon_j$; $(b)$ holds because summing \eqref{eq:eps-rec-new} over $\mathcal A_0$ yields $\sum_{i \in \mathcal A_0} \lambda_i = \sum_{i \in \mathcal A_0} \epsilon_i + (\sum_{i \in \mathcal A_0} \epsilon_i - \sum_{j \in \mathcal C(r)} \epsilon_j)$, and we know $\sum_{i \in \mathcal A_0} \lambda_i = 1 - \lambda_r$.
    Moreover,
    \begin{align*}
        &\sum_{j \in \calT^-(r)} (-1)^{d(r, j)} q_j + 2\sum_{i\in\mathcal A_0} \sum_{j \in \calT^-(i)} (-1)^{d(i, j)} q_j \\
        &= \sum_{i \in \calA_0} q_i \left( 2 \sum_{j: i \in \calT^-(j)} (-1)^{d(j, i)} - (-1)^{d(r, i)} \right)
        = -\sum_{i \in \calA_0} q_i
        = -\Phi(q),
    \end{align*}
    where the second equality holds since $2\sum_{i=1}^k (-1)^i - (-1)^k = -1$ for any $k \in \integers_{>0}$.
    Combining the above three displayed equations concludes Property~\ref{item:property1-F-msp}.

    \paragraph{Proof of Property~\ref{item:property2-F-msp}.}
    For all nodes $i \in \mathcal A$ and $q \in \reals_{\geq 0}^n$, define $T_i(q)$ as the total matching rate in the subtree rooted at $i$:
    \begin{align*}
        T_i(q)
        \triangleq \sum_{j \in \mathcal T(i)} \beta_j(q).
    \end{align*}
    We show by induction that $T_i(\cdot)$ is non-decreasing for every node $i \in \calA$, which implies Property~\ref{item:property2-F-msp} since $F(q) = T_r(q) + \sum_{i \in \calC(r)} T_i(q)$.

    For all leaf nodes $i$ and $q \in \reals_{\geq 0}^n$, $T_i(q) = \beta_i(q) = 0$, implying that $T_i(\cdot)$ is non-decreasing.
    Consider a non-leaf node $i$, and assume for induction that $T_j(\cdot)$ is non-decreasing for every $j \in \calT^-(i)$.
    We now show that $T_i(\cdot)$ is also non-decreasing.
    Notice that
    \begin{align*}
        T_i(q)
        = \sum_{j \in \calT(i)} \beta_j(q)
        = \beta_i(q) + \sum_{j \in \calC(i)} T_j(q)
        = \min\left\{\lambda_i, \sum_{j\in\calC(i)} (\lambda_j + q_j - \beta_j(q))\right\} + \sum_{j \in \calC(i)} T_j(Q),
    \end{align*}
    where the last equality holds by \eqref{eqn:recursion-betai}.
    Using the identity $\min\{A, B\} + C = \min\{A+C, B+C\}$:
    \begin{align*}
        T_i(q)
        &= \min\left\{\lambda_i + \sum_{j \in \calC(i)} T_j(q), \ \sum_{j\in\calC(i)} \left(\lambda_j + q_j + (T_j(q) - \beta_j(q))\right)\right\} \\
        &= \min\left\{\lambda_i + \sum_{j \in \calC(i)} T_j(q), \ \sum_{j\in\calC(i)} \left(\lambda_j + q_j + \sum_{k \in \calC(j)} T_k(q)\right)\right\},
    \end{align*}
    and the monotonicity of $T_i(\cdot)$ follows immediately from the monotonicity of $T_j(\cdot)$ for $j \in \calC(i)$ and $T_k(\cdot)$ for $k \in \calC(j)$.
\end{proof}

\subsubsection{Proof of \prettyref{prop:geometric-lyapunov}}

To establish that $V(Q)$ is a geometric Lyapunov function for $(Q(t))_{t \geq 0}$, we utilize \prettyref{prop:modified-thm6-gz06}, which requires verifying two conditions: (1) $\Phi$ has a negative truncated drift, and (2) the second-order exponential moment $L_3$ is bounded.

\paragraph{Step 1: negative truncated drift.}
We first show that for any initial state $x \in \reals_{\geq 0}^n$ with $\Phi(x) > \epsilon K$:
\begin{align}\label{eq:truncated-drift-target}
    \Expect_{x}\left[\max\left\{- \epsilon K, \Phi(Q(K)) - \Phi(x) \right\}\right] \leq -5(d_r+1)^2n \epsilon^{-1}\,.
\end{align}

Let $(q(t))_{t \geq 0}$ be the deterministic queue-length trajectory induced by $\TTP$ under the fluid arrival starting from $q(0) = x$. 
For any $x \in \reals_{\geq 0}^n$, we bound the positive part of the drift:
\begin{align}
    \left[\Phi(Q(K)) - \Phi(x) + \epsilon K \right]^+
    &\leq \left[\Phi(Q(K)) - \Phi(q(K))\right]^+ + \left[ \Phi(q(K)) -  \Phi(x)+ \epsilon K \right]^+ \nonumber \\
    &\overset{(a)}{\le} \left|\Phi(Q(K)) - \Phi(q(K))\right| + 1_{\{\Phi(x) \le \epsilon K \}} \epsilon K \nonumber \\
    &\overset{(b)}{\le} 2(d_r+1) \sum_{i\in\calA} \sup_{0 \leq t \leq K} |A_i(t) - \lambda_i t| + 1_{\{\Phi(x) \le \epsilon K \}} \epsilon K \,, \label{eq:drift-diff-bound}
\end{align}
where $(a)$ holds because
\begin{align}
     \Phi(q(K)) -  \Phi(x)+ \epsilon K 
    &\leq [\Phi(x) - \epsilon K]^+ - \Phi(x) + \epsilon K \le 1_{\{\Phi(x) \le \epsilon K \}} \epsilon K  \,, \label{eq:truncated-drift-cases}
\end{align}
in view of \prettyref{prop:fluid-drift-msp}; and $(b)$ holds by \prettyref{prop:lipschitz-msp}. 

Hence, for any initial state $x$ with $\Phi(x) > \epsilon K$, the indicator term is zero, and we get 
\begin{align*}
    \Expect_{x}\left[\max\left\{- \epsilon K, \Phi(Q(K)) - \Phi(x) \right\}\right]
    &= - \epsilon K + \Expect_{x} \left[\left[ \Phi(Q(K)) - \Phi(x) + \epsilon K \right]^+ \right]\\
    & \overset{(a)}{\le} - \epsilon K +  2(d_r+1) \sum_{i\in\calA} \sup_{0 \leq t \leq K} |A_i(t) - \lambda_i t| \\
    & \overset{(b)}{\le} -  \epsilon K  + 2(d_r+1) \cdot \left( 2\sqrt{n} \sqrt{K}\right) \\
    & \overset{(c)}{=} \sqrt{t_0}\epsilon^{-1} \left(-\sqrt{t_0} + 4(d_r+1)\sqrt{n} \right) \\
    & \overset{(d)}{=} -5(d_r+1)^2 n \epsilon^{-1} \,,
\end{align*}
where $(a)$ holds by \prettyref{eq:drift-diff-bound}; $(b)$ holds by \prettyref{eq:concentration-ineq-1} in \prettyref{lem:exp-moment-partialsums}; $(c)$ holds by $K = t_0 \epsilon^{-2}$; $(d)$ holds by substituting $t_0 = 25(d_r+1)^2 n$. 

\paragraph{Step 2: bounding the second-order exponential moment.}
Next, we verify the condition from \prettyref{prop:modified-thm6-gz06} involving the second-order exponential moment $L_3$. Specifically, we show that for small enough $\theta$:
\begin{align}\label{eq:L3-target}
    (\epsilon \theta) \cdot L_3(\epsilon K, \epsilon\theta, K) \leq 5(d_r+1)^2 n \epsilon^{-1} \,.
\end{align}
Recall the definition of $L_3$ in \eqref{eqn:def-sec-order-exp-moment}
and $t_0 = 25(d_r + 1)^2 n$.
It holds that
\begin{align}
L_3(\epsilon K,  \epsilon\theta, K) 
 & = \sup_{x \in \reals_{\geq 0}^n}\Expect_{x}\left[\left(\max\{-\epsilon K, \Phi(Q(K)) - \Phi(x)\}\right)^2 \cdot
\exp\left( \epsilon\theta(\Phi(Q(K)) - \Phi(x))^+\right) \right] \nonumber \\
  & \overset{(a)}{\le} \Expect\left[ \left(2 \epsilon K + 2(d_r+1)\sum_{i\in\calA}  \sup_{0\leq t \leq K} | A_i(t) - \lambda_i t|\right)^2 
\exp\left(  \epsilon\theta \cdot 2(d_r+1)  \sum_{i\in\calA}\sup_{0\leq t \leq K} |A_i(t) - \lambda_i t|
    \right) \right] \nonumber\\
    & \overset{(b)}{\le}  8 \epsilon^2 K^2
   \Expect\left[ \exp\left(  \epsilon\theta \cdot 2(d_r+1)  \sum_{i\in\calA}\sup_{0\leq t \leq K} |A_i(t) - \lambda_i t|
    \right) \right]\nonumber\\
    &~~~~ +
    8(d_r+1)^2 \Expect \left[ \left(
      \sum_{i\in\calA}\sup_{0\leq t \leq K} |A_i(t) - \lambda_i t| \right)^2
    \exp\left(  \epsilon\theta \cdot 2(d_r+1)  \sum_{i\in\calA}\sup_{0\leq t \leq K} |A_i(t) - \lambda_i t|
    \right) \right] \nonumber\\
    & \stackrel{(c)}{\le} 8 \epsilon^2 K^2\cdot C_2\left( 2 \epsilon \theta \sqrt{K} (d_r+1)\right)
+
8(d_r+1)^2 K 
\cdot C_3\left(2\epsilon \theta \sqrt{K} (d_r+1)\right)\,, \label{eq:L_3_upper}
\end{align}
where $(a)$ holds because for any state $x$,
\begin{align*}
    \left| \max\{-\epsilon K, \Phi(Q(K)) - \Phi(x)\} \right| 
    &=\left|-\epsilon K + \left[ \Phi(Q(K)) - \Phi(x) + \epsilon K\right]^+ \right| \\
    &\leq \epsilon K + \left[ \Phi(Q(K)) - \Phi(x) + \epsilon K\right]^+\\
    &\leq \epsilon K + \left( \epsilon K + 2(d_r+1)\sum_{i\in \calA} \sup_{0\leq t \leq K} |A_i(t) - \lambda_i t| \right)
\end{align*}
by \prettyref{eq:drift-diff-bound};
$(b)$ holds because $(a+b)^2 \le 2a^2+2b^2$; $(c)$ holds by \prettyref{lem:exp-moment-partialsums} with the scaling $\sqrt{K}$.

In the following, we aim to find the range of $\theta$ that satisfies
\begin{align*}
    L_3\!\left( t_0 \epsilon^{-1}, \epsilon\theta, t_0 \epsilon^{-2}\right) \leq \frac{5(d_r+1)^2 n\epsilon^{-1}}{ \epsilon \theta}\, .
\end{align*}
Rearranging, it suffices to show
\begin{align}
     \theta \;\leq\; \frac{5(d_r+1)^2 n \epsilon^{-2}}{L_3\!\left( t_0 \epsilon^{-1}, \epsilon\theta, t_0 \epsilon^{-2}\right)} \triangleq \Delta. \label{eq:ctheta-ineq}
\end{align}
Recall the upper bound for $L_3(\epsilon K, \epsilon \theta, t_0 \epsilon^{-2})$ derived in \eqref{eq:L_3_upper} with coefficients substituted ($K=t_0 \epsilon^{-2}$, so $\epsilon K = t_0 \epsilon^{-1}$ and $\epsilon \sqrt{K} = \sqrt{t_0}$):
\[
L_3(\epsilon K, \epsilon \theta, t_0 \epsilon^{-2}) \leq 8t_0^2 \epsilon^{-2} C_2(\tilde{\theta}) + 8(d_r+1)^2 t_0 \epsilon^{-2} C_3(\tilde{\theta}) \,,
\]
where $\tilde{\theta} \triangleq 2\theta\sqrt{t_0} (d_r+1)$.
Substituting this into $\Delta$:
\begin{align}
\Delta
&= \frac{5(d_r+1)^2 n\epsilon^{-2}}{\epsilon^{-2} \left[8 t_0^2 C_2(\tilde{\theta}) + 8(d_r+1)^2 t_0 C_3(\tilde{\theta})\right]} = \frac{5(d_r+1)^2 n}{8 t_0^2 C_2(\tilde{\theta}) + 8(d_r+1)^2 t_0 C_3(\tilde{\theta})} \,. \label{eq:RHS-simplified}
\end{align}
Substituting $t_0 = 25(d_r+1)^2 n$:
\begin{itemize}
    \item $8 t_0^2 = 8(625(d_r+1)^4 n^2) = 5000(d_r+1)^4 n^2$,
    \item $8(d_r+1)^2 t_0 = 8(d_r+1)^2 (25(d_r+1)^2 n) = 200(d_r+1)^4 n$.
\end{itemize}
We assume $\theta = \frac{\kappa}{(d_r+1)^2 n}$ for some $\kappa > 0$ determined later.
Then $\tilde{\theta} = \frac{10\kappa}{\sqrt{n}}$.
Substituting $C_2(\tilde{\theta}) = e^{20\kappa + O(1/n)}$ and $C_3(\tilde{\theta}) = 4ne^{20\kappa + O(1/n)}$, the inequality $\theta \le \Delta$ becomes:
\begin{align*}
    \frac{\kappa}{(d_r+1)^2 n} \le \frac{5(d_r+1)^2 n}{(d_r+1)^4 n \left[ 5000 n e^{20\kappa + O(1/n)} + 200 (4n e^{20\kappa + O(1/n)}) \right]} \,.
\end{align*}
Simplifying:
\begin{align*}
    \kappa \le \frac{5}{5000 e^{20\kappa + O(1/n)} + 800 e^{20\kappa + O(1/n)}} = \frac{1}{1160 e^{20\kappa + O(1/n)}} \,.
\end{align*}
Since $f(\kappa) = 1160 \kappa e^{20\kappa}$ is increasing, there exist constants $n_0 > 0$ and $\kappa_0 > 0$ such that for all $n \geq n_0$ and $\theta \le \frac{\kappa_0}{(d_r+1)^2 n}$, the condition holds.

\paragraph{Conclusion.}
Conditions \eqref{eq:truncated-drift-target} and \eqref{eq:L3-target} satisfy the requirements of \prettyref{prop:modified-thm6-gz06} with drift magnitude $\gamma = 5(d_r+1)^2 n \epsilon^{-1}$. Thus, $V(Q)$ is a geometric Lyapunov function for $(Q(t))_{t \geq 0}$ with drift size parameter $1 - \frac{\epsilon \theta}{2} \gamma = 1 - \frac{5}{2}(d_r+1)^2 n \theta$.

\subsubsection{Proof of \prettyref{thm:msp}}
\label{sec:proof-thm-msp}
    By \prettyref{prop:geometric-lyapunov}, under the true arrival, $V(Q) = \exp(\epsilon \theta \cdot \Phi(Q))$ is a geometric Lyapunov function with  geometric drift size parameter: $1 - \frac{5}{2}(d_r+1)^2 n \theta$, drift time parameter: $K = t_0 \epsilon^{-2}$, and exception parameter $\exp(\theta t_0)$. By \prettyref{prop:thm5-gz06}, we get
    \begin{align}\label{eq:mgf-bound}
        \mathbb{E}_{\pi}\left[ e^{\epsilon \theta \cdot \Phi(Q)} \right]
        \leq \frac{2 e^{\theta t_0} \phi(K)}{5(d_r+1)^2 n\theta} \, ,
    \end{align}
    where $\phi(K) \triangleq \sup_{x \in \reals_{\geq 0}^n} \mathbb{E}_{x}\left[ e^{\epsilon\theta(\Phi(Q(K)) - \Phi(x))} \right]$ is the maximum expected overshoot.
    We upper bound $\phi(K)$ by
    \begin{align*}
        \phi(K)
        &= \sup_{x \in \reals_{\geq 0}^n} \mathbb{E}_{x}\left[ e^{\epsilon\theta(\Phi(Q(K)) - \Phi(x) + \epsilon K)} \right] \cdot e^{-\epsilon^2 \theta K}
        \leq \sup_{x \in \reals_{\geq 0}^n} \mathbb{E}_{x}\left[ e^{\epsilon\theta[\Phi(Q(K)) - \Phi(x) + \epsilon K]^+} \right] \cdot e^{-\epsilon^2 \theta K} \\
        &\leq \mathbb{E}\left[ \exp\left(2 \epsilon \theta (d_r+1) \sum_{i\in\mathcal{A}}\sup_{0 \leq t \leq K} |A_i(t)-\lambda_i t| \right) \right] 
        \leq C_2( \tilde{\theta} )\,,
    \end{align*}
    where the second inequality holds by \prettyref{eq:drift-diff-bound}, and $\tilde{\theta} \triangleq 2(d_r+1) \sqrt{t_0} \theta $.
    By Jensen's inequality, 
    \begin{align}
    \mathbb{E}_{\pi}[\Phi(Q)] 
    & \leq \frac{1}{\epsilon \theta} \log \mathbb{E}_{\pi}[e^{\epsilon \theta \cdot \Phi(Q)}] \notag \\
    & \leq \frac{1}{\epsilon \theta} \left[ \log C_2( \tilde{\theta}) + \log \left(\frac{2 e^{\theta t_0}}{5(d_r+1)^2 n \theta} \right) \right] \notag \\
        & \leq \left( \frac{n(d_r+1)^2}{\epsilon \kappa_0} \right) \underbrace{\left[ \left( 20\kappa_0 + \frac{50\kappa_0^2}{n} \right) + \left( 25\kappa_0 + \log\frac{2}{5\kappa_0} \right) \right]}_{O(1)}  = O\left(\frac{n d_r^2}{\epsilon }\right)\,, \label{eqn:ttp-queue-len-sta}
    \end{align}
    where the second inequality holds by \prettyref{eq:mgf-bound}, and the last inequality holds because by $t_0 = 25(d_r+1)^2 n$ and $\theta = \kappa_0 (d_r+1)^{-2} n^{-1}$ from \prettyref{prop:geometric-lyapunov}, we get 
    \[
    \frac{1}{\epsilon \theta} = \frac{1}{\epsilon} \cdot \frac{(d_r+1)^2 n}{\kappa_0} = \frac{1}{\kappa_0} \cdot \frac{n(d_r+1)^2}{\epsilon} \,,
    \]
      $
    \theta t_0 = \left( \frac{\kappa_0}{(d_r+1)^2 n} \right) \left( 25(d_r+1)^2 n \right) = 25 \kappa_0 $, 
    and 
    $
    \tilde{\theta} = 2(d_r+1) \left( 5(d_r+1)\sqrt{n} \right) \left( \frac{\kappa_0}{(d_r+1)^2 n} \right) = \frac{10 \kappa_0}{\sqrt{n}} \,, 
    $ then
    \[
    \log C_2(\tilde{\theta}) = 2\sqrt{n}\left( \frac{10\kappa_0}{\sqrt{n}} \right) + \frac{1}{2}\left( \frac{10\kappa_0}{\sqrt{n}} \right)^2 = 20\kappa_0 + \frac{50\kappa_0^2}{n} \,,
    \] 
    and 
    \[
    \log \left(\frac{2 e^{\theta t_0}}{5(d_r+1)^2 n \theta} \right) = \log \left( \frac{2 e^{25\kappa_0}}{5\kappa_0} \right) = 25\kappa_0 + \log\left(\frac{2}{5\kappa_0}\right) \,.
    \]

    Finally, $\TTP$ is consistent by \prettyref{prop:consis-priori-sub-model}, and hence applying \prettyref{lmm:diff-queue-sta} yields
    \begin{align*}
        \Expect \left[ \sum_{i \in \calA_0} Q(t) \right]
        \leq 2 \Expect_{\pi} \left[ \sum_{i \in \calA_0} Q(0) \right]
        = 2\Expect_\pi [\Phi(Q)]
        \leq O \left( \frac{n d_r^2}{\epsilon} \right)
    \end{align*}
    for every $t \geq 0$, where the last inequality holds by \eqref{eqn:ttp-queue-len-sta}.
    Combining the above displayed equation with \prettyref{lmm:opt-test} concludes the proof.

\end{document}